\definecolor{codegreen}{rgb}{0,0.6,0.6}
\definecolor{codegray}{rgb}{0.5,0.5,0.5}
\definecolor{codepurple}{rgb}{0.58,0,0.82}
\definecolor{backcolour}{rgb}{0.95,0.95,0.92}
\tiny\color{codegray},
\newtheorem*{rep@theorem}{\rep@title}
\newcommand{\newreptheorem}[2]{%
\newenvironment{rep#1}[1]{%
 \def\rep@title{#2 \ref{##1}}%
 \begin{rep@theorem}}%
 {\end{rep@theorem}}}
\newif\ifcomments
\newcommand{\ucomment}[1]{\textcolor{red}{#1}}
\newcommand{\adcomment}[1]{\textcolor{orange}{AD: #1}}
\newcommand{\mpcomment}[1]{\textcolor{blue}{MP: #1}}
\newcommand{\mvcomment}[1]{\textcolor{green}{MV: #1}}
\newcommand{\pkcomment}[1]{\textcolor{CadetBlue}{PK: #1}}
\newcommand{\ucomment}[1]{}
\newcommand{\adcomment}[1]{}
\newcommand{\mpcomment}[1]{}
\newcommand{\mvcomment}[1]{}
\newcommand{\pkcomment}[1]{}
\definecolor{drawioGreen}{HTML}{82B366}
\definecolor{drawioBlue}{HTML}{6C8EBF}
\definecolor{drawioRed}{HTML}{B85450}
\newcommand{\figlabel}[1]{\label{fig:#1}}
\newcommand{\figref}[1]{Fig.~\ref{fig:#1}}
\newcommand{\seclabel}[1]{\label{sec:#1}}
\newcommand{\secref}[1]{Section~\ref{sec:#1}}
\newcommand{\subseclabel}[1]{\label{subsec:#1}}
\newcommand{\subsecref}[1]{Section~\ref{subsec:#1}}
\newcommand{\exlabel}[1]{\label{ex:#1}}
\newcommand{\exref}[1]{Example~\ref{ex:#1}}
\newcommand{\thmlabel}[1]{\label{thm:#1}}
\newcommand{\thmref}[1]{Theorem~\ref{thm:#1}}
\newcommand{\proplabel}[1]{\label{prop:#1}}
\newcommand{\propref}[1]{Proposition~\ref{prop:#1}}
\newcommand{\lemlabel}[1]{\label{lem:#1}}
\newcommand{\lemref}[1]{Lemma~\ref{lem:#1}}
\newcommand{\applabel}[1]{\label{app:#1}}
\newcommand{\appref}[1]{Appendix~\ref{app:#1}}
\newcommand{\nats}{\mathbb{N}}
\newcommand{\Aa}{\mathcal{A}}
\newcommand{\Bb}{\mathcal{B}}
\newcommand{\Cc}{\mathcal{C}}
\newcommand{\Ff}{\mathcal{F}}
\newcommand{\Gg}{\mathcal{G}}
\newcommand{\Ii}{\mathcal{I}}
\newcommand{\Mm}{\mathcal{M}}
\newcommand{\Pp}{\mathcal{P}}
\newcommand{\Rr}{\mathcal{R}}
\newcommand{\Ss}{\mathcal{S}}
\renewcommand{\vec}[1]{\overline{#1}}
\newcommand{\set}[1]{\{#1\}}
\DeclarePairedDelimiterX\bigset[1]\lbrace\rbrace{\def\given{\;\delimsize\vert\;}#1}
\newcommand{\setpred}[2]{\{#1 \,\mid\, #2\}}
\newcommand{\angular}[1]{\langle #1 \rangle}
\newcommand{\tuple}{\angular}
\newcommand{\delequal}{\overset{\Delta}{=}}
\renewcommand{\emptyset}{\varnothing}
\newcommand{\xdownarrow}[1]{%
  {\left\downarrow\vbox to #1{}\right.\kern-\nulldelimiterspace}
}
\newcommand{\produces}{\to}
\newcommand{\stmt}{\angular{stmt}}
\newcommand{\cond}{\angular{cond}}
\newcommand{\code}[1]{\texttt{#1}}
\newcommand{\codekey}[1]{\textbf{#1}}
\newcommand{\cd}[1]{\code{#1}}
\newcommand{\cdm}[1]{\mathtt{#1}}
\newcommand{\pskip}{\codekey{skip}}
\newcommand{\passume}{\codekey{assume}}
\newcommand{\passert}{\codekey{assert}}
\newcommand{\pif}{\codekey{if}}
\newcommand{\pthen}{\codekey{then}}
\newcommand{\pelse}{\codekey{else}}
\newcommand{\pwhile}{\codekey{while}}
\newcommand{\passign}{:=}
\newcommand{\poutputs}{\Rightarrow}
\newcommand{\pfalse}{\ensuremath{\bot}}
\newcommand{\pite}{\codekey{ite}}
\newcommand{\pseq}{\codekey{seq}}
\newcommand{\proot}{\codekey{root}}
\newcommand{\pcheck}{\codekey{check}}
\newcommand{\tsexec}{\textsf{Exec}}
\newcommand{\tscexec}{\textsf{CompExec}}
\newcommand{\tsfalse}{\bot}
\newcommand{\prog}{\textsf{Prog}}
\newcommand{\trlbl}{\textsf{Tree}}
\newcommand{\exec}{\textsf{Exec}}
\newcommand{\pexec}{\textsf{PExec}}
\newcommand{\comp}{\textsf{T}}
\newcommand{\Terms}{\textsf{Terms}}
\newcommand{\init}[1]{\widehat{#1}}
\newcommand{\dblqt}[1]{\text{``}#1\text{''}}
\newcommand{\fixperm}[1]{\angular{#1}}
\newcommand{\out}[1]{\textbf{\textsf{o}}_{#1}}
\newcommand{\pgm}{\angular{pgm}}
\newcommand{\pcall}{\codekey{call}}
\newcommand{\preturn}{\codekey{return}}
\newcommand{\goesto}{\rightarrow}
\newcommand{\reject}{q_\mathsf{reject}}
\newcommand{\exptime}{\mathsf{EXPTIME}}
\newcommand{\twoexptime}{2\mathsf{EXPTIME}}
\newcommand{\aexpspace}{\mathsf{AEXPSPACE}}
\newcommand{\sketch}{\textsc{Sketch}}
\newcommand{\sygus}{\textsc{SyGuS}}
\newcommand{\gram}{\Gg}
\newcommand{\cohcor}{\textsf{cc}}
\newcommand{\cohcortopdown}{\textsf{cc-td}}
\newcommand{\cohcorexec}{\textsf{exec}}
\newcommand*{\tree}{\tikz{
    \node[shape=circle, fill=black, inner sep=0pt] (a) at (0,0) {} ;
    \node[shape=circle, fill=black, inner sep=0pt] (b) at (-0.05,-0.07) {} ;
    \node[shape=circle, fill=black, inner sep=0pt] (c) at (0.05,-0.07) {} ;
    \node[shape=circle, fill=black, inner sep=0pt] (d) at (-0.08,-0.14) {} ;
    \node[shape=circle, fill=black, inner sep=0pt] (e) at (-0.02,-0.14) {} ;
    \draw[-, line width=0.05mm] (a) to (b);
    \draw[-, line width=0.05mm] (a) to (c);
    \draw[-, line width=0.05mm] (b) to (d);
    \draw[-, line width=0.05mm] (b) to (e);
}}
\newcommand*{\word}{\tikz{
    \node[shape=circle, fill=black, inner sep=0pt] (a) at (0,0) {} ;
    \node[shape=circle, fill=black, inner sep=0pt] (b) at (0.08, 0) {} ;
    \node[shape=circle, fill=black, inner sep=0pt] (c) at (0.16,0) {} ;
    \node[shape=circle, fill=white, inner sep=0pt] (d) at (0.12,-0.1) {} ; 
    \draw[-, line width=0.05mm] (a) to (b);
    \draw[-, line width=0.05mm] (b) to (c);
}}
\newcommand{\treeaut}[1]{\mathcal{#1}^{\tree}}
\newcommand{\wordaut}[1]{\mathcal{#1}^{\word}}
\newcommand{\lab}{\gamma}
\newcommand{\ldir}{L}
\newcommand{\rdir}{R}
\newcommand{\udir}{U}
\newcommand{\ddir}{D}
\newcommand{\uldir}{U_L}
\newcommand{\urdir}{U_R}
\newcommand{\schema}{\Ss}
\newcommand{\schemant}{PN}
\newcommand{\schemapv}{PV}
\newcommand{\slp}{\textsf{SLP}}
\newcommand{\superterm}[3]{#1 \preccurlyeq_{#2} #3}
\newcommand{\strategy}{\sigma}
\newcommand{\str}{\strategy}
\newcommand{\strategies}{\mathfrak{S}}
\newcommand{\stannot}{\mathfrak{a}}
\newcommand{\stconannot}{\overline{\stannot}}
\newcommand{\bag}{\mathit{Bag}}
\newcommand{\labts}{\lambda}
\newcommand*\circled[1]{\tikz[baseline=(char.base)]{
            \node[shape=circle,draw,fill=blue!20!white,inner sep=0.5pt] (char) {\tiny \texttt{#1} };}}
\newcommand{\holesymb}{\texttt{??}}
\newcommand{\holebr}[1]{\angular{\angular{#1}}}
\newcommand{\holetext}[1]{\holesymb^{#1}}
\newcommand{\hole}[1]{\holebr{\,\holetext{#1}\,}}
\newcommand{\holepred}[2]{\holebr{\,\holetext{#1}\,|\,\text{#2}\,}}
\renewcommand*\boxed[2]{\tikz[baseline=(char.base)]{
            \node[shape=rectangle, draw, dashed, inner sep=2pt, minimum height=10pt, minimum width = #1pt, align=left] (char) { #2 };}
            }
  \newcommand{\complexitygrammar}{linear}
  \newcommand{\ordernotation}{O(|\treeaut{\Aa}|) = O(2^{2^{\text{poly}(|V|)}}\cdot|\gram|)}
  \newcommand{\complexitygrammar}{polynomial}
  \newcommand{\ordernotation}{O(\text{poly}(|\treeaut{\Aa}|)) = O(2^{2^{\text{poly}(|V|)}}\cdot|\gram|)}
\newcounter{exec}
\newcommand{\ctr}{\arabic{exec}}
\newcommand{\incctr}{\addtocounter{exec}{1}}
\begin{document}
\title{Decidable Synthesis of Programs with Uninterpreted Functions
\thanks{
Paul Krogmeier and Mahesh Viswanathan are partially supported by NSF CCF 1901069.
Umang Mathur is partially supported by a Google PhD Fellowship.
}
}
%
%
\author{
Paul Krogmeier \orcidID{0000-0002-6710-9516}
\and
Umang Mathur \orcidID{0000-0002-7610-0660}
\and
Adithya Murali \orcidID{0000-0002-6311-1467}
\and
P. Madhusudan
\and
Mahesh Viswanathan
}
\authorrunning{P. Krogmeier et al.}
%
\institute{University of Illinois at Urbana-Champaign, USA\\
\email{\{paulmk2,umathur3,adithya5,madhu,vmahesh\}@illinois.edu}}

\maketitle              

\begin{abstract}





We identify a decidable synthesis problem for a class of programs of
unbounded size with conditionals and iteration that work over infinite
data domains. The programs in our class use uninterpreted functions
and relations, and abide by a restriction called coherence that was
recently identified to yield decidable verification.  We formulate a
powerful grammar-restricted (syntax-guided) synthesis problem for
coherent uninterpreted programs, and we show the problem to be
decidable, identify its precise complexity, and also study several
variants of the problem.

\end{abstract}


\section{Introduction}
\seclabel{intro}

Program synthesis is a thriving area of research that addresses the
problem of automatically constructing a program that meets a
user-given
specification~\cite{ProgramSynthGPS,ProgramSynthInductiveProg,sygus}.
Synthesis specifications can be expressed in various ways: as
input-output examples~\cite{flashfill11,flashfill12}, temporal logic
specifications for reactive programs~\cite{PR89}, logical
specifications~\cite{sygus,SearchBasedProgramSynthesisCACM}, etc.
Many targets for program synthesis exist, ranging from transition
systems~\cite{PR89,KMTV00}, logical expressions~\cite{sygus},
imperative programs~\cite{sketching}, distributed transition
systems/programs~\cite{PR90,madhudistsynth,igoranca14}, filling holes
in programs~\cite{sketching}, or repairs of
programs~\cite{singhrepair}.

A classical stream of program synthesis research is one that emerged
from a problem proposed by Church~\cite{church60} in 1960 for Boolean
circuits. Seminal results by B\"uchi and
Landweber~\cite{BuchiLandweber69} and Rabin~\cite{Rabin72} led to a
mature understanding of the problem, including connections to infinite
games played on finite graphs and automata over infinite trees
(see~\cite{automata-logics-games,kpvPneuli}).  Tractable synthesis for temporal
logics like LTL, CTL, and their fragments was investigated and several
applications for synthesizing hardware circuits
emerged~\cite{BJPPS12,BGJPPW07}.

In recent years, the field has taken a different turn, tackling
synthesis of programs that work over infinite domains such as
strings~\cite{flashfill11,flashfill12},
integers~\cite{sketching,sygus}, and heaps~\cite{QS17}.  Typical
solutions derived in this line of research involve (a) bounding the
class of programs to a finite set (perhaps iteratively increasing the
class) and (b) searching the space of programs using techniques like
symmetry-reduced enumeration, SAT solvers, or even random
walks~\cite{SearchBasedProgramSynthesisCACM,sygus}, typically guided
by counterexamples (CEGIS)~\cite{sketching,LMN16,JhaSeshia17}.  Note
that iteratively searching larger classes of programs allows synthesis
engines to find a program if one exists, but it does not allow one to
conclude that there is no program that satisfies the specification.
Consequently, in this stream of research, decidability results are uncommon (see~\secref{related} for some exceptions in certain heavily
restricted cases).

\emph{In this paper we present, to the best of our knowledge, the
  first decidability results for program synthesis over a natural
  class of programs with iteration/recursion, having arbitrary sizes,
  and which work on infinite data domains. In particular, we show
  decidable synthesis of a subclass of programs that use uninterpreted
  functions and relations.}

Our primary contribution is a decidability result for realizability
and synthesis of a restricted class of imperative \emph{uninterpreted}
programs.  Uninterpreted programs work over infinite data models that
give arbitrary meanings to their functions and relations. Such
programs satisfy their assertions if they hold along all executions
for \emph{every} model that interprets the functions and relations.
The theory of uninterpreted functions and relations is well studied---classically, in 1929, by G\"odel, where completeness results were
shown~\cite{godelcompleteness} and, more recently, its decidable
quantifier-free fragment has been exploited in SMT solvers in
combination with other theories~\cite{calcofcomputation}.  In recent
work~\cite{MMV19}, a subclass of uninterpreted programs, called
\emph{coherent} programs, was identified and shown to have a decidable
verification problem.  Note that in this verification problem there
are no user-given loop invariants; the verification algorithm finds
inductive invariants and proves them automatically in order to prove
program correctness.

In this paper, we consider the synthesis problem for coherent
uninterpreted programs.  The user gives a \emph{grammar} $\gram$ that
generates well-formed programs in our programming language.  The
grammar can force programs to have $\passert$ statements at various
points which collectively act as the specification. The program
synthesis problem is then to construct a coherent program, if one
exists, conforming to the grammar $\gram$ that satisfies all
assertions in all executions when running on \emph{any} data model
that gives meaning to function and relation symbols.

Our primary result is that the realizability problem (checking the
existence of a program conforming to the grammar and satisfying its
assertions) is decidable for coherent uninterpreted programs.  We
prove that the problem is $\twoexptime$-complete.
Further, whenever a
correct coherent program that conforms to the grammar exists, we can
synthesize one.  We also show that the realizability/synthesis problem
is undecidable if the coherence restriction is dropped. In fact we
show a stronger result that the problem is undecidable even for
synthesis of \emph{straight-line} programs (without conditionals and
iteration)!

Coherence of programs is a technical restriction that was introduced
in~\cite{MMV19}. It consists of two properties, both of which were
individually proven to be essential for ensuring that program
verification is decidable. Intuitively, the restriction demands that
functions are computed on any tuple of terms only once and that
assumptions of equality come early in the executions.
In more recent work~\cite{mathur2019forest}, the authors extend this
decidability result to handle map updates, and applied it to memory
safety verification for a class of heap-manipulating programs on
forest data-structures, demonstrating that the restriction of
coherence is met in practice by certain natural and useful classes of
programs.

Note that automatic synthesis of correct programs over infinite
domains demands that we, at the very least, can automatically verify
the synthesized program to be correct. The class of coherent
uninterpreted programs identified in the work of~\cite{MMV19} is the
only natural class of programs we are aware of that has recursion and
conditionals, works over infinite domains, and admits decidable
verification.  Consequently, this class is a natural target for
proving a decidable synthesis result.

The problem of synthesizing a program from a grammar with assertions
is a powerful formulation of program synthesis.  In particular, the
grammar can be used to restrict the space of programs in various ways.
For example, we can restrict the space syntactically by disallowing
while loops.  Or, for a fixed $n$, by using a set of Boolean variables
linear in $n$ and requiring a loop body to strictly increment a
counter encoded using these variables, we can demand that loops
terminate in a linear/polynomial/exponential number of iterations.  We
can also implement loops that do not always terminate, but terminate
only when the data model satisfies a particular property, e.g.,
programs that terminate only on finite list segments, by using a
skeleton of the form: $\pwhile~(\cd{x} \neq
\cd{y})$\{$~ ...~ ;~ $\cd{x}~\passign~\cd{next(x)}$\}$.
Grammar-restricted program synthesis can express the synthesis of
programs with holes, used in systems like {\sketch}~\cite{sketch},
where the problem is to fill holes using programs/expressions
conforming to a particular grammar so that the assertions in the
program hold.
Synthesizing programs or expressions using restricted grammars is also
the cornerstone of the intensively studied {\sygus} (syntax-guided
synthesis) format~\cite{sygus,syguswebsite}~\footnote{Note, however,
  that both {\sketch} and {\sygus} problems are defined using
  functions and relations that are interpreted using standard theories
  like arithmetic, etc., and hence of course do not have decidable
  synthesis.}.

The proof of our decidability result relies on tree automata, a
callback to classical theoretical approaches to synthesis.  The key
idea is to represent programs as trees and build automata that accept
trees corresponding to correct programs.  The central construction is
to build a two-way alternating tree automaton that accepts \emph{all}
program trees of coherent programs that satisfy their assertions.
Given a grammar $\gram$ of programs (which has to satisfy certain
natural conditions), we show that there is a regular set of program
trees for the language of allowed programs $L(\gram)$.  Intersecting
the automata for these two regular tree languages and checking for
emptiness establishes the upper bound.  Our constructions crucially
use the automaton for verifying coherent uninterpreted programs
in~\cite{MMV19} and adapt ideas from~\cite{csl11} for building two-way
automata over program trees.  
Our final decision procedure is doubly-exponential in the number of
program variables and \emph{\complexitygrammar} in the size of the
grammar. We also prove a matching lower bound by reduction from the
acceptance problem for alternating exponential-space Turing machines.
%
%
The reduction is non-trivial in that programs (which correspond to
runs in the Turing machine) must simulate sequences of configurations,
each of which is of exponential size, by using only polynomially-many
variables.

\subsubsection*{Recursive Programs, Transition Systems, and Boolean programs:}~\\
We study three related synthesis problems. First, we show that our
results extend to synthesis of call-by-value \emph{recursive}
uninterpreted programs (with a fixed number of functions and fixed
number of local/global variables).  This problem is also
$\twoexptime$-complete but is more complex, as even single executions
simulated on the program tree must be split into separate copies, with
one copy executing the summary of a function call and the other
proceeding under the assumption that the call has returned in a
summarized state.

We next examine a synthesis problem for \emph{transition
  systems}. Transition systems are similar to programs in that they
execute similar kinds of atomic statements. We allow the user to
restrict the set of allowable executions (using regular sets). Despite
the fact that this problem seems very similar to program synthesis, we
show that it is an \emph{easier} problem, and coherent transition
system realizability and synthesis can be solved in time exponential
in the number of program variables and polynomial in the size of the automata that
restrict executions. We prove a corresponding lower bound to establish
$\exptime$-completeness of this problem.

Finally, we note that our results also show, as a corollary, that the
grammar-restricted realizability/synthesis problem for Boolean
programs (resp. execution-restricted synthesis problem for Boolean
transition systems)
is decidable and is $\twoexptime$-complete (resp.
$\exptime$-complete). These results for Boolean programs are
themselves new. The lower bound results for these problems hence show
that coherent program/transition-system synthesis is not particularly
harder than Boolean program synthesis for uninterpreted
programs. Grammar-restricted Boolean program synthesis is an important
problem which is addressed by many practical synthesis systems like
Sketch~\cite{sketch}.

\medskip Due to space restrictions, we present only proof gists for
main results in the paper. All the complete proofs
 can be found in our technical report~\cite{techreport}.




\section{Examples}
\seclabel{examples}


We will begin by looking at several examples to gain some intuition
for uninterpreted programs.
\begin{figure}[h]
\begin{minipage}{0.5\textwidth}
\begin{align*}
\begin{array}{l}
\cd{cipher} \passign \cd{enc(secret, key)}; \\
\passume (\cd{secret} = \cd{dec(cipher, key)}); \\
\holepred{}{Cannot refer to \cd{secret} or \cd{key}} ; \\
\passert (\cd{z} = \cd{secret})
\end{array}
\end{align*} \\
\indent
\quad \quad Decrypting a ciphertext
\end{minipage}
\vline
\begin{minipage}{0.5\textwidth}
\begin{align*}
\begin{array}{l}
    \passume (\cd{T} \neq \cd{F}); \\
    \pif~\code{(x = T)}~\pthen~\cd{b} \passign \cd{T}~\pelse~\cd{b} \passign \cd{F};\\
    \holepred{}{Cannot refer to \code{x} or \code{b}}; \\
    \passert\code{(y = b)}
\end{array}
\end{align*} \\
\indent
\quad Synthesis with incomplete information
\end{minipage}
\caption{Examples of programs with holes}
\figlabel{motivating-examples}
\end{figure}

\begin{example}
\exlabel{crypto-ex}
Consider the program in~\figref{motivating-examples} (left).
This program has a \emph{hole} `$\holepred{}{Cannot \ldots }$' that we
intend to fill with a sub-program so that the entire program (together
with the contents of the hole) satisfies the assertion at the end.
The sub-program corresponding to the hole is allowed to use the
variable \cd{cipher} as well as some additional variables
$\cd{y}_1, \ldots, \cd{y}_n$ (for some fixed $n$), but is not allowed
to refer to $\cd{key}$ or $\cd{secret}$ in any way. Here we also
restrict the hole to exclude while loops. This example models the
encryption of a secret message $\cd{secret}$ with a key $\cd{key}$.
The assumption in the second line of the program models the fact that
the secret message can be decrypted from $\cd{cipher}$ and $\cd{key}$.
Here, the functions $\cd{enc}$ and $\cd{dec}$ are \emph{uninterpreted
  functions}, and thus the program we are looking for is an
\emph{uninterpreted program}.  For such a program, the assertion
$\dblqt{\passert (\cd{z} = \cd{secret})}$ holds at the end if it holds
for \emph{all models}, i.e, for all interpretations of \cd{enc} and
\cd{dec} and for all initial values of the program variables
$\cd{secret}$, $\cd{key}$, $\cd{cipher}$, and
$\cd{y}_1, \ldots, \cd{y}_n$.  With this setup, we are essentially
asking whether a program that does not have access to $\cd{key}$ can
recover $\cd{secret}$.  It is not hard to see that there is no program
which satisfies the above requirement.  The above modeling of keys,
encryption, nonces, etc. is common in algebraic approaches to modeling
cryptographic
protocols~\cite{dolev1983security,Durgin2004}. 


\end{example}

\begin{example}
  \exlabel{incomplete-info-ex} The program
  in~\figref{motivating-examples} (right) is another simple example of
  an unrealizable specification.  
  The program variables here are $\cd{x}, \cd{b}$, and $\cd{y}$.
  The hole in this partial program is restricted so that it cannot
  refer to $\cd{x}$ or $\cd{b}$.  It is easy to phrase the question
  for synthesis of the complete program in terms of a grammar.  The
  restriction on the hole ensures that the synthesized code fragment
  can neither directly check if $\cd{x} = \cd{T}$, nor indirectly
  check via $\cd{b}$.  Consequently, it is easy to see that there is
  no program for the hole that can ensure $\cd{y}$ is equal to
  $\cd{b}$.  We remark that the code at the hole, apart from not being
  allowed to examine some variables, is also implicitly prohibited
  from looking at the control path taken to reach the hole.  If we
  could synthesize two different programs depending on the control
  path taken to reach the hole, then we could set
  $\cd{y} \passign \cd{T}$ when the \pthen-branch is taken and set
  $\cd{y} \passign \cd{F}$ when the \pelse-branch is taken.  Program
  synthesis requires a control-flow independent decision to be made
  about how to fill the hole. In this sense, we can think of the hole
  as having only \emph{incomplete information} about the executions
  for which it must be correct. This can be used to encode
  specifications using complex ghost code, as we show in the next
  examples.  In~\secref{further-results}, we explore a slightly
  different synthesis problem, called \emph{transition system
    synthesis}, where holes can be differently instantiated based on
  the history of an execution.
\end{example}

\begin{example}
  \exlabel{list-search} In this example, we model the synthesis of a
  program that checks whether a linked list pointed to by some node
  \cd{x} has a key \cd{k}.  We model a $\textit{next}$ pointer with a
  unary function \cd{next} and we model locations using elements in
  the underlying data domain.

  Our formalism allows only for $\passert$ statements to specify
  desired program properties.
  In order to state the correctness specification for our desired list-search program,
  we interleave \emph{ghost code}
  into the program skeleton;
  we distinguish ghost code fragments by enclosing them in \boxed{5}{dashed boxes}.
  The skeleton in \figref{ghost-code} has a
  loop that advances the pointer variable \cd{x} along the list until
  \cd{NIL} is reached. We model \cd{NIL} with an immutable program
  variable. The first hole `$\hole{\circled{1}}$' before the
  \pwhile-loop and the second hole `$\hole{\circled{2}}$' within the
  \pwhile-loop need to be filled so that the assertion at the end is
  satisfied. We use three ghost variables in the skeleton:
  $\cdm{g_{ans}}$, $\cdm{g_{witness}}$, and $\cdm{g_{found}}$. The
  ghost variable $\cdm{g_{ans}}$ evaluates to whether we expect to
  find $\cd{k}$ in the list or not, and hence at the end the skeleton
  asserts that the Boolean variable $\cd{b}$ computed by the holes is
  precisely $\cdm{g_{ans}}$.  The holes are restricted to
  not look at the ghost variables.
\begin{flushleft}
\begin{minipage}[t]{0.57\linewidth}
  Now, notice that the skeleton needs to \emph{check} that the answer
  $\cdm{g_{ans}}$ is indeed correct.  If $\cdm{g_{ans}}$ is not
  $\cd{T}$, then we add the assumption that $\cd{key(x)} \neq \cd{k}$
  in each iteration of the loop, hence ensuring the key is not
  present.  For ensuring correctness in the case
  $\cdm{g_{ans}}= \cd{T}$, we need two more ghost variables
  $\cdm{g_{witness}}$ and $\cdm{g_{found}}$.  The variable
  $\cdm{g_{witness}}$ witnesses the precise location in the list that
  holds the key $\cd{k}$, and variable $\cdm{g_{found}}$ indicates
  whether the location at $\cdm{g_{witness}}$ belongs to the list
  pointed to by $\cd{x}$.  Observe that this specification can be
  realized by filling `$\hole{\circled{1}}$' with
  ``$\cd{b} \passign \cd{F}$'' and `$\hole{\circled{2}}$' with
  ``$\pif~ \code{key(x)} = \cd{k} ~\pthen~ \cd{b} \passign \cd{T}$'',
  for instance.  Furthermore, this program is
  \emph{coherent}~\cite{MMV19} and hence our decision procedure will
  answer in the affirmative and synthesize code for the holes.
\end{minipage} \hfill
\begin{minipage}[t]{0.4\linewidth}
\begin{flushright}
    \centering
    \vspace{-0.5in}
    \begin{figure}[H]
    \begin{align*}
      \begin{array}{l}
        \passume (\cd{T} \neq \cd{F}) ; \\
        \vspace{0.1cm}
        \boxed{40}{$\cdm{g_{found}} \passign \code{F};$}\\
        \vspace{0.1cm}
        \hole{\circled{1}}; \\
        \pwhile \cdm{(x \neq NIL)} \,\{\\
        \vspace{0.1cm}
        \hspace*{0.4cm}
        \boxed{120}{
        $\pif~ (\cdm{g_{ans}} \neq \cd{T})~\pthen$ \\
        \hspace*{0.4cm}~$\passume (\cd{key(x)} \neq \cd{k});$ \\
        $\pelse~\pif~(\cdm{g_{witness}} = \cd{x})~\pthen~ \{$\\
        \hspace*{0.4cm}~$\passume~(\cd{key(x) = k});$ \\
        \hspace*{0.4cm}~$\cdm{g_{found}} \passign \cd{T};$ \\
        \hspace*{0.1cm} $\} ;$
        }\\
        \vspace{0.1cm}
        \hspace*{0.4cm}~\hole{\circled{2}} ;\\
        \hspace*{0.4cm}~\cd{x} \passign~ \cd{next(x)} ;\\
        \hspace*{0.1cm}\}\\
        \vspace{0.1cm}
        \boxed{80}{$\passume~(\cdm{g_{ans} = T}  \Rightarrow \cdm{g_{found} = T});$}\\
        \passert~ \code{b} = \cd{T} \iff \cdm{g_{ans}} = \cd{T}
      \end{array}
    \end{align*}
    \vspace{-0.2in}
    \caption{Skeleton with ghost code}
    \label{fig:ghost-code}
  \end{figure}
\end{flushright}
\end{minipage}
\end{flushleft}
\vspace{-0.2in} In fact, our procedure will synthesize a
representation for \emph{all} possible ways to fill the holes (thus
including the solution above) and it is therefore possible to
enumerate and pick specific solutions.  It is straightforward to
formulate a grammar which matches this setup.  As noted, we must
stipulate that the holes do not use the ghost variables.
\end{example}

\vspace{-0.1in}
\begin{example}
  \exlabel{list-search-unrealizable} Consider the same program
  skeleton as in~\exref{list-search}, but let us add an assertion at
  the end:
  ``$\passert~(\cd{b = T} \Rightarrow \cd{z}=\cdm{g_{witness}})$'',
  where $\cd{z}$ is another program variable.  We are now demanding
  that the synthesized code also find a location $\cd{z}$, whose key
  is $\cd{k}$, that is equal to the ghost location
  $\cdm{g_{witness}}$, which is guessed nondeterministically at the
  beginning of the program. This specification is \emph{unrealizable}:
  for a list with multiple locations having the key $\cd{k}$, no
  matter what the program picks we can always take $\cdm{g_{witness}}$
  to be the \emph{other} location with key $\cd{k}$ in the list, thus
  violating the assertion.  Our decision procedure will report in the
  negative for this specification.
\end{example}

\vspace{0.1in}
\begin{example}[Input/Output Examples]
  We can encode input/output examples by adding a sequence of
  assignments and assumptions that define certain models at the
  beginning of the program grammar.
  For instance, the sequence of statements
  in \figref{input-output}
    defines a linked list of two elements with different keys.
  \vspace{-0.3in}
  \begin{flushleft}
  \begin{minipage}{0.6\linewidth}
    We can similarly use special variables to define the output that
    we expect in the case of each model. And as we saw in the ghost
    code of \figref{ghost-code}, we can use fresh variables to
    introduce nondeterministic choices, which the grammar can use to
    pick an example model nondeterministically.  Thus when the
    synthesized program is executed on the chosen model it computes
    the expected answer.  This has the effect of requiring a solution
    that generalizes across models. See~\cite{techreport} for a more
    detailed example. 
  \end{minipage} \hfill
  \begin{minipage}{0.38\linewidth}
    \begin{flushright}
      \begin{figure}[H]
        \begin{align*}
          \begin{array}{l}
            \passume(\cdm{x_1} \neq \cd{NIL});~ \\
            \cdm{x_2} \passign \cdm{next(x_1)};~ \\
            \passume(\cdm{x_2} \neq \cd{NIL});~ \\
            \passume(\cdm{next(x_2)} = \cd{NIL});~ \\
            \cdm{k_1} \passign \cdm{key(x_1)};~ \\
            \cdm{k_2} \passign \cdm{key(x_2)};~ \\
            \passume(\cdm{k_1} \neq \cdm{k_2})
          \end{array}
        \end{align*}
        \caption{An example model}
        \label{fig:input-output}
      \end{figure}
    \end{flushright}
  \end{minipage}
\end{flushleft}
\end{example}


\vspace{-0.2in}
\section{Preliminaries}
\seclabel{prelim}

In this section we define the syntax and semantics of uninterpreted
programs and the \emph{(grammar-restricted) uninterpreted program
  synthesis} problem.

\subsubsection{Syntax}
\seclabel{syntax}

We fix a first order signature $\Sigma = (\Ff, \Rr)$, where $\Ff$ and
$\Rr$ are sets of function and relation symbols, respectively.  Let
$V$ be a finite set of program variables.  The set of programs over
$V$ is inductively defined using the following grammar, with
$f \in \Ff$, $R \in \Rr$ (with $f$ and $R$ of the appropriate
arities), and $x, y, z_1, \ldots, z_r \in V$.
\begin{align*}
\stmt_V ::=\,\,&
\,\,  \pskip \,
\mid \, x \passign y \,
\mid \, x \passign f(z_1, \ldots, z_r) \,
\mid \, \\ &\passume \, \big(\cond_V\big) \,
\mid \, \passert \, \big(\cond_V\big)
\mid \, \stmt_V \, ;\, \stmt_V
\mid \, \\ &\pif \, \big(\cond_V\big) \, \pthen \, \stmt_V \, \pelse \, \stmt_V \,
\mid \, \pwhile \, \big(\cond_V\big) \, \stmt_V
\\
\cond_V ::=\,\,&
\, x = y \,
\mid \, R(z_1, \ldots,  z_r) \,
\mid \, \cond_V \lor \cond_V \,
\mid \, \neg \cond_V
\end{align*}
Without loss of generality, we can assume that our programs do not use
relations (they can be modeled with functions) and that every
condition is either an equality or disequality between variables
(arbitrary Boolean combinations can be modeled with nested
$\pif{-}\pthen{-}\pelse$). When the set of variables $V$ is clear from
context, we will omit the subscript $V$ from $\stmt_V$ and $\cond_V$.

\subsubsection{Program Executions}
An execution over $V$ is a finite word over the alphabet
\begin{align*}
  \Pi_V = \setpred{
  \dblqt{x \passign y},
  \dblqt{x \passign f(\vec{z})},
  &\dblqt{\passume(x = y)},
  \dblqt{\passume(x \neq y)},
  \\ &\dblqt{\passert(\pfalse)}
  }{
  x, y\in V, \vec{z}\in V^r, f \in \Ff
  }.
\end{align*}

The set of \emph{complete executions} for a program $p$ over $V$,
denoted $\exec(p)$, is a regular language. See~\cite{techreport} for a
straightforward definition. 
%
The set $\pexec(p)$ of \emph{partial executions} is the set of
prefixes of complete executions in $\exec(p)$. We refer to partial
executions as simply \emph{executions}, and clarify as needed when the
distinction is important.

\vspace{-0.15in}
\subsubsection{Semantics}
The semantics of executions is given in terms of data models.  A data
model $\Mm = (U, \Ii)$ is a first order structure over $\Sigma$
comprised of a universe $U$ and an interpretation function $\Ii$ for
the program symbols. The semantics of an execution $\pi$ over a data
model $\Mm$ is given by a configuration $\sigma(\pi, \Mm) : V \to U$
which maps each variable to its value in the universe $U$ at the end
of $\pi$. This notion is straightforward and we skip the formal
definition (see~\cite{MMV19} for details). For a fixed program $p$,
any particular data model corresponds to at most one complete
execution $\pi\in\exec(p)$.

An execution $\pi$ is \emph{feasible} in a data model $\Mm$ if for
every prefix $\rho = \rho' \cdot \passume(x \sim y)$ of $\pi$ (where
$\sim \,\in \set{=,\neq}$), we have
$\sigma(\rho', \Mm)(x) \sim \sigma(\rho', \Mm)(y)$.  Execution $\pi$
is said to be \emph{correct} in a data model $\Mm$ if for every prefix
of $\pi$ of the form $\rho = \rho' \cdot \passert(\pfalse)$, we have
that $\rho'$ is not feasible, or \emph{infeasible} in $\Mm$.  Finally,
a program $p$ is said to be \emph{correct} if for all data models
$\Mm$ and executions $\pi \in \pexec(p)$, $\pi$ is correct in $\Mm$.

\subsection{The Program Synthesis Problem}
\seclabel{problem-statement} We are now ready to define the program
synthesis problem. Our approach will be to allow users to specify a
grammar and ask for the synthesis of a program from the grammar. We
allow the user to express specifications using \emph{assertions} in
the program to be synthesized.

\vspace{-0.15in}
\subsubsection{Grammar Schema and Input Grammar.}
\seclabel{schema} In our problem formulation, we allow users to define
a grammar which conforms to a schema, given below. The input grammars
allow the usual context-free power required to describe proper
nesting/bracketing of program expressions, but disallow other uses of
the context-free power, such as \emph{counting statements}.

\begin{flushleft}
\begin{minipage}[t]{0.6\linewidth}
  For example, we disallow the grammar in
  \figref{disallowed-grammar}. This grammar has two non-terminals $S$
  (the start symbol) and $T$. It generates programs with a conditional
  that has the \emph{same} number of assignments in the $\pif$ and
  $\pelse$ branches.  We assume a countably infinite set $\schemant$
  of nonterminals and a countably infinite set $\schemapv$ of program
  variables.  The grammar schema $\schema$ over $\schemant$ and
  $\schemapv$ is an infinite collection of productions:
\end{minipage} \hfill
\begin{minipage}[t]{0.37\linewidth}
  \vspace{-0.3in}
  \begin{flushright}
    \begin{figure}[H]
      \begin{alignat*}{2}
        S &\produces\,\, &&\pif~(\cd{x} = \cd{y}) \\
        & &&\pthen~\cd{u} \passign \cd{v}~T~\cd{u} \passign \cd{v} \\
        T &\produces\,\, &&\pelse  \\
        T &\produces\,\, &&;~\cd{u} \passign \cd{v}~~T~~\cd{u} \passign
        \cd{v}~;~
      \end{alignat*}
      \vspace{-0.2in}
      \caption{Grammar with counting}
      \label{fig:disallowed-grammar}
    \end{figure}
  \end{flushright}
\end{minipage}
\end{flushleft}


\vspace{-0.3in}
\begin{align*}
\schema = \bigset*{
\begin{aligned}
\begin{array}{l}
\dblqt{P \produces \, x \passign y},
\dblqt{P \produces \, x \passign f(\vec{z})}, \\
\dblqt{P \produces \, \passume(x \sim y)},
\dblqt{P \produces \, \passert(\pfalse)}, \\
\dblqt{P \produces \, \pskip},
\dblqt{P \produces \, \pwhile \, (x \sim y) \,\, P_1}, \\
\dblqt{P \produces \, \pif \, (x \sim y) \, \pthen \, P_1 \, \pelse \, P_2},
\dblqt{P \produces \, P_1; P_2}
\end{array}
\end{aligned}
\given
\begin{aligned}
\begin{array}{l}
P, P_1, P_2 \in \schemant \\
x, y\in\schemapv,\, \vec{z} \in \schemapv^r \\
\sim \,\in \set{=, \neq}
\end{array}
\end{aligned}
\rule{0cm}{0.8cm} }
\end{align*}

An \emph{input grammar} $\gram$ is any finite subset of the schema
$\schema$, and it defines a set of programs, denoted $L(\gram)$. We
can now define the main problem addressed in this work.

\begin{definition}[Uninterpreted Program Realizability and Synthesis]
\newline
  Given an input grammar $\gram$, the realizability problem is to
  determine whether there is an uninterpreted program $p \in L(\gram)$
  such that $p$ is correct. The synthesis problem is to determine the
  above, and further, if realizable, synthesize a correct program
  $p \in L(\gram)$.
\end{definition}

\begin{example}
  Consider the program with a hole from~\exref{crypto-ex}
  (\figref{motivating-examples}, left). We can model that synthesis
  problem in our framework with the following grammar.
\begin{align*}
&\begin{array}{ll}
S   \produces P_1; P_2; P_{\hole{}}; P_3 & P_{\hole{}} \produces \stmt_{V_{\hole{}}} \\
P_1 \produces \dblqt{\cd{cipher} \passign \cd{enc(secret, key)}} &
P_3 \produces \dblqt{\passert (\cd{z} = \cd{secret})} \\
P_2 \produces \dblqt{\passume (\cd{secret} = \cd{dec(cipher, key)})} &
\end{array}
\end{align*}
Here, $V_{\hole{}} = \set{\cd{cipher}, \cd{y}_1, \ldots, \cd{y}_n}$
and the grammar $\stmt_{V_{\hole{}}}$ is that of~\secref{syntax},
restricted to loop-free programs.  Any program generated from this
grammar indeed matches the template from~\figref{motivating-examples}
(left) and any such program is correct if it satisfies the last
assertion for all models, i.e., all interpretations of the function
symbols \cd{enc} and \cd{dec} and for all initial values of the
variables in $V = V_{\hole{}} \cup \set{\cd{key}, \cd{secret}}$.
\end{example}


\section{Undecidability of Uninterpreted Program Synthesis}
\seclabel{undec}

Since verification of uninterpreted programs with loops is
undecidable~\cite{MMV19,Muller2005herbrand}, the following is
immediate.

\begin{theorem}
\thmlabel{undec}
The uninterpreted program synthesis problem is undecidable.
\end{theorem}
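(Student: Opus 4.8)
The plan is a one-line reduction from the verification problem for uninterpreted programs. By \cite{MMV19,Muller2005herbrand}, deciding whether an arbitrary uninterpreted program $p$ with loops (not necessarily coherent) is correct is undecidable, so it suffices to reduce this problem to realizability. The idea is simply to wrap $p$ in a grammar whose language is the singleton $\set{p}$, so that realizability of the grammar is literally the statement ``$p$ is correct''.

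Concretely, given such a $p$ — which, by the standing conventions of \secref{syntax}, we may assume uses only functions and has only literal conditions — I would first build a finite input grammar $\gram_p \subseteq \schema$ with $L(\gram_p) = \set{p}$. Fix a parse tree of $p$ according to the grammar of \secref{syntax}, allocate a distinct nonterminal $P_w \in \schemant$ for every node $w$ of this tree, and let the production for $P_w$ be the schema production whose left-hand side is $P_w$ and whose right-hand side copies the statement constructor at $w$, placing the nonterminals associated to $w$'s children in the argument slots (for instance $P_w \produces \pwhile\,(x \sim y)\,P_{w_1}$, or $P_w \produces P_{w_1}; P_{w_2}$, or $P_w \produces x \passign f(\vec z)$ at a leaf). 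Since $p$ has finitely many nodes, $\gram_p$ is a finite subset of $\schema$, hence a legal input grammar, and by construction it derives exactly $p$ (parenthesization of sequencing is irrelevant, as it does not affect $\exec(\cdot)$ or correctness).

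Unfolding the definition of realizability, $\gram_p$ is realizable iff some program in $L(\gram_p)$ is correct, i.e. iff $p$ is correct; and $\gram_p$ is computable from $p$. Hence any decision procedure for realizability, run on $\gram_p$, would decide correctness of $p$, contradicting undecidability of uninterpreted program verification. This argument has essentially no obstacle; the only thing to verify is the folklore fact that a single finite program is the language of a finite sub-grammar of $\schema$, which is immediate from the shape of the schema. (By contrast, the sharper claim flagged in \secref{intro} — that synthesis remains undecidable even for straight-line programs, where verification is trivial — cannot be obtained this way and needs a genuine encoding, e.g. of two-counter machine halting or of PCP into a loop-free grammar with a hole; that is treated separately.)
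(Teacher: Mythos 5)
Your proposal is correct and matches the paper's own argument: the paper also proves this theorem by an immediate reduction from the (undecidable) verification problem for uninterpreted programs, constructing a grammar $\gram_p$ with $L(\gram_p) = \set{p}$ so that realizability coincides with correctness of $p$. Your additional remarks about the separate treatment needed for the straight-line case are also consistent with the paper, which handles that via a reduction from Post's Correspondence Problem.
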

We next consider synthesizing loop-free uninterpreted programs (for
which verification reduces to satisfiability of quantifier-free EUF)
from grammars conforming to the following schema:

\vspace{-0.15in}
\[\schema_\text{loop-free} =
\schema \setminus
\setpred{\dblqt{P \produces \, \pwhile \, (x \sim y) \,\, P_1}}{
  P, P_1 \in \schemant, \
x, y\in \schemapv,
\sim \,\in \set{=, \neq}
}
\]
\begin{theorem}
  \thmlabel{undec-loop-free} The uninterpreted program synthesis
  problem is undecidable for the schema $\schema_\text{loop-free}$.
\end{theorem}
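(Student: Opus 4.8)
The plan is to reduce Post's Correspondence Problem (PCP) to realizability for grammars conforming to $\schema_\text{loop-free}$. Fix a PCP instance given by pairs $(u_1,v_1),\dots,(u_k,v_k)$ of nonempty strings over the two-letter alphabet $\{a,b\}$, and assume the signature $\Sigma$ contains two unary function symbols, which we also call $a$ and $b$ (standard, and harmless). I would encode a string $w = w_1 w_2 \cdots w_m$ "applied to" a value held in a variable $s$ by the nested ground term $w_1(w_2(\cdots w_m(s)))$; since all symbols are uninterpreted, two such ground terms are equal in every data model iff the two strings are identical. The grammar $\gram$ is then built so that $L(\gram)$ is exactly the set of programs of the form
\[
\passume(s = t)\,;\ D_{i_1}\,;\ \cdots\,;\ D_{i_n}\,;\ \pif\ (s \neq t)\ \pthen\ \passert(\pfalse)\ \pelse\ \pskip,
\]
with $n \geq 1$ and $(i_1,\dots,i_n)\in[k]^n$ arbitrary, where the block $D_i$ is the fixed assignment sequence $s \passign c_{m_i}(s);\dots;s\passign c_1(s);t\passign d_{\ell_i}(t);\dots;t\passign d_1(t)$ that prepends $u_i=c_1\cdots c_{m_i}$ to $s$ and $v_i=d_1\cdots d_{\ell_i}$ to $t$. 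Concretely $\gram$ has start symbol $S$ with $S\produces\passume(s=t);I$, $I\produces A;C$, $C\produces\pif(s\neq t)\pthen\passert(\pfalse)\pelse\pskip$, a nonterminal $A$ with productions $A\produces D_i$ and $A\produces D_i;A$ for each $i\in[k]$, and, for each $i$, a finite chain of nonterminals unfolding $D_i$ into its individual assignments using only binary-sequencing productions. Every production lies in $\schema_\text{loop-free}$ and $\gram$ is finite, so it is a legal input grammar, computable from the PCP instance.

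Next I would verify the correspondence. A program $p\in L(\gram)$ determined by $(i_1,\dots,i_n)$ has a single $\passert(\pfalse)$, reached along the execution prefix $\passume(s=t)\cdot(\text{assignments of }D_{i_1}\cdots D_{i_n})\cdot\passume(s\neq t)$, and $p$ is correct iff this prefix is infeasible in every data model. In any model where $s$ and $t$ start with distinct values the leading $\passume(s=t)$ already kills the prefix, so correctness is equivalent to: in every model in which $s,t$ start equal, the terms accumulated in $s$ and $t$ evaluate to the same element. After the blocks, $s$ holds $u_{i_n}(u_{i_{n-1}}(\cdots u_{i_1}(s_0)))$ and $t$ holds $v_{i_n}(\cdots v_{i_1}(t_0))$ (prepending reverses the index order). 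Taking the Herbrand model generated by one element and the functions $a,b$, and setting $s_0=t_0$ to that generator, shows these coincide in all such models iff the strings $u_{i_n}\cdots u_{i_1}$ and $v_{i_n}\cdots v_{i_1}$ are literally equal, i.e.\ iff $(i_n,\dots,i_1)$ is a PCP solution. Since reversal is a bijection on finite index sequences, $\gram$ is realizable iff the PCP instance has a solution, which gives undecidability of realizability, and the same construction handles the synthesis problem.

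Since the reduction uses no loops and only the two data variables $s,t$, all of the actual work is in two routine pieces: (i) checking that $\gram$, built only from the allowed schema productions (binary sequencing, conditions that are equalities/disequalities of variables, assertions only $\passert(\pfalse)$), generates \emph{exactly} the intended program family with no spurious trivially-correct members — in particular that the "at least one block" shape rules out the empty PCP sequence and that the leading $\passume(s=t)$ is never vacuously dead, which holds because structures have nonempty universe; and (ii) the semantic fact that correctness of a loop-free uninterpreted program collapses to validity of a ground equality and hence to syntactic term equality, a standard Herbrand-model argument. I expect (i) to be the only real, albeit routine, obstacle; once the term encoding is fixed there is no deeper difficulty. (This also supersedes, via a self-contained argument, the earlier observation of Caulfield that EUF-formula synthesis with \texttt{ITE} is undecidable.)
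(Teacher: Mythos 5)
Your reduction is correct and is essentially the paper's own argument: a PCP reduction in which unary uninterpreted functions encode letters, index blocks prepend the tile strings to two accumulator variables, and correctness of the resulting program collapses (via the free/Herbrand term model) to syntactic equality of the two concatenations. The only differences are cosmetic—the paper initializes the accumulators by $x_1 \passign x_3; x_2 \passign x_3$ instead of a leading $\passume(s=t)$ and ends with the straight-line suffix $\passume(x_1 \neq x_2);\passert(\pfalse)$ rather than your (unnecessary) $\pif$-$\pthen$-$\pelse$, which lets it establish the stronger Theorem~\ref{thm:undec-slp} for the schema $\schema_\slp$ and obtain the loop-free case as a corollary.
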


This is a corollary of the following stronger result: synthesis of
\emph{straight-line uninterpreted programs} (conforming to schema
$\schema_\slp$ below) is undecidable.
\begin{align*}
\schema_\slp = \schema_\text{loop-free} \setminus
\{
\dblqt{P \produces \, \pif (x\sim y) \, \pthen \, P_1 \,
  \pelse \, P_2} \,\,|\,\, &P, P_1, P_2 \in \schemant, \\
  &x, y \in \schemapv, \,\sim \,\in \set{=, \neq}
\}
\end{align*}


\begin{theorem}
  \thmlabel{undec-slp} The uninterpreted program synthesis problem is
  undecidable for the schema $\schema_\slp$.
\end{theorem}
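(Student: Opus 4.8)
The plan is to reduce from an undecidable problem about straight-line computations over uninterpreted functions — most naturally, the halting/mortality-style problem or, more directly, the problem of deciding whether two straight-line uninterpreted programs are equivalent, or whether a single assertion is valid over all models. Concretely, I would reduce from the word problem for a suitable Turing-complete model whose computation can be encoded as a straight-line sequence of applications of uninterpreted functions. A clean choice is to encode configurations of a \emph{two-counter (Minsky) machine} as terms: a counter with value $n$ is represented by the term $s^n(c)$ built from a unary function $s$ and a constant symbol $c$, and a full configuration is a tuple of such terms held in program variables. Since straight-line programs have no conditionals, the machine's control flow — which is finite — must be "compiled away": each possible run prefix of the machine (up to the point where a branch on a zero-test would occur) corresponds to a different straight-line program, and the grammar $\schema_\slp$ is used to generate exactly the family of straight-line programs that faithfully simulate \emph{some} finite run, with $\passume$ statements used to enforce the zero-tests and $\passert(\pfalse)$ placed so that a correct program exists iff the machine does \emph{not} halt (or, dually, halts).

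The key steps, in order: (1) Fix a two-counter machine $M$. Design a grammar $\gram_M \subseteq \schema_\slp$ whose derivations correspond to finite sequences of "macro-blocks," one per instruction type (increment, decrement-with-test, etc.), each macro-block being a short straight-line snippet over a fixed finite variable set $V$ that updates the variables encoding the two counters using $s$ and a (partial) predecessor encoded via an auxiliary variable and an $\passume$ of the form $\passume(x = s(y))$. (2) Show that feasible executions of programs in $L(\gram_M)$ are in bijection with (prefixes of) legal runs of $M$: the $\passume$ statements exactly enforce that decrements are applied only when the counter is nonzero and that zero-tests succeed only when it is zero; infeasibility of an execution in \emph{every} model corresponds to a step the machine cannot take. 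This uses the fact that in the free (Herbrand) model the terms $s^n(c)$ are all distinct, so equality/disequality of counter-encodings behaves like equality/disequality of naturals. (3) Place the final $\passert(\pfalse)$ so that a program $p \in L(\gram_M)$ is correct (for all models) iff the run it encodes never reaches the halting instruction — equivalently, arrange the grammar so that the \emph{only} way to produce a correct program is to encode a run that avoids $\halt$. Then $L(\gram_M)$ contains a correct program iff $M$ does not halt, giving undecidability of realizability, and the synthesis version inherits this. One must also check that the programs produced are syntactically well-formed straight-line programs using only the schema $\schema_\slp$ (no $\pif$, no $\pwhile$), which is immediate from the macro-block design.

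The main obstacle is step (2)/(3): getting the correspondence between \emph{program correctness over all models} and \emph{machine behavior} exactly right. Two subtleties drive this. First, assertion failure is a property of feasibility — $\passert(\pfalse)$ after $\rho'$ fails iff $\rho'$ is feasible in some model — so to encode "the machine reaches $\halt$" as "there exists a model making some prefix feasible," I must ensure that the branch structure of $M$ is entirely resolved by the $\passume$ guards rather than by nondeterminism in the model; the Herbrand model has to be the "tightest" one, and I need that a block is feasible in \emph{some} model iff it is feasible in the Herbrand model, which holds because the only constraints are equalities/disequalities among $s$-towers. Second, because straight-line programs cannot loop, the grammar — not the program — supplies the unbounded length, so I must make sure the grammar can generate runs of every finite length (a simple right-recursive production $P \produces \text{block}; P'$ suffices) while keeping $V$ finite and fixed; reusing variables across blocks is fine since each block reads the previous configuration and overwrites it, and coherence is irrelevant here because we are outside the decidable fragment. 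Once these are pinned down, the reduction is routine; it is worth remarking that \thmref{undec-loop-free} follows immediately since $\schema_\slp \subseteq \schema_\text{loop-free}$, and that this strengthens \thmref{undec} by removing iteration entirely.
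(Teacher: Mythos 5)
Your high-level instinct (push the unboundedness into the grammar, since each individual straight-line program is decidable to verify) is right, but the reduction you sketch from two-counter machines has two gaps that I believe are fatal, and the paper avoids both by reducing from Post's Correspondence Problem instead. The first gap is the quantifier structure. For a straight-line program, correctness is exactly: for every prefix ending in $\passert(\pfalse)$, the conjunction of the $\passume$d literals before it is \emph{unsatisfiable} over all models. A program that encodes a \emph{legal} run of $M$ accumulates a satisfiable set of assumptions, so a trailing $\passert(\pfalse)$ is violated and the program is \emph{incorrect}; your step (3) therefore cannot make ``correct'' line up with either ``halts'' or ``does not halt.'' Concretely, if the grammar allows arbitrary finite instruction sequences, then short programs encoding harmless legal prefixes (or programs whose asserts sit after contradictory assumptions) are correct for every $M$, so realizability is trivially ``yes''; if instead the grammar forces every program to reach the $\halt$ block, no program ``encodes a run that avoids $\halt$.'' The second gap is the counter encoding. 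Feasibility in \emph{some} model is not feasibility in the Herbrand model: any assumed ground equality between distinct terms (e.g.\ your decrement guard $\passume(x = s(y))$ with $y$ still holding $\init{y}$, or a zero-test $\passume(x = c)$ when $x$ holds $s^n(\init{c})$ with $n>0$) is false in the Herbrand model yet satisfiable, e.g.\ in a one-point model. Since $s$ is uninterpreted and hence not injective, $s(\init{y}) = s^n(\init{c})$ does not entail $\init{y} = s^{n-1}(\init{c})$, so after one ``decrement'' the variable no longer tracks the counter in all models, and zero-tests stop behaving like tests on naturals. This is exactly the point where ``the Herbrand model is the tightest one'' fails for equality assumptions.

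What the straight-line synthesis problem \emph{can} naturally express is: does there exist a derivation of the grammar under which two ground terms, built up by composing unary functions, become syntactically equal (so that a final $\passume(x_1 \neq x_2);\ \passert(\pfalse)$ is vacuously correct)? That is a word-matching question, and the paper's proof is precisely a reduction from PCP: for each alphabet symbol $\sigma$ there is a unary function $f_\sigma$; the grammar generates $x_1 \passign x_3;\ x_2 \passign x_3$ followed by blocks $C_{i_1};\ldots;C_{i_N}$, where $C_i$ applies the functions spelling $\alpha_i$ to $x_1$ and those spelling $\beta_i$ to $x_2$, and ends with $\passume(x_1 \neq x_2);\ \passert(\pfalse)$. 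The resulting program is correct iff $f_{w_\alpha}(\init{x_3})$ and $f_{w_\beta}(\init{x_3})$ coincide in the free term model, i.e.\ iff $\alpha_{i_1}\cdots\alpha_{i_N} = \beta_{i_1}\cdots\beta_{i_N}$, so a correct program exists iff the PCP instance has a solution. If you want to salvage a machine-based argument you would need a model of computation whose acceptance condition is a single final term equality rather than a conjunction of per-step tests; PCP (equivalently, word problems for semi-Thue systems) is that model, which is why the paper uses it.
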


In summary, program synthesis of even straight-line uninterpreted
programs, which have neither conditionals nor iteration, is already
undecidable.  The notion of \emph{coherence} for uninterpreted
programs was shown to yield decidable verification in~\cite{MMV19}.
As we'll see in \secref{coherent-synth}, restricting to coherent
programs yields decidable synthesis, even for programs with
conditionals \emph{and} iteration.



\section{Synthesis of Coherent Uninterpreted Programs}
\seclabel{coherent-synth}

In this section, we present the main result of the paper:
grammar-restricted program synthesis for uninterpreted \emph{coherent}
programs~\cite{MMV19} is decidable. Intuitively, coherence allows us
to maintain congruence closure in a streaming fashion when reading a
coherent execution. 
First we recall the definition of coherent executions and programs
in~\secref{coherence} and also the algorithm for verification of such
programs. Then we introduce the synthesis procedure, which works by
constructing a two-way alternating tree automaton. We briefly discuss
this class of tree automata in~\secref{tree-aut} and recall some
standard results.  In
Sections~\ref{sec:grammar-to-aut}-\ref{sec:overall-algo} we describe
the details of the synthesis procedure, argue its correctness, and
discuss its complexity.  In \subsecref{lower-bound}, we present a
tight lower bound result.

\subsection{Coherent Executions and Programs}
\seclabel{coherence}

The notion of coherence for an execution $\pi$ is defined with respect
to the \emph{terms} it computes.  Intuitively, at the beginning of an
execution, each variable $x \in V$ stores some constant term
$\init{x} \in \Cc$.  As the execution proceeds, new terms are computed
and stored in variables.  Let $\Terms_\Sigma$ be the set of all ground
terms defined using the constants and functions in $\Sigma$.
Formally, the term corresponding to a variable $x \in V$ at the end of
an execution $\pi \in \Pi_V^*$, denoted
$\comp(\pi, x) \in \Terms_\Sigma$, is inductively defined as follows.
We assume that the set of constants $\Cc$ includes a designated set of
\emph{initial} constants
$\init{V} = \setpred{\init{x}}{x \in V} \subseteq \Cc$.
\begin{align*}
\begin{array}{rclr}
\comp(\varepsilon, x) &=& \init{x} & x \in V \\
\comp(\pi{\cdot}\dblqt{x \passign y}, x) &=& \comp(\pi, y) & x, y \in V \\
\comp(\pi{\cdot}\dblqt{x \passign f(z_1, \ldots, z_r)}, x) &=&
                                                               f(\comp(\pi, z_1), \ldots, \comp(\pi, z_r)) \quad & x, z_1, \ldots, z_r \in V \\
\comp(\pi{\cdot}a, x) &=& \comp(\pi, x) & \text{otherwise} \\
\end{array}
\end{align*}
We will use $\comp(\pi)$ to denote the set
$\setpred{\comp(\pi', x)}{x \in V, \pi' \text{ is a
    prefix of } \pi}$.

A related notion is the set of \emph{term equality assumptions} that
an execution accumulates, which we formalize as
$\alpha : \pi \to \Pp(\Terms_\Sigma \times \Terms_\Sigma)$,
and define inductively as
$\alpha(\varepsilon) = \emptyset$,
$\alpha(\pi{\cdot}\dblqt{\passume(x = y)}) = \alpha(\pi) \cup \set{(\comp(\pi, x), \comp(\pi, y))}$, and
$\alpha(\pi{\cdot}a) = \alpha(\pi)$ otherwise.

For a set of term equalities
$A \subseteq \Terms_\Sigma \times \Terms_\Sigma$, and two ground terms
$t_1, t_2 \in \Terms_\Sigma$, we say $t_1$ and $t_2$ are
\emph{equivalent modulo} $A$, denoted $t_1 \cong_A t_2$, if
$A \models t_1 = t_2$.  For a set of terms
$S \subseteq \Terms_\Sigma$, and a term $t \in \Terms_\Sigma$ we write
$t \in_{A} S$ if there is a term $t' \in S$ such that
$t \cong_{A} t'$.  For terms $t, s \in \Terms_\Sigma$, we say $s$ is a
\emph{superterm modulo} $A$ of $t$, denoted $\superterm{t}{A}{s}$ if
there are terms $t', s' \in \Terms_\Sigma$ such that $t \cong_A t'$,
$s \cong_A s'$ and $s'$ is a superterm of $t'$.

With the above notation in mind, we now review the notion of
coherence.
\begin{definition}[Coherent Executions and Programs~\cite{MMV19}]
  An execution $\pi \in \Pi_V^*$ is said to be \emph{coherent} if it
  satisfies the following two conditions.
\begin{description}
	\item[\quad Memoizing.] Let $\rho = \rho' \cdot \dblqt{x \passign f(\vec{y})}$
	be a prefix of $\pi$.
	If $t_x = \comp(\rho, x) \in_{\alpha(\rho')} \comp(\rho')$,
	then there is a variable $z \in V$ such that
	$t_x \cong_{\alpha(\rho')} t_z$, where $t_z = \comp(\rho', z)$.
	\item[\quad Early Assumes.] Let $\rho = \rho' \cdot \dblqt{\passume (x = y)}$
	be a prefix of $\pi$, $t_x = \comp(\rho', x)$ and $t_y = \comp(\rho', y)$.
	If there is a term $s \in \comp(\rho')$
	such that either $\superterm{t_x}{\alpha(\rho')}{s}$
	or  $\superterm{t_y}{\alpha(\rho')}{s}$,
	then there is a variable $z \in V$
	such that $s \cong_{\alpha(\rho')} t_z$, where $t_z = \comp(\rho', z)$.
\end{description}
A program $p$ is coherent if every complete execution
$\pi \in \exec(p)$ is coherent.
\end{definition}

The following theorems due to~\cite{MMV19} establish the decidability
of verifying coherent programs and also of checking if a program is
coherent.

\begin{theorem}[\cite{MMV19}]
  \thmlabel{coherent-ver} The verification problem for coherent
  programs, i.e. checking if a given uninterpreted coherent program is
  correct, is decidable.
\end{theorem}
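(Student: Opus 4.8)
The plan is to reduce program correctness to an emptiness check for a finite word automaton over the alphabet $\Pi_V$. Recall that $p$ is correct iff no execution $\pi \in \pexec(p)$ has a prefix $\rho' \cdot \dblqt{\passert(\pfalse)}$ with $\rho'$ feasible in \emph{some} data model $\Mm$. Call such a $\pi$ \emph{bad}. Since $\exec(p)$, and hence the prefix‑closed set $\pexec(p)$, is regular, it suffices to build a finite automaton $\Bb$ over $\Pi_V$ such that for every \emph{coherent} execution $\pi$, $\Bb$ accepts $\pi$ iff $\pi$ is bad; then, because every $\pi\in\pexec(p)$ is coherent (a prefix of a coherent execution is coherent), $p$ is correct iff $L(\Bb) \cap \pexec(p) = \emptyset$, which is decidable.

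The core of the construction is a bounded abstraction of a pair (execution prefix $\rho$, model $\Mm$) that is sound and complete for feasibility \emph{uniformly over all models}. Intuitively, the most permissive model for $\rho$ is the term model $\Terms_\Sigma / {\cong_{\alpha(\rho)}}$, and feasibility of $\rho$ amounts to checking, at each prefix $\rho'' \cdot \dblqt{\passume(x \neq y)}$, that $\comp(\rho'',x) \not\cong_{\alpha(\rho'')} \comp(\rho'',y)$. So the state of $\Bb$ would record: (i) the partition of $V$ into classes of variables currently holding $\cong_{\alpha(\rho)}$‑equal terms; (ii) for each class, a bounded description of its term — its top symbol from $\Ff \cup \init{V}$ together with pointers into $V$ naming the argument classes — i.e.\ a DAG on the classes encoding the superterm relation restricted to terms held in variables; and (iii) the disequalities accumulated between classes, with a sink state $\infeasible$ entered once two disequal classes merge and an accepting sink $\unsafe$ entered on reading $\dblqt{\passert(\pfalse)}$ from a non‑$\infeasible$ state. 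Transitions are the obvious updates: $\dblqt{x \passign y}$ moves $x$ to $y$'s class; $\dblqt{x \passign f(\vec z)}$ attaches $x$ to the (unique, if it exists) class whose term is $f$ applied to the classes of $\vec z$, else creates a fresh singleton class with that term; $\dblqt{\passume(x=y)}$ merges the classes of $x$ and $y$ and then closes under congruence (repeatedly merging $f(\cdots)$‑classes whose argument classes have become equal); $\dblqt{\passume(x\neq y)}$ records a disequality, moving to $\infeasible$ if the two classes already coincide.

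The crux — and the only place coherence is used — is showing this abstraction is both finite and faithful. The \textbf{memoizing} condition guarantees that whenever a coherent execution recomputes a term $f(\vec z)$ that is already $\cong_\alpha$‑equivalent to some term in $\comp(\rho')$, that value is in fact named by a program variable; hence $\Bb$ never needs to remember terms not currently held in variables, so the class‑DAG has at most $|V|$ nodes, each labeled from $\Ff\cup\init{V}$ with $\le r$ pointers into $V$ — a finite (exponential in $|V|$) state space. The \textbf{early assumes} condition guarantees that when $\dblqt{\passume(x=y)}$ is processed, every superterm of $t_x$ or $t_y$ that the remainder of the execution could ever care about is likewise already named by a variable, so the congruence‑closure step only ever merges named classes and terminates within the bounded state. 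Soundness (if $\Bb$ reaches $\unsafe$ then the term model modulo the recorded equalities, with distinct elements for the recorded disequal classes, witnesses incorrectness) is routine Herbrand‑model reasoning. The main obstacle is the \textbf{completeness}‑plus‑boundedness direction: one must prove by induction along the execution that the automaton state is an exact invariant — it records $\cong_{\alpha(\rho)}$ and the superterm relation, restricted to terms named by $V$ — and that the memoizing and early‑assumes conditions are precisely what ensures no ``forgotten'' equality or superterm can ever affect the feasibility of a prefix ending in a disequality assume. Given $\Bb$, emptiness of its product with the NFA for $\pexec(p)$ is decidable, which yields the theorem.
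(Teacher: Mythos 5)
Your proposal is correct in its essentials and follows the same route as the paper, which does not prove this theorem itself but cites~\cite{MMV19}: there, exactly such a streaming congruence-closure word automaton $\wordaut{\Aa}_\cohcorexec$ of size $O(2^{\text{poly}(|V|)})$ over the execution alphabet is built (tracking the equivalence classes of variables, partial function maps on classes, and disequalities, with memoizing and early assumes guaranteeing the bounded state is a faithful abstraction), and correctness is decided by intersecting with the regular set of executions. Your version merely dualizes this (accepting bad rather than correct executions); the one small point to tidy is that a class may be describable as several different function applications after equality assumes, so it should carry a set of $f(\text{class tuple})$ descriptors (equivalently, the partial maps of~\cite{MMV19}) rather than a single top symbol with argument pointers.
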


\begin{theorem}[\cite{MMV19}]
  \thmlabel{coherent-check} The problem of checking coherence, i.e.
  checking if a given uninterpreted program is coherent, is decidable.
\end{theorem}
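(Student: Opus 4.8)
The plan is to reduce coherence checking to an emptiness test for finite automata over the execution alphabet $\Pi_V$. Recall that for a fixed program $p$ the set $\exec(p)$ of complete executions is a regular language over $\Pi_V$. Hence it suffices to build a finite automaton $\Bb_{\mathrm{bad}}$ over $\Pi_V$ that accepts exactly the \emph{non-coherent} executions, i.e.\ those having a prefix violating the \textbf{Memoizing} or the \textbf{Early Assumes} condition; then $p$ is coherent if and only if $L(\exec(p)) \cap L(\Bb_{\mathrm{bad}}) = \emptyset$, which is decidable by standard automata constructions.

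The heart of the construction is to show that non-coherence can be detected by a machine with bounded memory, even though executions manipulate ground terms of unbounded size. The automaton maintains, as it reads a prefix $\rho$, a \emph{congruence-closure data structure}: a finite labelled DAG whose nodes are the equivalence classes, modulo the accumulated assumptions $\alpha(\rho)$, of the terms currently ``relevant'' --- essentially the terms $\comp(\rho,x)$ for $x \in V$ together with the function applications introduced along the way and not yet discarded --- each node annotated with the set of program variables mapped to it and with its outermost function symbol and ordered tuple of child nodes. The transitions are the obvious congruence-closure updates: on $x \passign y$ we repoint the variable label of $x$; on $x \passign f(\vec z)$ we look for an existing node carrying $f$ with child tuple equal to the classes of $\vec z$, reusing it if present and creating a fresh node otherwise; on $\passume(x=y)$ we union the classes of $x$ and $y$ and propagate congruence. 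Crucially, the automaton checks the coherence conditions \emph{during} these updates --- when computing $f(\vec z)$ it verifies that any existing congruent node is labelled by some variable (Memoizing), and when merging classes it verifies that every relevant superterm of the merged classes is labelled by some variable (Early Assumes) --- and on the first failure it moves to an accepting sink.

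The main obstacle is establishing that this finite abstraction is \emph{faithful}: that along any prefix that is coherent so far, (i)~the number of relevant classes, and hence the data structure, stays bounded by a function of $|V|$ only, and (ii)~the data structure correctly represents $\congcl{\alpha(\rho)}$ restricted to the relevant terms, so that the checks above coincide with the definitions of Memoizing and Early Assumes. Both claims follow by induction on $|\rho|$, and this is precisely where the two coherence conditions earn their keep: Memoizing guarantees that a function application never resurrects a term no longer held by a variable, so classes that lose all their variable labels may be garbage-collected and the working set stays of size $O(|V|)$; Early Assumes guarantees that a merge never forces us to resume tracking a superterm we had discarded. There is an apparent circularity --- we invoke coherence to bound the abstraction, yet we are trying to decide coherence --- but it is harmless, since faithfulness is only needed on the coherent part of an execution, i.e.\ up to and including the step at which the first violation occurs; by minimality that step is reached from a faithful state, so the violation is detected. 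With $\Bb_{\mathrm{bad}}$ in hand (its state space is exponential in $|V|$), intersecting it with the automaton for $\exec(p)$ and testing emptiness yields the decision procedure; this is, in essence, the same streaming congruence-closure machinery that underlies the verification algorithm of \thmref{coherent-ver}.
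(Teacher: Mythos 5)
Your proposal is correct and follows essentially the same route as the one this paper attributes to~\cite{MMV19}: the paper imports the theorem without proof, but describes exactly this technique --- a word automaton of size $O(2^{\text{poly}(|V|)})$ that maintains congruence closure in a streaming, bounded-memory fashion over the $O(|V|)$ ``relevant'' terms, flags the first violation of the memoizing or early-assumes condition, and is intersected with the regular language $\exec(p)$ for an emptiness test. Your observation that the apparent circularity (using coherence to bound the abstraction while deciding coherence) is harmless because faithfulness is only needed up to the first violation is precisely the key point that makes the construction work.
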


The techniques used in~\cite{MMV19} are automata theoretic. They allow
us to construct an automaton $\wordaut{\Aa}_\cohcorexec$\footnote{We
  use superscripts `$\word$' and `$\tree$' for word and tree automata,
  respectively.}, of size $O(2^{\text{poly}(|V|)})$, which accepts all
coherent executions that are also correct.

To give some intuition for the notion of coherence,
we illustrate simple example programs that are not coherent. Consider program $p_\ctr$ below, which is not coherent because it fails to be memoizing.\\
\centerline{$
  p_\ctr \quad \delequal \quad \cd{x} \passign \cd{f(y)};\, \cd{x}
  \passign \cd{f(x)};\, \cd{z} \passign \cd{f(y)}
$
}
The first and third statements compute $f(\init{y})$, storing it in
variables $x$ and $z$, respectively, but the term is \emph{dropped}
after the second statement and hence is not contained in any program variable
\incctr when the third statement executes.  Next consider program
$p_\ctr$, which is not coherent because it fails to have early
assumes.  \centerline{$
  p_\ctr \quad \delequal \quad \cd{x} \passign \cd{f(w)};\, \cd{x}
  \passign \cd{f(x)}; \cd{y} \passign \cd{f(z)};\, \cd{y} \passign
  \cd{f(y)};\, \passume\cd{(w = z)}
  $ } Indeed, the assume statement is not early because superterms of
$w$ and $z$, namely $f(\init{w})$ and $f(\init{z})$, were computed and
subsequently dropped before the assume.

Intuitively, the coherence conditions are necessary to allow equality
information to be tracked with finite memory. We can make this stark
by tweaking the example for $p_1$ above as follows.
\vspace{-0.1in}
\begin{align*}
  p'_\ctr \quad \delequal \quad &\cd{x} \passign \cd{f(w)};\,\underbrace{\cd{x}\passign\cd{f(x)}\cdots\cd{x}\passign\cd{f(x)}}_{n\,\, \text{times}}; \\ &\cd{y} \passign \cd{f(z)};\,\underbrace{\cd{y}\passign\cd{f(y)}\cdots\cd{y}\passign\cd{f(y)}}_{n\,\, \text{times}};\, \passume\cd{(w = z)}
\end{align*}
Observe that, for large $n$ (e.g. $n>100$), many terms are computed
and dropped by this program, like $f^{42}(\init{x})$ and
$f^{99}(\init{y})$ for instance. The difficulty with this program,
from a verification perspective, is that the assume statement entails
equalities between many terms which have not been kept track
of. Imagine trying to verify the following program
\begin{align*}
  p_2 \quad \delequal \quad p'_1;\, \passert\cd{(x = y)}
\end{align*}
Let $\pi_{p'_1}\in \exec(p'_1)$ be the unique complete execution of
$p'_1$. If we examine the details, we see that
$t_x = \comp(\pi_{p'_1},x) = f^{101}(\init{w})$ and
$t_y = \comp(\pi_{p'_1},y) = f^{101}(\init{z})$. The assertion indeed
holds because $t_x \cong_{\{(\init{w},\init{z})\}} t_y$. However, to
keep track of this fact requires remembering an arbitrary number of
terms that grows with the size of the program. Finally, we note that
the coherence restriction is met by many single-pass algorithms,
e.g. searching and manipulation of lists and trees.


\subsection{Overview of the Synthesis Procedure}
\seclabel{tree-aut}

Our synthesis procedure uses tree automata.  We consider tree
representations of programs, or \emph{program trees}. The synthesis
problem is thus to check if there is a program tree whose
corresponding program is coherent, correct, and belongs to the input
grammar $\gram$.

The synthesis procedure works as follows.  We first construct a
top-down tree automaton $\treeaut{\Aa}_\gram$ that accepts the set of
trees corresponding to the programs generated by $\gram$.  We next
construct another tree automaton $\treeaut{\Aa}_\cohcor$, which
accepts all trees corresponding to programs that are \underline{\sf
  c}oherent and \underline{\sf c}orrect.  $\treeaut{\Aa}_\cohcor$ is a
two-way alternating tree automaton that simulates all executions of an
input program tree and checks that each is both correct and
coherent. In order to simulate longer and longer executions arising
from constructs like $\pwhile$-loops, the automaton traverses the
input tree and performs multiple passes over subtrees, visiting the
internal nodes of the tree many times. We then translate the two-way
alternating tree automaton to an equivalent (one-way) nondeterministic
top-down tree automaton by adapting results
from~\cite{vardikupferman2000,Vardi98} to our setting. Finally, we
check emptiness of the intersection between this top-down automaton
and the grammar automaton $\treeaut{\Aa}_\gram$. The definitions for
trees and the relevant automata are standard, and we refer the reader
to~\cite{tata2007} and to our technical
report~\cite{techreport}. 

\subsection{Tree Automaton for Program Trees}
\seclabel{grammar-to-aut}

Every program can be represented as a tree whose leaves are labeled with basic
statements like $\dblqt{x \passign y}$ and whose internal nodes are labeled with
constructs like $\pwhile$ and $\pseq$ (an alias for the sequencing construct `$\textbf{;}$'),
which have sub-programs as children. Essentially, we represent the
set of programs generated by an input grammar $\gram$ as a regular set
of program trees, accepted by a nondeterministic top-down tree
automaton $\treeaut{A}_\gram$.  The construction of
$\treeaut{A}_\gram$ mimics the standard construction for tree automata
that accept \emph{parse trees} of context free grammars. The
formalization of this intuition is straightforward, and we refer the
reader to~\cite{techreport} for details. 
We note the following fact regarding the construction of the acceptor
of program trees from a particular grammar $\gram$.
\begin{lemma}
  $\treeaut{A}_\gram$ has size $O(|\gram|)$ and can be constructed in
  time $O(|\gram|)$.\qed
\end{lemma}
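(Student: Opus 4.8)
The plan is to give the standard translation from a context-free grammar to a (nondeterministic, top-down) tree automaton over parse trees, specialized to the fact that $\gram \subseteq \schema$ and hence every production of $\gram$ introduces exactly one program construct. First I would fix the ranked alphabet of program trees: a nullary symbol for each atomic-statement shape occurring in $\gram$ ($\dblqt{x \passign y}$, $\dblqt{x \passign f(\vec{z})}$, $\dblqt{\passume(x \sim y)}$, $\dblqt{\passert(\pfalse)}$, $\dblqt{\pskip}$), a unary symbol $\pwhile_{(x \sim y)}$ for each loop guard $(x \sim y)$ used in $\gram$, a binary symbol $\pite_{(x \sim y)}$ for each conditional guard, and one binary symbol $\pseq$ for composition. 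Since these labels only need to range over the finitely many variables, function symbols, and guards that actually appear in $\gram$, the alphabet is finite and of size $O(|\gram|)$.

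Next I would define $\treeaut{A}_\gram$ to have state set equal to the set of nonterminals of $\schemant$ occurring in $\gram$, initial state the start symbol of $\gram$, and exactly one top-down transition rule per production of $\gram$, read off syntactically: $\dblqt{P \produces P_1; P_2}$ yields the rule sending, from state $P$ on a $\pseq$-node, state $P_1$ to the left child and $P_2$ to the right child; $\dblqt{P \produces \pwhile\,(x \sim y)\,P_1}$ yields the rule sending $P_1$ to the unique child of a $\pwhile_{(x \sim y)}$-node; $\dblqt{P \produces \pif\,(x \sim y)\,\pthen\,P_1\,\pelse\,P_2}$ sends $P_1,P_2$ to the two children of a $\pite_{(x \sim y)}$-node; and each atomic production $\dblqt{P \produces s}$ yields a leaf rule accepting the nullary symbol $s$ in state $P$. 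An accepting run is one in which every leaf is consumed by a leaf rule.

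Correctness is then a routine two-directional induction. A partial run of $\treeaut{A}_\gram$ rooted at a node labeled by state $P$ exists iff the subtree at that node is the program tree of some $p'$ with $P \Rightarrow^*_\gram p'$; I would prove the two directions by structural induction on the tree and on the $\gram$-derivation, respectively, using that the schema form guarantees the construct at the root of a subtree uniquely determines which production was applied at the top of the corresponding derivation (so no ambiguity arises in matching derivation steps to transition rules). Taking $P$ to be the start symbol gives $L(\treeaut{A}_\gram)$ = \{program trees of $p \in L(\gram)$\}.

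For the bound: the number of states is at most the number of nonterminals of $\gram$, which is $O(|\gram|)$; there is exactly one transition rule per production, and each rule has size bounded by the size of the production generating it (its guard, its one or two target nonterminals, or its atomic statement); hence the total size of $\treeaut{A}_\gram$ is $O(|\gram|)$, and the construction is a single linear scan over the productions, running in time $O(|\gram|)$. There is no genuine obstacle here; the only point requiring a little care is that the schema restricts every production to introduce a single construct (ruling out ``counting'' productions such as the disallowed grammar of \figref{disallowed-grammar}), which is exactly what makes the production-to-rule correspondence a bijection and avoids any blow-up from binarizing productions or introducing fresh auxiliary nonterminals.
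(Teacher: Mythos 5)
Your proposal is correct and follows essentially the same route as the paper: the appendix construction of $\treeaut{A}_\gram$ is exactly the standard parse-tree automaton for a context-free grammar, with states given by the nonterminals of $\gram$ (plus one extra initial state $q_0$ reading a distinguished $\dblqt{\proot}$ label, a cosmetic difference from your version), one transition per production read off syntactically, and the linear size/time bound following from this one-to-one correspondence. The paper leaves the correctness induction implicit as "straightforward," which your sketch fills in along the expected lines.
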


\subsection{Tree Automaton for Simulating Executions}
\seclabel{two-way-simulation}

We now discuss the construction of the two-way alternating tree
automaton $\treeaut{\Aa}_\cohcor$ that underlies our synthesis
procedure. A two-way alternating tree automaton consists of a finite
set of states and a transition function that maps tuples $(q, m, a)$
of state, incoming direction, and node labels to positive Boolean formulas
over pairs $(q',m')$ of next state and next direction. In the case of
our binary program trees, incoming directions come from
$\set{\ddir,\uldir,\urdir}$, corresponding to coming down from a
parent, and up from left and right children. Next directions come from
$\set{\udir,\ldir,\rdir}$, corresponding to going up to a parent, and
down to left and right children.

The automaton $\treeaut{\Aa}_\cohcor$ is designed to accept the set of
all program trees that correspond to correct and coherent
programs. This is achieved by ensuring that a program tree is accepted
precisely when all executions of the program it represents are
accepted by the word automaton $\wordaut{\Aa}_\cohcorexec$
(\secref{coherence}). The basic idea behind $\treeaut{\Aa}_\cohcor$ is
as follows. Given a program tree $T$ as input, $\treeaut{\Aa}_\cohcor$
traverses $T$ and explores all the executions of the associated
program.  For each execution $\sigma$, $\treeaut{\Aa}_\cohcor$ keeps
track of the state that the word automaton $\wordaut{\Aa}_\cohcorexec$
would reach after reading $\sigma$. Intuitively, an accepting run of
$\treeaut{\Aa}_\cohcor$ is one which never visits the unique rejecting
state of $\wordaut{\Aa}_\cohcorexec$ during simulation.

We now give the formal description of
$\treeaut{\Aa}_\cohcor = (Q^\cohcor, I^\cohcor, 
\delta^\cohcor_0, \delta^\cohcor_1, \delta^\cohcor_2 )$, which works
over the alphabet $\Gamma_V$ described in~\secref{grammar-to-aut}.

\noindent \paragraph{\textbf{States}}
Both the full set of states and the initial set of states for
$\treeaut{\Aa}_\cohcor$ coincide with those of the word automaton
$\wordaut{\Aa}_\cohcorexec$.  That is, $Q^\cohcor = Q^\cohcorexec$ and
$I^\cohcor = \set{q^\cohcorexec_0}$, where $q^\cohcorexec_0$ is the
unique starting state of $\wordaut{\Aa}_\cohcorexec$.

\vspace{-0.07in}
\noindent\paragraph{\textbf{Transitions}}


For intuition, consider the case when the automaton's control is in
state $q$ reading an internal tree node $n$ with one child and which
is labeled by $a = \dblqt{\pwhile(x = y)}$. In the next step, the
automaton simultaneously performs two transitions corresponding to two
possibilities: entering the loop after assuming the guard
$\dblqt{x=y}$ to be true and exiting the loop with the guard being
false.  In the first of these simultaneous transitions, the automaton
moves to the left child $n{\cdot}\ldir$, and its state changes to
$q_1'$, where $q'_1 = \delta^\cohcorexec(q, \dblqt{\passume(x=y)})$.
In the second simultaneous transition, the automaton moves to the
parent node $n{\cdot}{\udir}$ (searching for the next statement to
execute, which follows the end of the loop) and changes its state to
$q'_2$, where
$q'_2 = \delta^\cohcorexec(q, \dblqt{\passume(x\neq y)})$.  We encode
these two possibilities as a \emph{conjunctive} transition of the
two-way alternating automaton. That is,
$\delta_1^\cohcor(q, m, a) = \big((q'_1, \ldir) \land (q'_2, \udir)
\big)$.

For every $i, m, a$, we have $\delta_i(\reject, m, a) = \pfalse$,
where $\reject$ is the unique, absorbing rejecting state of
$\wordaut{\Aa}_\cohcorexec$.  Below we describe the transitions from
all other states $q \neq \reject$.  All transitions
$\delta_i(q, m, a)$ not described below are $\pfalse$.

\paragraph{\textbf{Transitions from the root.}} At the root
node, labeled by $\dblqt{\proot}$, the automaton transitions as
follows:
\vspace{-0.15in}
  \begin{align*}
    \begin{aligned} \delta_1^\cohcor(q, m, \proot) =
      \begin{cases} (q, \ldir) \text{ if } m = \ddir \\ \texttt{true}
\,\, \text{ otherwise }
      \end{cases}
    \end{aligned}
  \end{align*}
  A two-way tree automaton starts in the configuration where $m$ is
  set to $\ddir$.  This means that in the very first step the
  automaton moves to the child node (direction $\ldir$).  If the
  automaton visits the root node in a subsequent step (marking the
  completion of an execution), then all transitions are enabled.

\paragraph{\textbf{Transitions from leaf nodes.}} For a leaf node with
label $a \in \Gamma_0$ and state $q$, the transition of the automaton
is $\delta_0^\cohcor(q, \ddir, a) = (\delta^\cohcorexec(q,a), \udir)$.
That is, when the automaton visits a leaf node from the parent, it
simulates reading $a$ in $\wordaut{\Aa}_\cohcorexec$ and moves to the
resulting state in the parent node.

\paragraph{\textbf{Transitions from ${\normalfont \dblqt{\pwhile}}$
    nodes.}}  As described earlier, when reading a node labeled by
$\dblqt{\pwhile(x\sim y)}$, where $\sim\, \in \set{=, \neq}$, the
automaton simulates both the possibility of entering the loop body as
well as the possibility of exiting the loop. This corresponds to a
conjunctive transition: \vspace{-0.05in}
\begin{align*}
  \delta_1^\cohcor(q, m, \dblqt{\pwhile(x\sim y)}) &= (q', \ldir \big) \land \big(q'', \udir)\\
  \textit{where~} q' &= \delta^\cohcorexec(q, \dblqt{\passume(x \sim y)})\\
  \textit{and~} q'' &= \delta^\cohcorexec(q, \dblqt{\passume(x \not\sim y)})
\end{align*}
Above, $\not\sim$ refers to $\dblqt{=}$ when $\sim$ is $\dblqt{\neq}$,
and vice versa. The first conjunct corresponds to the execution where
the program enters the loop body (assuming the guard is true), and
thus control moves to the left child of the current node, which
corresponds to the loop body. The second conjunct corresponds to the
execution where the loop guard is false and the automaton moves to the
parent of the current tree node.  Notice that, in both the conjuncts
above, the direction in which the tree automaton moves does not depend
on the last move $m$ of the state.  That is, no matter how the program
arrives at a $\pwhile$ statement, the automaton simulates both the
possibilities of entering or exiting the loop body.

\paragraph{\textbf{Transitions from
    ${\normalfont \dblqt{\pite}}$ nodes.}}  At a node labeled
$\dblqt{\pite(x \sim y)}$, when coming down the tree from the
parent, the automaton simulates both branches of the conditional:
\vspace{-0.1in}
\begin{align*}
  \begin{array}{rl}
    \delta_2^\cohcor(q, D, \dblqt{\pite(x \sim y)}) &=
                                                      (q', \ldir)
                                                      \land
                                                      (q'', \rdir) \\
    \textit{where~} q' &= \delta^\cohcorexec(q, \dblqt{\passume(x \sim y)}) \\
    \textit{and~} q'' &= \delta^\cohcorexec(q, \dblqt{\passume(x \not\sim y)})
  \end{array}
\end{align*}
The first conjunct in the transition corresponds to simulating the
word automaton on the condition $x \sim y$ and moving to the left
child, i.e. the body of the $\pthen$ branch.  Similarly, the second
conjunct corresponds to simulating the word automaton on the
negation of the condition and moving to the right child, i.e. the
body of the $\pelse$ branch.

\medskip Now consider the case when the automaton moves \emph{up} to
an $\pite$ node from a child node. In this case, the automaton moves
up to the parent node (having completed simulation of the $\pthen$ or
$\pelse$ branch) and the state $q$ remains unchanged: \vspace{-0.1in}
\begin{align*}
  \begin{array}{rclcc}
    \delta_2^\cohcor(q, m, \dblqt{\pite(x \sim y)}) & = &(q, \udir) & \quad& \quad m \in \set{\uldir, \urdir} \\
  \end{array}
\end{align*}

\paragraph{\textbf{Transitions from
    ${\normalfont \dblqt{\pseq}}$ nodes.}} In this case, the
automaton moves either to the left child, the right child, or to the
parent, depending on the last move.  It does not change the state
component.  Formally,
\begin{align*}
  \begin{aligned}
    \delta_2^\cohcor(q, m, \dblqt{\pseq}) =
    \begin{cases}
      (q, \ldir) \text{ if } m = \ddir \\
      (q, \rdir) \text{ if } m = \uldir \\
      (q, \udir) \text{ if } m = \urdir
    \end{cases}
  \end{aligned}
\end{align*}
The above transitions match the straightforward semantics of
sequencing two statements $s_1; s_2$.  If the automaton visits from
the parent node, it next moves to the left child to simulate $s_1$.
When it finishes simulating $s_1$, it comes up from the left child and
enters the right child to begin simulating $s_2$.  Finally, when
simulation of $s_2$ is complete, the automaton moves to the parent
node, exiting the subtree.

The following lemma asserts the correctness of the
automaton construction and states its complexity.
\begin{lemma}
  \lemlabel{tree-aut-sim-correct} $\treeaut{\Aa}_\cohcor$ accepts the
  set of all program trees corresponding to correct, coherent
  programs. 
  It has size $|\treeaut{A}_\cohcor| = O(2^{\emph{poly}(|V|)})$, and
  can be constructed in $O(2^{\emph{poly}(|V|)})$ time.\qed
\end{lemma}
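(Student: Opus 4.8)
The plan is to establish \lemref{tree-aut-sim-correct} in two parts: the correctness claim (that $\treeaut{\Aa}_\cohcor$ accepts exactly the program trees of correct, coherent programs) and the complexity claim. For correctness, the key is to relate runs of the two-way alternating tree automaton on a program tree $T$ to the executions of the program $p_T$ it encodes. I would first argue that there is a natural correspondence between the paths an accepting run of $\treeaut{\Aa}_\cohcor$ takes through $T$ and the complete executions $\pi \in \exec(p_T)$: the transitions at $\dblqt{\pseq}$, $\dblqt{\pite}$, $\dblqt{\pwhile}$, and leaf nodes were designed so that a ``thread'' of the alternating run that walks from the root, through the tree, and back to the root traces out exactly one complete execution, while simulating $\wordaut{\Aa}_\cohcorexec$ on that execution in the state component. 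The conjunctive (alternating) transitions at $\dblqt{\pite}$ and $\dblqt{\pwhile}$ nodes are precisely what forces the run to cover \emph{all} executions simultaneously. This requires an inductive claim over subtrees: for a subtree rooted at node $n$ representing subprogram $q$, entering $n$ from the parent in word-automaton state $s$, the sub-run restricted to that subtree accepts iff for every execution $\rho$ of $q$, the run of $\wordaut{\Aa}_\cohcorexec$ from $s$ on $\rho$ never hits $\reject$; and on exiting back to the parent the state is $\delta^\cohcorexec(s,\rho)$ for the corresponding $\rho$. The base case is a leaf, and the inductive cases follow the four transition families above. Since $\wordaut{\Aa}_\cohcorexec$ accepts exactly the coherent \emph{and} correct executions, and a program is correct-and-coherent iff all its complete executions are, this yields the correctness statement. (One should also note the treatment of partial executions: since every prefix of a complete execution corresponds to an intermediate configuration of the run, checking all complete executions suffices, as $\reject$ is absorbing.)

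For the complexity bound, I would simply read it off the construction: by \thmref{coherent-ver} and the remark following it, $\wordaut{\Aa}_\cohcorexec$ has $O(2^{\text{poly}(|V|)})$ states, and since $Q^\cohcor = Q^\cohcorexec$ and $I^\cohcor = \{q^\cohcorexec_0\}$, the state set of $\treeaut{\Aa}_\cohcor$ has the same size. The alphabet $\Gamma_V$ has size polynomial in $|V|$ (labels are basic statements, $\pseq$, $\pite(x\sim y)$, $\pwhile(x\sim y)$, $\proot$), and directions range over a fixed three-element set, so the transition function has $O(2^{\text{poly}(|V|)})$ entries, each a positive Boolean formula of constant size (at most a binary conjunction over pairs $(q',m')$). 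Hence $|\treeaut{\Aa}_\cohcor| = O(2^{\text{poly}(|V|)})$ and it is constructible in time $O(2^{\text{poly}(|V|)})$, given the construction of $\wordaut{\Aa}_\cohcorexec$ from \cite{MMV19}.

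I expect the main obstacle to be making the run-to-execution correspondence fully rigorous, in particular handling the two-way, alternating nature of the automaton cleanly. A run of a two-way alternating automaton is a tree (of automaton configurations), not a single path, so the statement ``threads of the run correspond to executions'' needs care: I would formalize it by defining, for each maximal path in the run tree that starts and ends at the root node of $T$, the induced word over $\Pi_V$, and proving this is a bijection (up to reordering of sibling branches) with $\exec(p_T)$. A subtle point is that $\pwhile$ subtrees are visited arbitrarily many times within a single execution, so the inductive invariant over subtrees must be phrased in terms of ``entering state $s$, exiting state $s'$'' pairs rather than a single pass; the conjunctive $\pwhile$ transition $(q',\ldir)\land(q'',\udir)$ is what lets a single entry spawn both ``one more iteration'' and ``exit'' obligations, and one must check this exactly matches the regular language $\exec(p)$ unfolding of the loop. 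Once that bookkeeping is set up, the leaf and $\pseq$/$\pite$ cases are routine, and the coherence/correctness conclusion is immediate from the defining property of $\wordaut{\Aa}_\cohcorexec$.
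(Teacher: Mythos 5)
Your proposal is correct and follows essentially the same route as the paper: the paper's own justification for this lemma is the construction itself (Section 5.4), with the detailed run-to-execution correspondence deferred to the technical report, and your inductive invariant over subtrees phrased in entering/exiting state pairs, together with reading the size bound off $Q^\cohcor = Q^\cohcorexec$, is the natural way to flesh that out. The subtleties you flag (multiple passes through $\pwhile$ subtrees, partial executions being handled because $\reject$ is absorbing and the acceptance condition is a safety condition on the run DAG) are exactly the right ones.
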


\subsection{Synthesis Procedure}
\seclabel{overall-algo}

The rest of the synthesis procedure goes as follows.  We first
construct a nondeterministic \underline{\sf t}op-\underline{\sf d}own tree automaton
$\treeaut{\Aa}_{\cohcortopdown}$ such that
$L(\treeaut{\Aa}_{\cohcortopdown}) = L(\treeaut{\Aa}_\cohcor)$.  An
adaptation of results from~\cite{vardikupferman2000,Vardi98} ensures
that $\treeaut{\Aa}_{\cohcortopdown}$ has size
$|\treeaut{\Aa}_{\cohcortopdown}| = O(2^{2^{\text{poly}(|V|)}})$ and
can be constructed in time $O(2^{2^{\text{poly}(|V|)}})$.  Next we
construct a top-down nondeterministic tree automaton $\treeaut{\Aa}$
such that
$L(\treeaut{\Aa}) = L(\treeaut{\Aa}_{\cohcortopdown}) \cap
L(\treeaut{\Aa}_\gram) = L(\treeaut{\Aa}_\cohcor) \cap
L(\treeaut{\Aa}_\gram)$, with size
$|\treeaut{\Aa}| = O(2^{2^{\text{poly}(|V|)}}\cdot|\gram|)$ and in
time
$O(|\treeaut{\Aa}_{\cohcortopdown}|\cdot|\treeaut{\Aa}_\gram|) =
O(2^{2^{\text{poly}(|V|)}}\cdot|\gram|)$.  Finally, checking emptiness
of $\treeaut{\Aa}$ can be done in time $\ordernotation$. If non-empty,
a program tree can be constructed.

This gives us the central upper bound result of the paper.
\begin{theorem}
  \thmlabel{coh-prog-synth-dec} The grammar-restricted synthesis
  problem for uninterpreted coherent programs is decidable in
  $\twoexptime$, and in particular, in time doubly exponential in the
  number of variables and \complexitygrammar~in the size of the input
  grammar.  Furthermore, a tree automaton representing the set of
  \emph{all} correct coherent programs that conform to the grammar can
  be constructed in the same time. \qed
\end{theorem}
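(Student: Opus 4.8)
The plan is to assemble the pieces developed in Sections~\ref{sec:grammar-to-aut}--\ref{sec:two-way-simulation} into a pipeline of automata constructions and to carefully account for the size blow-up and running time at each stage. First I would invoke \lemref{tree-aut-sim-correct} to obtain the two-way alternating tree automaton $\treeaut{\Aa}_\cohcor$, which accepts exactly the program trees of correct, coherent programs and has size $O(2^{\text{poly}(|V|)})$. The key technical step is to convert this two-way alternating automaton into an equivalent one-way nondeterministic top-down tree automaton $\treeaut{\Aa}_\cohcortopdown$. For this I would adapt the classical simulation results of~\cite{vardikupferman2000,Vardi98}: a two-way alternating tree automaton with $n$ states can be converted to a one-way nondeterministic automaton of size $2^{O(n^2)}$ (or similar), which applied to $n = 2^{\text{poly}(|V|)}$ yields size $O(2^{2^{\text{poly}(|V|)}})$ and an equal bound on the construction time. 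I would note that although those references are stated for infinite trees and various acceptance conditions, the program trees here are finite, and the simpler finite-tree setting only makes the argument easier; the technical report~\cite{techreport} contains the routine adaptation.

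Next I would take the product of $\treeaut{\Aa}_\cohcortopdown$ with the grammar automaton $\treeaut{\Aa}_\gram$ from \secref{grammar-to-aut}, which by the preceding lemma has size $O(|\gram|)$. Since both are nondeterministic top-down tree automata over the same alphabet $\Gamma_V$, their standard product construction gives a top-down nondeterministic tree automaton $\treeaut{\Aa}$ with $L(\treeaut{\Aa}) = L(\treeaut{\Aa}_\cohcortopdown) \cap L(\treeaut{\Aa}_\gram) = L(\treeaut{\Aa}_\cohcor) \cap L(\treeaut{\Aa}_\gram)$; that is, $\treeaut{\Aa}$ accepts precisely the program trees of programs that are simultaneously correct, coherent, and generated by $\gram$. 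The size is $O(|\treeaut{\Aa}_\cohcortopdown| \cdot |\treeaut{\Aa}_\gram|) = O(2^{2^{\text{poly}(|V|)}} \cdot |\gram|)$, and the construction takes proportional time. Emptiness of a nondeterministic top-down tree automaton over finite trees is decidable in time linear (or at worst polynomial) in the size of the automaton by the usual bottom-up marking of productive states, giving the bound $\ordernotation$; if $\treeaut{\Aa}$ is nonempty, the same marking procedure extracts an accepted program tree, and hence a concrete correct coherent program in $L(\gram)$, which settles realizability and synthesis together. Finally, since $\treeaut{\Aa}$ itself is a finite representation of the set of \emph{all} program trees of correct coherent programs conforming to $\gram$, the last sentence of the theorem follows with no extra work.

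The only step that requires genuine care rather than bookkeeping is the two-way-alternating to one-way-nondeterministic translation, both in getting the single-exponential blow-up in the number of states right (so that the overall bound is doubly exponential in $|V|$ and not worse) and in checking that the direction-annotated transition structure of $\treeaut{\Aa}_\cohcor$ --- with incoming directions in $\set{\ddir,\uldir,\urdir}$ and next directions in $\set{\udir,\ldir,\rdir}$ --- fits the format of the cited results. Everything else (the product construction, finite-tree emptiness, witness extraction) is entirely standard, so I would keep the main-text argument at the level of citing the size bounds and defer the detailed simulation to~\cite{techreport}. The $\twoexptime$ upper bound, \complexitygrammar\ dependence on $\abs{\gram}$, and the ``all solutions'' claim then all drop out of the size and time accounting above.
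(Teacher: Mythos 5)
Your proposal is correct and follows essentially the same route as the paper: construct $\treeaut{\Aa}_\cohcor$, convert it to a nondeterministic top-down automaton via an adaptation of the results of Vardi et al.\ (with the doubly-exponential size bound), intersect with $\treeaut{\Aa}_\gram$, and check emptiness, with the product automaton itself serving as the representation of all solutions. The one step you flag as needing care --- the two-way alternating to one-way nondeterministic translation --- is exactly the step the paper also defers to its appendix, where it is handled via memoryless strategy annotations for the induced parity game.
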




\subsection{Matching Lower Bound}
\subseclabel{lower-bound}

Our synthesis procedure is optimal. We prove a $\twoexptime$ lower
bound for the synthesis problem by reduction from the
$\twoexptime$-hard\textbf{} acceptance problem of \emph{alternating}
Turing machines (ATMs) with exponential space
bound~\cite{Chandra1981}.
Full details of the reduction can be found in~\cite{techreport}.

\begin{theorem}
  \thmlabel{lower-bound} The grammar-restricted synthesis problem for
  coherent uninterpreted programs is $\twoexptime$-hard.
\end{theorem}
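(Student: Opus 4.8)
The plan is to reduce from the word problem for alternating Turing machines running in exponential space, which is $\twoexptime$-hard by~\cite{Chandra1981}. Given an ATM $M$ and input $w$, we construct in polynomial time a grammar $\gram_{M,w}$ (conforming to the schema $\schema$) such that $M$ accepts $w$ if and only if there is a correct coherent program $p \in L(\gram_{M,w})$. The program we are synthesizing will encode a winning strategy for the existential player (Eve): its tree shape mirrors the alternating computation tree, with existential configurations contributing a single $\pif{-}\pthen{-}\pelse$ branch chosen by the synthesizer and universal configurations contributing a branch on an uninterpreted function value (so that \emph{all} of Adam's moves are quantified over, since correctness must hold in all data models). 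Assertions in the grammar enforce that each simulated transition is legal and that every leaf configuration is accepting; since the grammar forbids $\pwhile$ over the strategy structure, all plays are finite.

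The core difficulty, and the place where most of the work goes, is representing an exponential-size tape configuration using only polynomially many program variables and a polynomial-size grammar. We cannot afford one variable per tape cell. The idea is to have the synthesizer emit the symbols of a configuration one cell at a time inside a loop indexed by a binary counter held in $O(\log)$-many Boolean-encoded variables. To check that the emitted sequence is a legal successor configuration, the grammar \emph{nondeterministically} fixes (via an uninterpreted function read whose value the synthesizer cannot inspect) a single cell index to track, remembers the symbol the synthesizer claimed for that cell and its two neighbors in the previous round, and asserts that the new symbol at that cell is consistent with the ATM's local transition rule. Because the tracked cell differs across data models, a program can be correct only if it emits locally consistent symbols at \emph{every} cell — this is exactly where the incomplete-information phenomenon from Examples~\ref{ex:crypto-ex} and~\ref{ex:incomplete-info-ex} is exploited, so care must be taken that no syntactic path lets the synthesized code observe which cell is tracked or which branch of history it is on. Unbounded strategy memory is obtained ``for free'' by branching in the grammar at each move point, so distinct histories are synthesized in distinct subprograms and the program counter serves as memory; a complete correct program thus has a nested conditional structure resembling the accepting computation tree.

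The remaining steps are: (i) verify that every program in $L(\gram_{M,w})$ is coherent — this requires checking the memoizing and early-assumes conditions against the specific term-manipulation pattern used for the counter loop and the cell-tracking gadget, and is the most error-prone routine part; (ii) prove soundness, that a correct coherent $p \in L(\gram_{M,w})$ yields a winning strategy for Eve, by reading off her moves from the branch choices in $p$ and arguing that the assertions guarantee legal transitions and accepting leaves along every Adam-play; (iii) prove completeness, that a winning strategy for Eve can be compiled into a correct program conforming to $\gram_{M,w}$, expanding the strategy tree into nested conditionals and filling the cell-emission loops with the strategy's configurations; and (iv) confirm $|\gram_{M,w}|$ is polynomial in $|M| + |w|$. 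Combining soundness and completeness with the polynomial bound on the reduction gives $\twoexptime$-hardness. I expect step (i), establishing coherence of the gadget, together with the bookkeeping in step (ii)/(iii) that ties branch choices to ATM moves without leaking the tracked cell, to be the main obstacle; the automata-theoretic upper bound of Theorem~\ref{thm:coh-prog-synth-dec} then yields $\twoexptime$-completeness.
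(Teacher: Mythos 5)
Your proposal follows essentially the same route as the paper's proof: a reduction from exponential-space ATM acceptance in which correct programs encode winning strategies for Eve, with Adam's moves read from an uninterpreted function, configurations emitted cell-by-cell in a counter-indexed loop, legality checked on a single secretly tracked cell (so that correctness over all data models forces correctness at every cell), and strategy memory obtained by branching so the program counter records the move history. The only cosmetic difference is that the paper realizes the secret tracked index via assume-constrained initial values of polynomially many bit variables rather than an uninterpreted function read, which is an interchangeable detail.
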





\section{Further Results}
\seclabel{further-results}

In this section, we give results for variants of uninterpreted program
synthesis in terms of transition systems, Boolean programs, and
recursive programs.

\subsection{Synthesizing Transition Systems}
\subseclabel{trans-sys}

Here, rather than synthesizing programs from grammars, we consider
instead the synthesis of transition systems whose executions must
belong to a regular set. Our main result is that the synthesis problem
in this case is $\exptime$-complete, in contrast to grammar-restricted
program synthesis which is $\twoexptime$-complete.


\subsubsection{Transition System Definition and Semantics.}
Let us fix a set of program variables $V$ as before.
We consider the following finite alphabet
\begin{align*}
\Sigma_V = \setpred{
  \dblqt{x \passign y},
  \dblqt{x \passign f(\vec{z})},
  \dblqt{\passert(\tsfalse)},
  \dblqt{\pcheck(x=y)}
}{
  x, y, \in V,\,\vec{z} \in V^r
}
\end{align*}
Let us define $\Gamma_V \subseteq \Sigma_V$ to be the set of all
elements of the form $\dblqt{\pcheck (x=y)}$, where $x,y \in V$.  We
refer to the elements of $\Gamma_V$ as \emph{check} letters.

A (deterministic) transition system $TS$ over $V$ is a tuple
$(Q, q_0, H, \labts, \delta)$, where $Q$ is a finite set of states,
$q_0 \in Q$ is the initial state, $H \subseteq Q$ is the set of
halting states, $\labts: Q \rightarrow \Sigma_V$ is a labeling
function such that for any $q \in Q$, if
$\labts(q) = \dblqt{\passert(\tsfalse)}$ then $q \in H$, and
$\delta: (Q \setminus H) \rightarrow Q \cup (Q \times Q)$ is a
transition function such that for any $q \in Q \setminus H$,
$\delta(q) \in Q \times Q$ iff $\labts(q) \in \Gamma_\textit{V}$.


We define the semantics of a transition system using the set of
executions that it generates.  A \emph{(partial) execution} $\pi$ of a
transition system $TS=(Q, q_0, H, \labts, \delta)$ over variables $V$
is a finite word over the induced execution alphabet $\Pi_{V}$ (from
\secref{prelim}) with the following property. If
$\pi = a_0 a_1 \ldots a_n$ with $n \geq 0$, then there exists a
sequence of states $q_{j_0}, q_{j_1}, \ldots, q_{j_n}$ with
$q_{j_0}=q_0$ such that ($0 \leq i \leq n$):
\begin{itemize}
\item If $\labts(q_{j_i})\notin\Gamma_V$ then $a_i = \labts(q_{j_i})$,
  and if $i < n$ then $q_{j_{i+1}} = \delta(q_{j_i})$.
\item Otherwise
      $
      \begin{aligned}
\begin{cases}
        \text{either } &a_i = \dblqt{\passume(x=y)} \text{ and }
        i<n\Rightarrow q_{j_{i+1}} =
        \delta(q_{j_i})\downharpoonright_1,
        \\
        \text{or } &a_i = \dblqt{\passume(x \neq y)} \text{ and }
        i<n\Rightarrow q_{j_{i+1}} = \delta(q_{j_i})\downharpoonright_2
      \end{cases}
      \end{aligned}
$
\end{itemize}


In the above, we denote pair projection with $\downharpoonright$,
i.e., $(t_1, t_2) \downharpoonright_i = t_i$, where $i \in \{1,2\}$. A
\emph{complete execution} is an execution whose corresponding final
state ($q_n$ above) is in $H$.
For any transition system $TS$, we denote the set of its executions by
$\tsexec(TS)$ and the set of its complete executions by
$\tscexec(TS)$. The notions of \emph{correctness} and \emph{coherence}
for transition systems are identical to their counterparts for
programs. 



\vspace{-0.15in}
\subsubsection{The Transition System Synthesis Problem.} We consider
transition system specifications that place restrictions on executions
(both partial and complete) using two regular languages $S$ and $R$.
Executions must belong to the first language $S$ (which is
prefix-closed) and all complete executions must belong to the second
language $R$. A specification is given as two deterministic automata
$\wordaut{\Aa}_S$ and $\wordaut{\Aa}_R$ over executions, where
$L(\wordaut{\Aa}_S) = S$ and $L(\wordaut{\Aa}_R) = R$. For a
transition system $TS$ and specification automata $\wordaut{\Aa}_S$
and $\wordaut{\Aa}_R$, whenever
$\tsexec(TS) \subseteq L(\wordaut{\Aa}_S)$ and
$\tscexec(TS) \subseteq L(\wordaut{\Aa}_R)$ we say that $TS$ satisfies
its (syntactic) specification. Note that this need not entail
correctness of $TS$. Splitting the specification into partial
executions $S$ and complete executions $R$ allows us, among other
things, to constrain the executions of non-halting transition systems.
\medskip

\begin{definition}[Transition System Realizability and Synthesis]
  Given a finite set of program variables $V$ and deterministic
  specification automata $\wordaut{\Aa}_S$ (prefix-closed) and
  $\wordaut{\Aa}_R$ over the execution alphabet $\Pi_V$, decide if
  there is a \emph{correct}, coherent transition system $TS$ over $V$
  that satisfies the specification. Furthermore, produce one if it
  exists.
\end{definition}

Since programs are readily translated to transition systems (of
similar size), the transition system synthesis problem seems, at first
glance, to be a problem that ought to have similar
complexity. However, as we show, it is crucially different in that it
allows the synthesized transition system to have \emph{complete
  information} of past commands executed at any point. We will observe
in this section that the transition system synthesis problem is
$\exptime$-complete.

To see the difference between program and transition system synthesis,
consider program skeleton $P$ from~\exref{incomplete-info-ex}
in~\secref{examples}.
The problem is to fill the hole in $P$ with either
$\code{y} \passign \code{T}$ or $\code{y} \passign \code{F}$. Observe
that when $P$ executes, there are \emph{two} different executions that
lead to the hole. In grammar-restricted program synthesis, the hole
must be filled by a sub-program that is executed \emph{no matter how
  the hole is reached}, and hence no such program exists. However,
when we model this problem in the setting of transition systems, the
synthesizer is able to produce transitions that depend on how the hole
is reached. In other words, it does not fill the hole in $P$ with
\emph{uniform} code.
In this sense, in grammar-restricted program synthesis, programs have
\emph{incomplete information} of the past.
We crucially exploited this difference in the proof of
$\twoexptime$-hardness for grammar-restricted program synthesis (see~\cite{techreport}).
No such
incomplete information can be enforced by regular execution
specifications in transition system synthesis, and indeed the problem
turns out to be easier: transition system realizability and synthesis
are $\exptime$-complete.

\begin{theorem}
  \thmlabel{transition-system-result} Transition system realizability
  is decidable in time exponential in the number of program variables
  and polynomial in the size of the automata $\wordaut{\Aa}_S$ and
  $\wordaut{\Aa}_R$.  Furthermore, the problem is
  $\exptime$-complete. When realizable, within the same time bounds we
  can construct a correct, coherent transition system whose partial
  and complete executions are in $L(\wordaut{\Aa}_S)$ and
  $L(\wordaut{\Aa}_R)$, respectively.
\end{theorem}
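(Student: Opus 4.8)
The plan is to mirror the tree-automaton approach used for grammar-restricted program synthesis, but replace program trees with a search over \emph{transition-system structures} that is amenable to a more direct automaton construction on words. First I would observe that a deterministic transition system over a fixed variable set $V$, when unfolded, induces a (prefix-closed) regular language of executions over $\Pi_V$; conversely, any prefix-closed regular language over $\Pi_V$ that is ``locally deterministic'' in the right sense (branching only on complements of $\passume$-letters at states labeled by a $\pcheck$, and following the transition-function shape dictated by the labeling) is realizable by some transition system. So the core question reduces to: does there exist a regular set of executions $X$ (generated by a finite state space of bounded—indeed, we may as well say \emph{arbitrary}—size) that (i) is consistent with being the execution set of some transition system, (ii) satisfies $X \subseteq S$ and the complete executions of $X$ lie in $R$, and (iii) is correct and coherent? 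The right way to handle ``correct and coherent'' is to reuse the word automaton $\wordaut{\Aa}_\cohcorexec$ of size $O(2^{\text{poly}(|V|)})$ from~\secref{coherence}, which accepts exactly the coherent-and-correct executions.

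Concretely, I would build a single deterministic word automaton $\Bb$ over $\Pi_V$ by taking the product of $\wordaut{\Aa}_\cohcorexec$, $\wordaut{\Aa}_S$, and $\wordaut{\Aa}_R$; this has size $O(2^{\text{poly}(|V|)} \cdot |\wordaut{\Aa}_S| \cdot |\wordaut{\Aa}_R|)$. The realizability question then becomes a question about $\Bb$: is there a sub-automaton of $\Bb$'s transition structure that is ``output-closed'' in the transition-system sense—at every reachable state we must be able to commit to \emph{one} atomic action label in $\Sigma_V$, and if that label is a check $\pcheck(x=y)$ we follow both the $\passume(x=y)$ and $\passume(x\neq y)$ edges, otherwise we follow the single edge for the chosen deterministic action—while never leaving the accepting region for correctness ($\wordaut{\Aa}_\cohcorexec$ must never reach $\reject$) and while respecting that \emph{complete} executions (those we declare halting) land in $R$. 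This is exactly a \emph{safety game} (or equivalently a fixpoint computation) over the state space of $\Bb$: the synthesizer picks, at each state, which action to emit (and whether to halt), and the ``environment'' picks which branch of a check to follow; the synthesizer must avoid $\reject$ and the bad states of $\wordaut{\Aa}_S$ forever, and must ensure halting only at states safe for $\wordaut{\Aa}_R$. Solving a safety game is linear in the size of the game graph, so the whole procedure runs in time $O(2^{\text{poly}(|V|)} \cdot \text{poly}(|\wordaut{\Aa}_S|, |\wordaut{\Aa}_R|))$, and from a winning strategy (a memoryless one suffices for safety games) we read off a deterministic transition system of the same size, giving the synthesis claim. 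This yields the $\exptime$ upper bound; the key reason this is cheaper than program synthesis is that there is no tree automaton and no exponential blow-up from the two-way-to-one-way conversion of~\cite{vardikupferman2000,Vardi98}—the ``incomplete information'' that forced program trees disappears, since a transition system may branch freely on its history.

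For the matching $\exptime$ lower bound I would reduce from a canonical $\exptime$-complete problem—acceptance of a polynomial-space alternating Turing machine, or equivalently membership games on succinct/exponential-space structures—adapting the reduction used for the $\twoexptime$ lower bound of grammar-restricted synthesis but ``one exponential down'': the specification automata $\wordaut{\Aa}_S, \wordaut{\Aa}_R$, being of polynomial size, can encode the machine's transition checking over a tape whose cells are addressed by polynomially many program variables, with universal branching modeled by reading an uninterpreted function (so correctness over all data models forces winning against all of Adam's moves) and existential branching modeled by the transition system's freedom to choose its action. The main obstacle, and the part requiring the most care, is getting the ``consistency with a transition system'' condition exactly right in the game formulation—ensuring that the safety game's moves faithfully correspond to the constraint in the definition that $\delta(q)\in Q\times Q$ \emph{iff} $\labts(q)\in\Gamma_V$, that $\passert(\tsfalse)$-labeled states are halting, and that partial-versus-complete execution constraints ($S$ vs.\ $R$) are enforced at the right moments—so that winning strategies and correct coherent transition systems are in genuine bijective correspondence rather than merely one direction holding.
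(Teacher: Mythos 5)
Your proposal is correct and follows essentially the same route as the paper: the upper bound is obtained by taking the product of the coherent-and-correct word automaton $\wordaut{\Aa}_\cohcorexec$ with $\wordaut{\Aa}_S$ and $\wordaut{\Aa}_R$ and solving the resulting safety game (synthesizer picks the emitted letter and halting decisions, environment resolves $\pcheck$ branches), with a memoryless winning strategy yielding the transition system, and the lower bound is the same reduction from acceptance of alternating polynomial-space Turing machines, with Adam's moves read from an uninterpreted function and Eve's moves chosen by the transition system. Your closing observation that the key difference from program synthesis is the absence of the incomplete-information/tree-automaton machinery is exactly the point the paper makes.
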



\subsection{Synthesizing Boolean Programs}
\subseclabel{bool-programs} Here we observe corollaries of our results
when applied to the more restricted problem of synthesizing Boolean
programs.

In Boolean program synthesis we interpret variables in programs over
the Boolean domain $\{T, F\}$, and we disallow computations of
uninterpreted functions and the checking of uninterpreted
relations. Standard Boolean functions like $\wedge$ and $\neg$ are
instead allowed, but note that these can be modeled using conditional
statements.  We allow for \emph{nondeterminism} with a special
assignment $\dblqt{b \passign \cd{*}}$, which assigns $b$
nondeterministically to $T$ or $F$. As usual, a program is correct
when it satisfies all its assertions.

Synthesis of Boolean programs can be reduced to uninterpreted program
synthesis using two special constants $T$ and $F$. Each
nondeterministic assignment is modeled by computing a $\cd{next}$
function on successive nodes of a linked list, accessing a
nondeterministic value by computing $\cd{key}$ on the current node,
and assuming the result is either $T$ or $F$. Since uninterpreted
programs must satisfy assertions in all models, this indeed captures
nondeterministic assignment.
Further, every term ever computed in such a program is equivalent to
$T$ or $F$ (by virtue of the interleaved $\passume$ statements),
making the resulting program coherent.
The $\twoexptime$ upper bound for Boolean
program synthesis now follows from~\thmref{coh-prog-synth-dec}.
We further show that, perhaps surprisingly, the $\twoexptime$ lower bound
from~\secref{coherent-synth} can be adapted to prove
$\twoexptime$-hardness of Boolean program synthesis.


\begin{theorem}
  The grammar-restricted synthesis problem for Boolean programs is
  $\twoexptime$-complete, and can be solved in time doubly-exponential
  in the number of variables and \complexitygrammar~in the size of the
  input grammar.\qed
\end{theorem}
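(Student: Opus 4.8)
The plan is to make precise the reduction sketched just above the theorem, from Boolean program synthesis to \emph{coherent} uninterpreted program synthesis, and then invoke \thmref{coh-prog-synth-dec}. Given an input grammar $\gram$ for Boolean programs over variables $V$, I would build a grammar $\gram'$ over the signature with constants $\cd{T},\cd{F}$ and unary functions $\cd{next},\cd{key}$, by: expanding each Boolean connective into nested $\pif{-}\pthen{-}\pelse$; replacing each nondeterministic assignment $\dblqt{b \passign \cd{*}}$ by the fixed gadget $\cd{c} \passign \cd{next(c)};\ b \passign \cd{key(c)};\ \passume(b = \cd{T} \lor b = \cd{F})$ for a single fresh variable $\cd{c}$; and prepending $\passume(\cd{T} \neq \cd{F})$. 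Each production of $\gram$ is replaced by a constant number of productions, so $|\gram'| = O(|\gram|)$ and the variable set grows by one. Two facts then need checking: (i) every complete execution of every $p' \in L(\gram')$ is coherent --- in the image, the only functions are $\cd{next}$ and $\cd{key}$, each applied only to $\cd{c}$, so no computed term ever has $\cd{T}$ or $\cd{F}$ as a proper subterm and the $\cd{next}$-powers are pairwise $\cong$-inequivalent, whence the hypotheses of \textbf{Memoizing} and \textbf{Early Assumes} are vacuously false; and (ii) a Boolean program $p \in L(\gram)$ is correct (over all resolutions of its nondeterministic assignments) iff its image $p' \in L(\gram')$ is correct (over all data models), because the demonic reading of $\dblqt{b \passign \cd{*}}$ is matched exactly by universal quantification over interpretations of $\cd{key}$ on the fresh, ever-advancing list. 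Applying \thmref{coh-prog-synth-dec} to $\gram'$, which has $|V|+1$ variables and size $O(|\gram|)$, gives the $\twoexptime$ bound, doubly exponential in $|V|$ and \complexitygrammar~in $|\gram|$.

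\textbf{Lower bound.}
For $\twoexptime$-hardness I would revisit the reduction behind \thmref{lower-bound}, which maps an alternating exponential-space Turing machine $M$ and input $w$ to a polynomial-size grammar $\gram_{M,w}$ such that $M$ accepts $w$ iff $L(\gram_{M,w})$ contains a correct coherent program. The key observation enabling the adaptation is that that construction never genuinely uses an infinite data domain: tape symbols, control states, and the polynomially-many bits of a cell index are finite and hence Boolean-encodable, while the two ingredients that matter are (a) a source of \emph{demonic} nondeterminism --- used to let the universal player pick a transition and to pick the single hidden tape cell whose evolution the grammar audits --- and (b) the \emph{incomplete-information} feature of grammar-restricted synthesis, whereby a hole is filled with uniform code regardless of the control path reaching it. Ingredient (b) is a property of the synthesis model, not of the data domain, so it carries over unchanged, and ingredient (a) is available for Boolean programs via $\dblqt{b \passign \cd{*}}$, whose semantics under the "correct $=$ satisfies all assertions" convention is exactly demonic. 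I would therefore define $\gram^{\mathsf{B}}_{M,w}$ from $\gram_{M,w}$ by replacing every uninterpreted-function read that produces an adversarial value with $\dblqt{b \passign \cd{*}}$, encoding data-domain constants by bit-vectors (a constant-factor blow-up per statement), and leaving the assertion skeleton intact; it stays polynomial in $|M|,|w|$. A step-by-step correspondence between runs of programs in $L(\gram^{\mathsf{B}}_{M,w})$ and runs of their counterparts in $L(\gram_{M,w})$ then shows correctness is preserved in both directions, so $M$ accepts $w$ iff some $p \in L(\gram^{\mathsf{B}}_{M,w})$ is correct; since ATM acceptance with exponential space bound is $\twoexptime$-hard \cite{Chandra1981}, this establishes the lower bound, and combined with the upper bound gives $\twoexptime$-completeness with the stated fine-grained complexity.

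\textbf{Main obstacle.}
I expect the lower-bound direction to be the hard part: it requires a careful audit of the construction of \thmref{lower-bound} to confirm that \emph{every} appeal there to uninterpreted functions is only an appeal to demonic nondeterminism plus finite data, and, in particular, that any place where the original proof "hides" the tracked cell index from the synthesizer using term-level disequalities of fresh constants can be faithfully re-implemented with Boolean variables the grammar syntactically forbids the holes from reading. The upper bound, by contrast, is routine once coherence of the list-gadget image is verified as above.
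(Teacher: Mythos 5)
Your proposal follows essentially the same route as the paper: the upper bound is exactly the paper's reduction to coherent uninterpreted synthesis via the $\cd{T},\cd{F}$ constants and the $\cd{next}$/$\cd{key}$ linked-list gadget for nondeterministic assignment (followed by an appeal to \thmref{coh-prog-synth-dec}), and the lower bound is the paper's intended adaptation of the $\gram_{M,w}$ construction, which is already ``effectively Boolean'' except for the $\cd{choice}$/$\cd{next}$ machinery that you correctly replace with $\dblqt{b \passign \cd{*}}$. The only minor points to tidy are that the Early-Assumes hypothesis is not literally vacuous (a computed term is a superterm of itself, but the condition is then discharged because that term is held in a variable) and that initial values of the Boolean variables should also be constrained to $\set{\cd{T},\cd{F}}$ by assume statements; neither affects the argument.
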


Thus synthesis for coherent uninterpreted programs is no more complex
than Boolean program synthesis, establishing decidability and
complexity of a problem which has found wide use in practice---for
instance, the synthesis tool {\sc Sketch} solves precisely this
problem, as it models integers using a small number of bits and allows
grammars to restrict programs with holes.


\subsection{Synthesizing Recursive Programs}
\subseclabel{recursive-pgm}

\newcommand{\trcc} {\treeaut{\Aa}_{\text{rcc}}}
\newcommand{\wrcc} {\wordaut{\Aa}_{\text{rcc}}}


We extend the positive result of~\secref{coherent-synth} to synthesize
coherent recursive programs. The setup for the problem is very
similar. Given a grammar that identifies a class of recursive
programs, the goal is to determine if there is a program in the
grammar that is coherent and correct.

The syntax of recursive programs is similar to the non-recursive case,
and we refer the reader to~\cite{techreport} for details. In essence,
programs are extended with a new function call construct. Proofs are
similar in structure to the non-recursive case, with the added
challenge of needing to account for recursive function calls and the
fact that $\wordaut{\Aa}_\cohcorexec$ becomes a (visibly) pushdown
automaton rather than a standard finite automaton. This gives a
$\twoexptime$ algorithm for synthesizing recursive programs; a
matching lower bound follows from the non-recursive case.
\seclabel{lower-bound}

\vspace{-.05in}
\begin{theorem}
  \thmlabel{rec-synthesis} The grammar-restricted synthesis problem
  for uninterpreted coherent \emph{recursive} programs is
  $\twoexptime$-complete. The algorithm is doubly exponential in the
  number of program variables and \complexitygrammar~in the size of
  the input grammar. Furthermore, a tree automaton representing the
  set of all correct, coherent recursive programs that conform to the
  grammar can be constructed in the same time.
\end{theorem}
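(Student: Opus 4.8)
The plan is to replay, with the modifications forced by function calls, the three-stage automata construction of \secref{coherent-synth}: a tree automaton for the grammar, a two-way alternating tree automaton that simulates all executions and checks coherence and correctness, a determinization of the latter into a top-down automaton, an intersection, and an emptiness test. Three things change. The execution alphabet acquires matched \pcall/\preturn{} letters; the word automaton $\wordaut{\Aa}_\cohcorexec$ that accepts coherent correct executions becomes a \emph{visibly pushdown automaton}; and the two-way tree automaton must now reason about function \emph{summaries}.

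First I would set up the verification side. Extending \thmref{coherent-ver}, one shows that the coherent and correct executions of recursive uninterpreted programs over a fixed variable set $V$ (with a fixed number of functions and of local/global variables) form a visibly pushdown language accepted by a VPA $\wrcc$ of size $O(2^{\text{poly}(|V|)})$: coherence still permits maintaining congruence-closure data in a streaming fashion, and the data private to a procedure frame is pushed onto the (visibly) pushdown stack at each \pcall{} and popped at the matching \preturn{}, so that only $\text{poly}(|V|)$-wide data is active per frame. This $\wrcc$ is the recursive analogue of $\wordaut{\Aa}_\cohcorexec$ and is supplied in \cite{techreport}. In parallel, recursive programs are represented as trees exactly as in the non-recursive case, with one extra node label for \pcall{} and a fixed top-level structure listing the boundedly many function definitions; the parse-tree construction then yields a top-down tree automaton $\treeaut{A}_\gram$ of size $O(|\gram|)$ accepting precisely the trees of programs in $L(\gram)$.

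The heart of the argument is building a two-way alternating tree automaton $\trcc$ that accepts exactly the trees of coherent, correct recursive programs, by simulating $\wrcc$ along all executions of the input tree. Non-\pcall{} nodes are handled as in \lemref{tree-aut-sim-correct}. At a \pcall{} node for a function $g$, $\trcc$ follows the ``split into copies'' idea sketched in \secref{intro}: it issues a conjunctive transition, one copy of which travels up to the program root and down into $g$'s definition subtree to simulate $g$'s body from the callee entry state, while the other copy moves to the node following the call and resumes in a state obtained by existentially guessing a summarized return state and applying $\wrcc$'s return transition to it. The two copies share the guessed data, so the universal reading of alternation forces the first copy---over \emph{all} executions of $g$'s body, including nested and recursive calls---to agree with what the second copy assumed, discharging its obligation at a matching \preturn{} and halting there; hence $\trcc$ itself needs no call stack. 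Since $\wrcc$ has only $2^{\text{poly}(|V|)}$ states, the return information guessed at a transition is only $\text{poly}(|V|)$ bits wide, so $\trcc$ still has size $O(2^{\text{poly}(|V|)})$, and (as in the non-recursive case) a run is accepting iff it never enters $\wrcc$'s rejecting state.

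Finally I would close exactly as in \secref{overall-algo}: translate $\trcc$ into an equivalent top-down nondeterministic tree automaton by the adaptation of \cite{vardikupferman2000,Vardi98}, paying one exponential to reach size $O(2^{2^{\text{poly}(|V|)}})$; intersect with $\treeaut{A}_\gram$ to obtain a top-down automaton $\treeaut{\Aa}$ of size $O(2^{2^{\text{poly}(|V|)}}\cdot|\gram|)$ that represents precisely the correct coherent recursive programs conforming to $\gram$; and test emptiness in time $\ordernotation$, extracting a program tree when non-empty. This yields the $\twoexptime$ upper bound, doubly exponential in $|V|$ and \complexitygrammar~in $|\gram|$. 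The matching lower bound is immediate: the $\twoexptime$-hardness reduction behind \thmref{lower-bound} already outputs non-recursive programs, a degenerate special case of recursive programs. The main obstacle is the construction of $\trcc$---making the summary-based interprocedural simulation correct inside a \emph{finite} two-way alternating automaton, reconciling guessed return states consistently against function bodies invoked from many call sites and recursively without any call stack in the control---together with checking that $\wrcc$, despite having to stack caller-frame congruence data, remains singly exponential in $|V|$.
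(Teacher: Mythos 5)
Your proposal takes essentially the same route as the paper's proof: the same visibly pushdown automaton (of size $O(2^{\text{poly}(|V|)})$) for coherent, correct recursive executions, the same two-way alternating tree automaton that handles a call by guessing the callee's exit state and splitting into one copy that walks to the root and down into the method's definition subtree to verify the body and another that continues past the call from the summarized state, and the same determinization--intersection--emptiness pipeline with the $\twoexptime$ lower bound inherited from the non-recursive case. The only detail the paper records that you omit is the treatment of calls that never return (a copy that simulates the body and rejects if it ever terminates), which is a refinement of the same construction rather than a different idea.
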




\section{Related Work}
\seclabel{related}

The automata and game-theoretic approaches to synthesis date back to a
problem proposed by Church~\cite{church60}, after which a rich theory
emerged~\cite{BuchiLandweber69,Rabin72,automata-logics-games,kpvPneuli}.  The
problems considered in this line of work typically deal with a system
reacting to an environment interactively using a finite set of signals
over an infinite number of rounds. Tree automata over infinite
\emph{trees}, representing strategies, with various infinitary
acceptance conditions (B\"uchi, Rabin, Muller, parity) emerged as a
uniform technique to solve such synthesis problems against temporal
logic specifications with optimal complexity
bounds~\cite{PR89,KMTV00,PR90,madhudistsynth}. In this paper, we use
an alternative approach from~\cite{csl11} that works on \emph{finite}
program trees, using two-way traversals to simulate iteration. The
work in~\cite{csl11}, however, uses such representations to solve
synthesis problems for programs over a fixed finite set of Boolean
variables and against LTL specifications. In this work we use it to
synthesize coherent programs that have finitely many variables working
over infinite domains endowed with functions and relations.

While decidability results for program synthesis beyond finite data
domains are uncommon, we do know of some results of this kind. First,
there are decidability results known for synthesis of tranducers with
registers~\cite{khalimov18}. Transducers interactively read a stream
of inputs and emit a stream of outputs.  Finite-state tranducers can
be endowed with a set of registers for storing inputs and doing only
equality/disequality comparisons on future inputs. Synthesis of such
transducers for temporal logic specifications is known to be
decidable. Note that, although the data domain is infinite, there are
no functions or relations on data (other than equality), making it a
much more restricted class (and grammar-based approaches for
syntactically restricting transducers has not been studied). Indeed,
with uninterpreted functions and relations, the synthesis problem is
undecidable (\thmref{undec}), with decidability only for coherent
programs. In~\cite{caulfieldarxiv}, the authors study the problem of
synthesizing uninterpreted terms from a grammar that satisfy a
first-order specification. They give various decidability and
undecidability results. In contrast, our results are for programs with
conditionals and iteration (but restricted to coherent programs) and
for specifications using assertions in code.

Another setting with a decidable synthesis result over unbounded
domains is work on strategy synthesis for linear arithmetic
\emph{satisfiability} games~\cite{kincaid18}. There it is shown that
for a satisfiability game, in which two players (SAT and UNSAT) play
to prove a formula is satisfiable (where the formula is interpreted
over the theory of linear rational arithmetic), if the SAT player has
a winning strategy then a strategy can be synthesized. Though the data
domain (rationals) is infinite, the game consists of a finite set of
interactions and hence has no need for recursion. The authors also
consider reachability games where the number of rounds can be
unbounded, but present only sound and incomplete results, as checking
who wins in such reachability games is undecidable.

Tree automata techniques for accepting finite parse trees of programs
was explored in~\cite{madhu2011} for synthesizing reactive programs
with variables over finite domains.  In more recent work, automata on
finite trees have been explored for synthesizing data completion
scripts from input-output examples~\cite{WangGulwaniSinghOOPSLA2016},
for accepting programs that are verifiable using abstract
interpretations~\cite{Wang2017}, and for relational program
synthesis~\cite{WangWangDilligOOPSLA18}.

The work in~\cite{MMSV2018} explores a decidable logic with
$\exists^* \forall^*$ prefixes that can be used to encode synthesis
problems with background theories like arithmetic.  However, encoding
program synthesis in this logic only expresses programs of finite
size.  Another recent paper~\cite{HuUnrealizability2019} explores
sound (but incomplete) techniques for showing unrealizability of
syntax-guided synthesis problems.



\section{Conclusions}
\seclabel{conclusions}


We presented foundational results on synthesizing coherent programs
with uninterpreted functions and relations.  To the best of our
knowledge, this is the first natural decidable program synthesis
problem for programs of arbitrary size which have iteration/recursion,
and which work over infinite domains.

The field of program synthesis lacks theoretical results, and
especially decidability results. We believe our results to be the
first of their kind to fill this lacuna, and we find this paper
exciting because it bridges the worlds of program synthesis and the
rich classical synthesis frameworks of systems over finite domains
using tree
automata~\cite{BuchiLandweber69,Rabin72,automata-logics-games,kpvPneuli}.  We
believe this link could revitalize both domains with new techniques
and applications.

Turning to practical applications of our results, several questions
require exploration in future work. First, one might question the
utility of programs that verify only with respect to uninterpreted
data domains. Recent work~\cite{euforia2019} has shown that verifying
programs using uninterpreted abstractions can be extremely effective
in practice for proving programs correct. Also, recent work by Mathur
et al.~\cite{MMV20Axioms} explores ways to add \emph{axioms} (such as
commutativity of functions, axioms regarding partial orders, etc.) and
yet preserve decidability of verification. The methods used therein
are compatible with our technique, and we believe our results can be
extended smoothly to their decidable settings. A more elaborate way to
bring in complex theories (like arithmetic) would be to marry our
technique with the \emph{iterative} automata-based software
verification technique pioneered by work behind the {\sc Ultimate}
tool~\cite{Matthias2013,Heizmann2010,Heizmann2009SAS,UltimateAutomizerTacas};
this won't yield decidable synthesis, but still could result in
\emph{complete} synthesis procedures.

The second concern for practicality is the coherence
restriction. There is recent work by Mathur et
al.~\cite{mathur2019forest} that shows single-pass heap-manipulating
programs respect a (suitably adapted) notion of coherence. Adapting
our technique to this setting seems feasible, and this would give an
interesting application of our work. Finally, it is important to build
an implementation of our procedure in a tool that exploits pragmatic
techniques for constructing tree automata, and the techniques pursued
in~\cite{WangGulwaniSinghOOPSLA2016,Wang2017,WangWangDilligOOPSLA18}
hold promise.

\newpage

\bibliographystyle{splncs04}
\bibliography{pgmsynth}

\newpage
\appendix

\section{Continued from \secref{examples}}
\applabel{examples-appendix}

\subsection{Encoding Input/Output Examples}
\applabel{inout-ex-appendix}

Here we give an encoding of three input/output example models for the
problem of finding if a linked list has a node with key \cd{k}.
\begin{align*}
\begin{array}{l}
\textsf{// \emph{Establishing constants}} \\
\passume(\cd{T} \neq \cd{F}); \\
\passume(\cd{k'} \neq \cd{k});\\\\ 
  \textsf{// \emph{Positive example}} \\
\cdm{x_2} \passign \cdm{next(x_1)};~ \cdm{x_3} \passign \cdm{next(x_2)};\\
 \passume(\cdm{x_1} \neq \cd{NIL});~\passume(\cdm{x_2} \neq \cd{NIL});~\passume(\cdm{x_3} \neq \cd{NIL});\\
 \passume(\cdm{next(x_3)} = \cd{NIL});\\
\passume(\cdm{key(x_1)} = \cd{k'});~ \passume(\cdm{key(x_2)} = \cd{k});~ \passume(\cdm{key(x_3)} = \cd{k'});\\
\passume(\cdm{x_{ans}} = \cd{T});\\\\ 
  \textsf{// \emph{Negative example}} \\
  \cdm{y_2} \passign \cdm{next(y_1)};~ \cdm{y_3} \passign \cdm{next(y_2)};\\
 \passume(\cdm{y_1} \neq \cd{NIL});~\passume(\cdm{y_2} \neq \cd{NIL});~\passume(\cdm{y_3} \neq \cd{NIL});\\
 \passume(\cdm{next(y_3)} = \cd{NIL});\\
\passume(\cdm{key(y_1)} = \cd{k'});~ \passume(\cdm{key(y_2)} = \cd{k'});~ \passume(\cdm{key(y_3)} = \cd{k'});\\
\passume(\cdm{y_{ans}} = \cd{F});\\\\ 
  \textsf{// \emph{Positive example}} \\
\cdm{z_2} \passign \cdm{next(z_1)};\\
 \passume(\cdm{z_1} \neq \cd{NIL});~\passume(\cdm{z_2} \neq \cd{NIL});~\passume(\cdm{next(z_2)} = \cd{NIL});
\\ \passume(\cdm{key(z_1)} = \cd{k});~ \passume(\cdm{key(z_2)} = \cd{k});\\
\passume(\cdm{z_{ans}} = \cd{T});\\
\end{array}
\end{align*}

The above program block defines three lists starting at $\cdm{x_1}$,
$\cdm{y_1}$ and $\cdm{z_1}$ respectively, having first defined
distinct Boolean constants like $\cd{T}$ and $\cd{F}$. These constants
are used to define expected answers $\cdm{x_{ans}}$, $\cdm{y_{ans}}$
and $\cdm{z_{ans}}$, for each example. 
Next, we can choose one of the above examples nondeterministically by
using a variable $\cd{ch}$, denoting nondeterministic \underline{\sf
  ch}oice. Since a data model gives an initial value to every
variable, each of the three examples above is chosen in some data
model.
\begin{align*}
\begin{array}{l}
  \textsf{// \emph{Nondeterministically choose an example}} \\
\passume(\cdm{ch_x} \neq \cdm{ch_y});~\passume(\cdm{ch_x} \neq \cdm{ch_z});~\passume(\cdm{ch_y} \neq \cdm{ch_z});\hspace{8em} \\
\passume(\cd{ch} = \cdm{ch_z} \lor \cd{ch} = \cdm{ch_y} \lor \cd{ch} = \cdm{ch_z}); \\
\pif(\cd{ch} = \cdm{ch_x})~ \pthen~ \cd{head} \passign \cdm{x_1}; ~ \cd{ans} \passign \cdm{x_{ans}};\\
\pif(\cd{ch} = \cdm{ch_y})~ \pthen~ \cd{head} \passign \cdm{y_1}; ~ \cd{ans} \passign \cdm{y_{ans}};\\
\pif(\cd{ch} = \cdm{ch_z})~ \pthen~ \cd{head} \passign \cdm{z_1}; ~ \cd{ans} \passign \cdm{z_{ans}};\\
\end{array}
\end{align*}

Lastly, we give the following template with a hole:
\begin{align*}
\begin{array}{l}
  \textsf{// \emph{Template with a hole}} \\
  \pwhile(\cd{head} \neq \cd{NIL})\hspace{25em}\\
  \;\;\;\;\holepred{}{can use \cd{key}, compare with \cd{k}, and assign to $\cdm{computed_{ans}}$ }; \\
\;\;\;\;\cd{head} \passign \cdm{next(head)};\\
\passert(\cd{ans} = \cdm{computed_{ans}})\\
\end{array}
\end{align*}

The full specification consists of a grammar that generates the
program blocks for the three examples, the nondeterministic choice,
and the template with the hole. It is easy to see that any correct
solution to the hole must be correct for all examples.


\newpage
\section{Continued from \secref{prelim}}
\applabel{executions}

The set of complete executions for a program $p$, denoted $\exec(p)$,
is defined inductively. In what follows, $c$ is of the form
$\dblqt{x = y}$ or $\dblqt{x \neq y}$, and we identify $\neg (x=y)$
and $\neg(x \not = y)$ with $x \not= y$ and $x=y$, respectively.
  \begin{align*}
    &\exec(\pskip) = \,\, \epsilon \\
    &\exec(x := y) = \,\, ``x := y" \\
    &\exec(x := f(\vec{z})) = \,\,  ``x := f(\vec{z})" \\
    &\exec(\passume(c)) = \,\, ``\passume(c)" \\
    &\exec(\passert(c)) = \,\, \dblqt{\passume(\neg c)} \cdot
                      ``\passert(\pfalse)" + \exec(\pskip) \\
    &\exec(\pif \,c\, \pthen \,s_1\, \pelse \,s_2\,) = \,\, \dblqt{\passume(c)} \cdot
                                         \exec(s_1) + \dblqt{\passume(\neg
                                         c)} \cdot \exec(s_2) \\
    &\exec(\pwhile \,c\, \{\, p \,\}) = \,\, \big(\dblqt{\passume(c)} \cdot \exec(\, p \,)\big)^*
                                  \cdot \dblqt{\passume(\neg c)} \\
    &\exec(p_1 ; p_2) = \,\, \exec(p_1) \cdot \exec(p_2)
  \end{align*}

\newpage

\section{Continued from \secref{undec}}
\applabel{slp-undec}

\subsection{Undecidability of Synthesising Straight-Line Programs}
\applabel{slp-undecidability-proof}

Here we present the proof of~\thmref{undec-slp}, which is a reduction
from Post's Correspondence Problem (PCP), defined below.

\begin{definition}[Post's Correspondence Problem]
  Let $\Gamma$ be an alphabet with at least two symbols. A problem
  instance consists of two lists of strings
  $\alpha = (\alpha_1, \ldots \alpha_n)$ and
  $\beta = (\beta_1, \ldots, \beta_n)$, with $n>0$. The instance is in
  the language iff there is a finite non-empty sequence of indices
  $i_1, i_2, \ldots i_N$ ($1\leq i_j \leq n$ for every
  $1 \leq j \leq N$) such that
\[
\alpha_{i_1} \cdot \alpha_{i_2} \cdot \ldots \cdot \alpha_{i_N}
=
\beta_{i_1} \cdot \beta_{i_2} \cdot \ldots \cdot \beta_{i_N}
\]
\end{definition}
It is a well-known result that PCP is undecidable~\cite{post1946}. The
reduction from PCP to synthesis of straight-line programs is as
follows. Given an instance of PCP $P = (\Gamma,\alpha,\beta)$ over
alphabet $\Gamma$ with lists of strings $\alpha$ and $\beta$, consider
the first order signature
$\Sigma_P = (\emptyset,\set{f_\sigma}_{\sigma \in \Gamma},\emptyset)$
and the grammar $\gram_P = (\Delta_P, St_P, NT_P, R_P)$ such that:
\begin{itemize}
    \item $\Delta_P = \set{\dblqt{x_1 \passign x_2},\,\dblqt{x_1 \passign x_3},\,\dblqt{;}} \cup \set{t_{1, \sigma}}_{\sigma \in \Gamma} \cup \set{t_{2, \sigma}}_{\sigma \in \Gamma} \cup \set{\dblqt{\passume(x_1 \neq x_2)},\dblqt{\passert(\pfalse)}}$
    where
	\begin{align*}
	    t_{1, \sigma} = \dblqt{x_1 \passign f_\sigma(x_1)} \\
		t_{2, \sigma} = \dblqt{x_2 \passign f_\sigma(x_2)}
	\end{align*}


	\item $NT_P = \set{S_P,Q_P,F} \cup \set{A_i}_{1\leq i\leq n}\cup \set{B_i}_{1\leq i\leq n}\cup \set{C_i}_{1\leq i\leq n}$ where $n$ is the length of the lists $\alpha$ and $\beta$ as given by $P$.

	\item $R_P$ is the following collection of rules:
	\begin{equation*}
	\begin{array}{rcl}
	S_P &\to& x_1 \passign x_3 \,;\, x_2 \passign x_3 \,;\, Q_P \\
	Q_P &\to& Q_P \,;\, Q_P \\
	Q_P &\to& C_1 \\
	Q_P &\to& C_2 \\
	& \vdots & \\
	Q_P &\to& C_n \\
	C_1 &\to& A_1 \,;\, B_1 \\
	& \vdots & \\
	C_n &\to& A_n \,;\, B_n \\
	F &\to& \passume(x_1 \neq x_2) \,;\, \passert(\pfalse) \\
	\end{array}
	\end{equation*}

	Let $\alpha_i = \sigma_{j_1}\sigma_{j_2}\cdots \sigma_{j_{|\alpha_i|}}$, where $\sigma_{j_k} \in \Gamma$ for every $1\leq k\leq {|\alpha_i|}$.
	Then, the production rule for $A_i$ is given by
	\begin{equation*}
	\begin{array}{rcl}
	A_i &\to& t_{1, \sigma_{j_1}}\,;\,t_{1, \sigma_{j_2}}\,;\,\ldots\,;\,t_{1, \sigma_{j_{|\alpha_i|}}}
	\end{array}
	\end{equation*}

	Similarly, let $\beta_i = \sigma_{l_1}\sigma_{l_2}\cdots \sigma_{l_{|\beta_i|}}$, where $\sigma_{l_k} \in \Gamma$ for every $1\leq l\leq {|\beta_i|}$.
	Then, the production rule for $B_i$ is given by
	\begin{equation*}
	\begin{array}{rcl}
	B_i &\to& t_{2, \sigma_{l_1}}\,;\,t_{2, \sigma_{l_2}}\,;\,\ldots\,;\,t_{2, \sigma_{l_{|\beta_i|}}}
	\end{array}
	\end{equation*}
\end{itemize}

For an intuitive understanding of the production rules for $A_i$ and
$B_i$, recall that each function in the signature is indexed by a
letter from $\Gamma$. If by abuse of notation we associate the
function $f_{\gamma_2}\circ f_{\gamma_1}$ (for
$\gamma_1,\gamma_2 \in \Gamma$) with the symbol
$f_{\gamma_1\cdot\gamma_2}$ (and similarly for longer compositions),
then the production rule for $A_i$ produces a program block that
updates the variable $x_1$ to $f_{\alpha_i}(x_1)$. Similarly $B_i$
produces a program block that updates $x_2$ to
$f_{\beta_i}(x_2)$. 

Although the grammar as presented does not quite conform to the
grammar schema $\schema_\slp$ of straight-line programs over the given
signature, it can be rewritten into an equivalent one that does
conform by introducing some extra nonterminals and folding the
productions $C_1 \ldots C_n$ into productions for $Q_P$. We claim that
the uninterpreted synthesis problem over this grammar is equivalent to
the given PCP instance.

To prove this claim, let's observe that every program generated by
this grammar is of the form
\begin{align*}
	x_1 \passign x_3 \,;\, x_2 \passign x_3 \,;\, C_{i_1} \,;\, C_{i_2}
  \,;\, C_{i_N} \,;\, \passume(x_1 \neq x_2) \,;\, \passert(\pfalse)
\end{align*}
for some $N$ and some $i_j$, with $1 \leq i_j \leq n$ for every
$1 \leq j \leq N$. Let $\pi_2$ be the prefix that excludes the last
two statements, and let $\pi_1$ be the prefix that excludes the last
statement.

Consider the correctness of this program. Observe that, using our
shorthand notation, the value of the variable $x_1$ at the end of the
program block $\pi_2$ is $f_{w_\alpha}(x_3)$ where
$w_\alpha = \alpha_{i_1} \cdot \alpha_{i_2} \cdot \ldots \cdot
\alpha_{i_N}$. More precisely, in any first order model $M$ over our
signature, the value of $x_1$ at the end of $\pi_2$ is the value
(given by $M$) corresponding to the term $f_{w_\alpha}(\init{x_3})$
where by $\init{x_3}$ we mean the initial value of the variable $x_3$
(which can be modelled with an extra immutable variable). Similarly,
at the end of $\pi_2$ the value of variable $x_2$ is
$f_{w_\beta}(\init{x_3})$ where
$w_\beta = \beta_{i_1} \cdot \beta_{i_2} \cdot \ldots \cdot
\beta_{i_N}$.

For the program to be correct, the prefix $\pi_1$ has to be infeasible
(since the next statement is $\passert(\pfalse)$), i.e., infeasible in
every data model. This is a straight-line program; it has no other
executions and thus the program is correct iff $\pi_1$ is
infeasible. We now check the feasibility of $\pi_1$.

To be infeasible in every data model, $\pi_1$ must, in particular, be
infeasible in the free model of terms. In the free model, equality
corresponds to syntactic equality, and thus to be infeasible in the
free model the values of the variables $x_1$ and $x_2$, namely the
terms $f_{w_\alpha}(\init{x_3})$ and $f_{w_\beta}(\init{x_3})$, must
be syntactically equal at then end of $\pi_2$, which happens iff
$w_\alpha = w_\beta$.

We conclude that an arbitrary program generated by the given grammar
is correct iff $w_\alpha = w_\beta$. Further, any program from the
grammar corresponds to a putative solution to the given PCP instance
$P$, namely the number $N$ and the indices $i_j$ for $1 \leq j \leq N$
such that $w_\alpha = w_\beta$, i.e.,

\[
\alpha_{i_1} \cdot \alpha_{i_2} \cdot \ldots \cdot \alpha_{i_N}
=
\beta_{i_1} \cdot \beta_{i_2} \cdot \ldots \cdot \beta_{i_N}
\]
We conclude that there exists a correct program in the grammar iff
there exists a solution to the PCP instance. Since the original
instance $P$ was arbitrary, this yields undecidability of
uninterpreted program synthesis over schema $\schema_\slp$.\qed

\newpage
\section{Continued from~\secref{coherent-synth}}
\applabel{coh-synth-app}

\subsection{Tree Automata Preliminaries}
\applabel{app-tree-prelim}

\subsubsection{Binary Trees.}
We consider binary trees here.  Let us fix a tree alphabet
$\Gamma = \bigcup_{i=0}^2 \Gamma_i$, which is a finite set of symbols
annotated with arities --- symbols in $\Gamma_i$ have arity $i$ and
$\Gamma_i \cap \Gamma_j = \emptyset$ when $i \neq j$.  Formally, a
finite tree $T$ over $\Gamma$ is a pair $(S, \lab)$, where
$S \subseteq \set{\ldir,\rdir}^*$ is a finite set of \emph{nodes} in
the tree; S is prefix closed, $\epsilon \in S$, and for every string
$\rho{\cdot} \rdir \in S$ we also have $\rho{\cdot}\ldir \in S$.  The
labeling function $\lab : S \to \Gamma$ maps each leaf node $n \in S$
(i.e., those nodes for which there is no node $n'$ which is a suffix
of $n$) to $\Gamma_0$, each node with exactly one child (i.e.,
$n{\cdot}\ldir\in S$ but $n{\cdot}\rdir\not\in S$) to $\Gamma_1$, and
the remaining nodes to $\Gamma_2$.  The node corresponding to
$\epsilon$ is called the \emph{root} node.  The left and right
children of a node $n \in S$ are the nodes $n_1 = n{\cdot}\ldir$ and
$n_2 = n{\cdot}\rdir$ if they exist, in which case $n$ is the parent
of $n_1$ and $n_2$.
For a node $n$ different from the root $\epsilon$, we use
$n{\cdot}\udir$ to denote the parent of $n$.

\subsubsection{Nondeterministic Top-Down Tree Automata.}
A nondeterministic top-down tree automaton over a tree alphabet
$\Gamma = \bigcup_{i=0}^2 \Gamma_i$ is a tuple
$A = (Q, I, \delta_0, \delta_1, \delta_2)$, where $Q$ is a finite set
of states, $I \subseteq Q$ is a set of initial states, and
$\delta_0 \subseteq Q \times \Gamma_0$,
$\delta_1 : Q \times \Gamma_1 \to 2^Q$, and
$\delta_2 : Q \times \Gamma_2 \to 2^{Q \times Q}$ are transitions for
each kind of letter.
For a finite tree $T = (S, \lab)$, a run of $A$ on $T$ is a tree
$\rho = (S, \mu)$ labeled with states of $A$ (i.e., $\mu : S \to Q$)
such that $\mu(\epsilon) \in I$ and for every non-leaf node $n \in S$,
we have $\mu(n{\cdot}\ldir) \in \delta_1(\mu(n), \lab(n))$ if $n$ has
only one child, and
$(\mu(n{\cdot}\ldir), \mu(n{\cdot}\rdir)) \in \delta_2(\mu(n),
\lab(n))$ otherwise.  Further, $\rho$ is accepting if for all leaf
nodes $n \in S$, we have $(\mu(n), \lab(n)) \in \delta_0$ and $T$ is
accepted by $A$ if there is an accepting run of $A$ on $T$.  The
language $L(A)$ of the top-down tree automaton is the set of all trees
it accepts.

We note that checking emptiness of the language of a nondeterministic
top-down tree automaton is decidable in \complexitygrammar~time in the
size of the automaton.  Further, given two tree automata $A_1$ and
$A_2$ over the same tree alphabet, we can construct another tree
automaton $A$ such that $L(A) = L(A_1) \cap L(A_2)$, in time
$O(|A_1| \cdot |A_2|)$.  We refer the reader to~\cite{tata2007} for
details of these standard results.

\subsubsection{Two-Way Alternating Tree Automata.}
In what follows we omit the formal, standard, definitions for labeled
trees. \applabel{2wata-def} We will denote by $\Bb^+(U)$ the set of
all positive Boolean formulae over a set $U$.  That is, $\Bb^+(U)$ is
the smallest set such that
$\set{\texttt{true}, \texttt{false}}\cup U \subseteq \Bb^+(U)$
and for every $\varphi_1, \varphi_2 \in \Bb^+(U)$, we have
$\set{\varphi_1\lor\varphi_2, \varphi_1\land\varphi_2} \subseteq \Bb^+(U)$.
For a (possibly empty) set $U' \subseteq U$ and a formula $\varphi \in \Bb^+(U)$,
we say $U' \models \varphi$ if $\varphi$ evaluates to true
by setting each of the elements in $U'$ to true and
the remaining elements of $U$ to false.

A two-way alternating tree automaton is a tuple
$A = (Q, I, \delta_0, \delta_1, \delta_2)$, where $Q$ is a finite
set of states and $I \subseteq Q$ is the set of initial states.
The functions
$\delta_0, \delta_1$, and $\delta_2$ give, respectively, the
transitions for the leaf nodes, nodes with one child, and nodes with
two children:
\begin{itemize}
	\item $\delta_0 : Q \times \set{\ddir} \times \Gamma_0 \to \Bb^+(Q \times \set{\udir})$
	\item $\delta_1 : Q \times \set{\ddir, \uldir} \times \Gamma_1 \to \Bb^+(Q \times \set{\udir, \ldir})$
	\item $\delta_2 : Q \times \set{\ddir, \uldir, \urdir} \times \Gamma_2 \to \Bb^+(Q \times \set{\udir, \ldir, \rdir})$
\end{itemize}
We explain the key differences between two-way alternating tree
automata and nondeterministic top-down tree automata.  First, unlike
in a top-down tree automaton, where control always moves to children
nodes ($\set{\ldir, \rdir}$), here the control can also move up to the
parent node ($\set{\ldir, \rdir, \udir}$). Second, the input to the
transition function $\delta_i$ is a triple $(q, m, a)$. Here, $q$ and
$a$ are the current state and the label of the current node, as in
top-down tree automata. However, in addition, the transitions depend
on the \emph{last move} $m$. For example, if the automaton moves from
a parent node $n$ to a child node $n\cdot{d}$ (with
$d \in \set{\ldir, \rdir}$), then the last move of the automaton would
be $m=\ddir$, denoting a `downward' move.  Similarly, if the automaton
moves from a child node $n{\cdot}\ldir$ to the parent node $n$, then
the last move would be $m=\uldir$ denoting `upward' move from the
`left' child. A third difference comes from alternation, which is more
general than nondeterminism--- the control can move to any set of
states that satisfy the Boolean formula given by the transition
function. These differences are formalized in the definition of a run
for such an automaton, which we describe next.

\newcommand{\trun}{\text{run}}
A \emph{run} of a two-way alternating tree automaton $A$ over a finite,
binary labeled tree $T = (S, \lab)$ is a (possibly infinite) directed
acyclic rooted labeled tree\footnote{A rooted directed graph is a tree
  if there is a unique directed path from the root to every other
  vertex in the graph.} $G_\trun = (V_\trun, \lab_\trun, E_\trun)$,
rooted at a designated vertex $r \in V_\trun$.  The labeling function
is of type
$\lab_\trun : V_\trun \to S\times Q \times \set{\ddir,\uldir,
  \urdir}$, and the graph $G_\trun$ satisfies the following
conditions:
\begin{enumerate}[label=(\alph*)]
\item The root $r \in V_\trun$ has label
  $\lab_\trun(r) = (\epsilon, q, \ddir)$, where $q \in I$.

\item
For every node $v \in V_\trun$ with $\lab_\trun(v) = (n, q, m)$ and every child node $v'$ of $v$ (i.e., $(v, v') \in E_\trun$) with $\lab_\trun(v') = (n', q', m')$, we have
\begin{itemize}
  \item
  if $m' = \ddir$, then $n' = n{\cdot}\ldir$ or $n' = n{\cdot}\rdir$,
  \item
  if $m' = \uldir$, then $n = n'{\cdot}\ldir$, and
  \item
  if $m' = \urdir$, then $n = n'{\cdot}\rdir$.
\end{itemize}
Here, we have glossed over the case when $n$ is the root.  In this
case, we might have $n' = n = r$ and $m' = \uldir$.
\item
For every node $v \in V_\trun$ with $\lab_\trun(v) = (n, q, m)$, the set $C_v = \setpred{v'}{(v, v') \in E_\trun}$ of children of $v$ is such that
    $\setpred{(q', d')}{\exists v'\in C_v, \lab_\trun(v') = (n', q', m'), d' = \text{dir}(n, n')} \models \delta_i(q, m, \lab(n))$,
    where $\text{dir}(n, n')$ is $\ldir$, $\rdir$ or $\udir$ if $n'$
    is respectively the left child, right child or the parent of $n$
    in the input tree $T$.  (Here again, if $n = n' = r$, then we say
    $\text{dir}(n, n') = \udir$.)
\end{enumerate}
A run is accepting if every node in $V_\trun$ has $\geq 1$ child,
i.e. all paths starting from the root are infinite.  
An automaton $A$ \emph{accepts} $T$ if there is an accepting run of
$A$ on $T$.

\newcommand{\game}{\text{game}} An alternative presentation of the
acceptance condition of a two-way alternating tree automaton can be
made in terms of 2-player games.  Such a game is played over a
directed graph $G_\game = (V_\game, E_\game)$.  The set of vertices is
$V_\game = V^0_\game \cup V^1_\game$, where
$V^0 = S \times Q \times \set{\ddir,\uldir, \urdir}$ and
$V^1 = \mathcal{P}(Q \times \set{\udir,\ldir, \rdir})$.  The set of
edges is $E_\game = E^{01}_\game \cup E^{10}_\game$, where
$E^{ij}_\game \subseteq V^i_\game \times V^j_\game$ (where
$i\neq j \in \set{0,1}$).  Let us describe the first set of edges
$E^{01}_\game$.  We have $(u, v) \in E^{01}_\game$ iff
$v \models \delta_i(q, m, \lab(n))$, where $u = (n, q, m)$.  We have
$(v, u) \in E^{10}_\game$ iff $\exists (q', d') \in v$ such that
$u = (n', q', m')$, where
\begin{itemize}
  \item if $d' = \udir$ and $n = r$, then $n' = n$ and $m' = \uldir$
  \item if $d' = \udir$ and $n\neq r$, then either $n = n'{\cdot}\ldir$ and $m' = \uldir$ or
  $n = n'{\cdot}\rdir$ and $m' = \urdir$
  \item if $d' = \ldir$, then $n' = n{\cdot}\ldir$ and $m' = \ddir$
  \item if $d' = \rdir$, then $n' = n{\cdot}\rdir$ and $m' = \ddir$
\end{itemize}

We now describe a \emph{play} on this game graph.  A play starts in
some vertex $u_0 = (\epsilon, q_0, \ddir) \in V^0_\game$ with
$q_0 \in I$.  First, Player-$0$ picks a neighbor $v$ of $u_0$.  Next,
Player-$1$ picks a neighbor of $v$, from which Player-$0$ then
continues the play. This process either goes on indefinitely or ends
because either player reaches a node with no outgoing edges.  In the
former case Player-$0$ wins the play, and in the latter case
Player-$1$ wins the play.  For a given game graph, there can be many
different plays as defined above, some in which Player-$0$ wins and
others in which Player-$1$ wins.  The input tree $T$ is accepted by
the 2-way automaton if there is a play in which Player-$0$ wins.
Readers may observe that the two notions of acceptance are equivalent.

We can convert two-way alternating tree automata to equivalent
nondeterministic top-down tree automata with at most exponential
increase in states.
\begin{lemma}[\cite{vardikupferman2000,Vardi98}]
  \lemlabel{2way-to-nondet-lemma} Given a two-way alternating tree
  automaton $A$, we can construct a nondeterministic top-down tree
  automaton $A'$ of size $O(2^{\emph{\textsf{poly}}(|A|)})$ in time
  $O(2^{\emph{\textsf{poly}}(|A|)})$ such that $L(A') = L(A)$.
\end{lemma}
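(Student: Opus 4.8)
The plan is to invoke the classical translation of two-way alternating tree automata into one-way nondeterministic tree automata due to Vardi and Kupferman \cite{Vardi98,vardikupferman2000}; below I sketch the construction one would carry out, specialized to the trivial acceptance condition used by our automata, where a run is accepting precisely when it never dies out (i.e.\ every node of the run has a child). Conceptually the construction has two moves --- eliminating two-wayness and eliminating alternation --- which must be \emph{fused} so that only a single exponential blowup is paid overall (performing them in sequence, by first building a one-way alternating automaton and then removing alternation, would cost two exponentials).

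First I would eliminate two-wayness by an annotation/strategy-tree argument in the style of \cite{Vardi98}. The key observation is that although a run of $A$ on an input tree $T$ traverses edges of $T$ both downward and upward, it visits each node $n$ of $T$ in only finitely many \emph{configurations}: a configuration is a pair $(q,m)$ with $q \in Q$ and incoming direction $m \in \set{\ddir,\uldir,\urdir}$, and there are $O(|Q|)$ of these. Hence the upward excursions the run makes above $n$ can be summarized by bounded information attached to $n$: for each configuration from which the run could pass control from $n$ up to its parent, one records the set of states in which the run could later return down into $n$ after wandering only in the part of $T$ not strictly below $n$. A one-way automaton traversing $T$ top-down can guess these summary annotations at every node and check, locally at each node, that the annotations are mutually consistent (with those of the children, with the known label, and with $\delta_0,\delta_1,\delta_2$) and that all obligations directed into the subtree below $n$ are eventually discharged.

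Second, and simultaneously, I would eliminate alternation by a subset-style construction: since each positive Boolean transition formula has finitely many minimal satisfying sets and the acceptance condition is simply "all paths infinite", a state of the final top-down nondeterministic automaton $A'$ records a \emph{set} of pending annotated configurations at the current node, and a transition nondeterministically picks, for each pending configuration, one minimal satisfying set of its transition formula and distributes the resulting obligations among the node's children and the (already-guessed) upward summary. The technical heart --- and the main obstacle --- is carrying out this bookkeeping jointly with the previous paragraph so that a state of $A'$ is a single object (a set of configurations together with the relevant summary functions) of size $2^{\mathrm{poly}(|A|)}$, and then proving $L(A') = L(A)$ by transforming accepting runs of $A$ into witnessing annotated trees accepted by $A'$ and conversely reconstructing a dead-end-free run of $A$ from any accepting run of $A'$. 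This is exactly what is established in \cite{vardikupferman2000}, and counting the objects above yields the claimed size bound $O(2^{\mathrm{poly}(|A|)})$, with the same bound on construction time; emptiness and the intersection closure used afterwards are then the standard ones for one-way nondeterministic tree automata \cite{tata2007}.
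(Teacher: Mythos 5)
Your proposal is correct and arrives at the same final object as the paper --- a top-down nondeterministic automaton whose states are single-exponential-size per-node annotations, checked for local consistency --- but it is organized along a genuinely different route. You follow the classical Vardi--Kupferman decomposition: summarize upward excursions of the run at each node and simultaneously remove alternation by tracking a set of pending obligations, fusing the two eliminations so that only one exponential is paid (a point you are right to insist on). The paper instead works entirely inside the membership game induced by the input tree: acceptance of $T$ is equivalent to Player-$0$ winning a safety (parity) game over $S \times Q \times \set{\ddir,\uldir,\urdir}$, positional determinacy yields a memoryless winning strategy, and the annotation guessed by $A'$ is exactly that strategy --- one partial function $Q\times\set{\ddir,\uldir,\urdir}\hookrightarrow\Pp(Q\times\set{\udir,\ldir,\rdir})$ per node, with a local consistency condition between a node's strategy and the domains of its neighbors' strategies playing the role of both your upward summaries and your obligation sets. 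The game-theoretic packaging buys a cleaner correctness argument: the step you call the ``technical heart'' --- converting an arbitrary accepting run, which may resolve the same configuration differently on different visits to the same node, into a uniform per-node annotation --- is discharged in one stroke by memoryless determinacy rather than by explicit run surgery. Your presentation, in turn, stays closer to the cited literature and makes the single-exponential accounting more explicit. Both yield $O(2^{\text{poly}(|A|)})$ states and construction time, so either route establishes the lemma.
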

Here, $|A|$ denotes the size of the description of $A$.
\appref{2way-to-top-down-proof} presents a construction for the above
result adapted to the simpler setting of this paper.



\subsection{Proof of~\lemref{2way-to-nondet-lemma}}
\applabel{2way-to-top-down-proof}

The ingredients for the proof of~\lemref{2way-to-nondet-lemma} are as
follows.
\begin{enumerate}
\item We first characterize when an input tree $T$ is accepted by a
  given 2-way automaton $A$, in terms of what we call \emph{strategy
    annotations}.  As we saw, a tree $T$ induces a $2$-player game
  such that $T$ is accepted iff Player-$0$ wins.  This game is a
  finite \emph{parity} game, and is thus \emph{determined} and admits
  a \emph{memoryless
    strategy}~\cite{emerson-jutla,Mostowski1991GamesWF}. That is, for
  each node $v$ of the game graph, one of the players wins if the game
  starts in $v$.  Further, when we fix the initial node
  $v_\text{init}$ of the game graph, the winner, Player-$i$, has a
  memoryless strategy. A memoryless strategy for Player-$i$ is a
  mapping $V^i_\game \to V^{1-i}_\game$ which maps each vertex
  $v \in V^i_\game$ of Player-$i$ to the next move (neighboring
  vertex) which Player-$i$ should choose each time they visit $v$.
\item With the above observation, it follows that we can annotate
  nodes in $T$ with strategy information for the corresponding game
  graph.  If we can annotate the nodes of $T$ with a winning strategy
  for Player-$0$, then the tree is accepted. Otherwise, $T$ is
  rejected.
\item We build a nondeterministic top-down tree automata $A'$ which
  reads an input tree $T$, nondeterministically decorates its nodes
  with strategy annotations, and determines if the annotation
  corresponds to a winning strategy for Player-$0$.  
\end{enumerate}

\subsubsection{Strategy Annotations.}
Given a labeled binary tree $T$, the game graph $G_\game$ induced by
$T$ is played over vertices $V^0_\game \cup V^1_\game$.  A memoryless
strategy for Player-$0$ in such a game is a mapping
$\beta : V^0_\game \to V^1_\game$, or equivalently, a mapping of the
form
$\beta: S \times Q \times \set{\uldir, \urdir, \ddir} \to \Pp(Q \times
\set{\udir, \ldir, \rdir})$.  We can curry this representation to get
a strategy for each node in $T$, specifically, a partial function of
the form
$\strategy: Q \times \set{\uldir, \urdir, \ddir} \hookrightarrow \Pp(Q
\times \set{\udir, \ldir, \rdir})$.  We denote by $\strategies_A$ the
set of such node-specific strategies for $A$. Observe the size of this
set is $O(2^{O(|Q|^2)})$.  Given a tree $T = (S, \lab)$, a strategy
annotation $\stannot$ maps each node of $T$ to some strategy, i.e.,
$\stannot: S \to \strategies_A$ such that it satisfies the transitions
of $A$: for every $n \in S$ and for every
$q \in Q, m \in \set{\uldir, \urdir, \ddir}$ such that
$\stannot(n)(q, m)$ is defined, we have that
$\stannot(n)(q, m) \models \delta_i(q, m, \lab(n))$ (where $i$ is
$0, 1$, or $2$ depending upon the label $\lab(n)$, and $m$ is also
appropriately chosen depending upon the arity of $n$).  We remark that
if for some node $n$ of arity $i$, state $q$ and direction $m$, if
$\delta_i(q, m, \lab(n)) = \texttt{false}$, then $\stannot(n)(q, m)$
cannot be defined for any strategy annotation $\stannot$.

\newcommand{\dom}{\textsf{dom}} For every tree node $n \in S$, let
$\dom(\str) = \setpred{(q, m)}{\str(q, m) \text{ is defined}}$.  A
strategy annotation $\stannot$ over $T$ is \emph{consistent} if for
every $n \in S$ and every $(q, m) \in \dom({\stannot(n)})$, for all
$(q', d') \in \stannot(n)(q, m)$ we have
$(q', m') \in \dom({\stannot(n')})$, where $n' = n{\cdot}d'$ and
$m' = \uldir$ if $n = n'{\cdot}\ldir$, $m' = \urdir$ if
$n = n'{\cdot}\rdir$ and $m' = \ddir$ if $n' = n{\cdot}\ldir$ or
$n' = n{\cdot}\rdir$.  We formalize below the equivalence between the
existence of a consistent strategy annotation of $T$ and its
acceptance by $A$.  The proof follows directly from the fact that the
game induced by $T$ has a memoryless winning strategy for Player-$0$
iff Player-$0$ wins (or equivalently $T$ is accepted).

\begin{proposition}
  \proplabel{accepting-stannot} Let $A$ be a two-way alternating tree
  automaton and let $T$ be an input tree. Then $T$ is accepted by $A$
  iff there is a  consistent strategy annotation $\stannot$ of $T$.
\end{proposition}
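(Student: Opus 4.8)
The plan is to prove both directions by translating between a \emph{memoryless winning strategy} for Player-$0$ in the game $G_\game$ induced by $T$ and a consistent strategy annotation $\stannot$ of $T$. Recall that $G_\game$ is a finite game graph, that Player-$0$ wins precisely the infinite plays, and that—as already observed in the excerpt—this is (an instance of) a finite parity game, hence determined, with the winner having a memoryless winning strategy~\cite{emerson-jutla,Mostowski1991GamesWF}. Since $T$ is accepted by $A$ iff there is $q_0 \in I$ such that Player-$0$ wins from $(\epsilon, q_0, \ddir)$, it suffices to show: Player-$0$ has a memoryless winning strategy from some such initial vertex iff $T$ has a consistent strategy annotation. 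Here I would make explicit a point that the statement leaves implicit: a strategy annotation that is to witness acceptance must be defined at $(\epsilon, q_0, \ddir)$ for some $q_0 \in I$; otherwise the nowhere-defined annotation is vacuously consistent.

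For the ($\Rightarrow$) direction I would take a memoryless strategy $\beta : V^0_\game \to V^1_\game$ winning from $v_0 = (\epsilon, q_0, \ddir)$, let $R \subseteq V_\game$ be the set of vertices visited by the plays consistent with $\beta$ that start at $v_0$ (so $R$ is closed under $\beta$-moves from its $V^0$-vertices and under arbitrary Player-$1$ moves from its $V^1$-vertices), and define, for each tree node $n$, the partial function $\str_n$ with $\dom(\str_n) = \setpred{(q,m)}{(n,q,m)\in R}$ and $\str_n(q,m) = \beta((n,q,m))$, setting $\stannot(n) = \str_n$. Then I would verify the two required properties by unfolding the definitions of $E^{01}_\game$ and $E^{10}_\game$: $\stannot$ is a strategy annotation because, for $(q,m)\in\dom(\str_n)$, $\beta((n,q,m))$ is a legal Player-$0$ move out of $(n,q,m)$, and legal moves there are exactly the $v \in V^1_\game$ with $v\models\delta_i(q,m,\lab(n))$; and $\stannot$ is consistent because, for $(q',d')\in\str_n(q,m)$, the vertex $(n',q',m')$ obtained from $d'$ by the $E^{10}_\game$ rules—which coincide with the $m'$-conventions in the definition of consistency—is a Player-$1$-successor of $\beta((n,q,m))\in R$, hence lies in $R$, hence $(q',m')\in\dom(\str_{n'})$.

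For the ($\Leftarrow$) direction I would, given a consistent $\stannot$ defined at $(\epsilon,q_0,\ddir)$ for some $q_0\in I$, define a memoryless Player-$0$ strategy $\beta$ by $\beta((n,q,m)) = \stannot(n)(q,m)$ wherever the right side is defined (and arbitrarily elsewhere), and then show every play consistent with $\beta$ from $(\epsilon,q_0,\ddir)$ is won by Player-$0$. The key is an invariant, proved by induction along the play: every $V^0$-vertex visited has the form $(n,q,m)$ with $(q,m)\in\dom(\stannot(n))$. It holds initially; and if it holds at $(n,q,m)$ then, $\stannot$ being a strategy annotation, $\beta((n,q,m)) = \stannot(n)(q,m)\models\delta_i(q,m,\lab(n))$ is a legal move, so Player-$0$ is never stuck, and any Player-$1$ response goes to some $(n',q',m')$ with $(q',d')\in\stannot(n)(q,m)$, where consistency gives $(q',m')\in\dom(\stannot(n'))$, re-establishing the invariant. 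Hence the play never terminates with Player-$0$ unable to move, so Player-$0$ wins and $T$ is accepted.

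The bulk of the argument is this routine bookkeeping, and the main thing to get right is the bijective matching of the direction/last-move conventions across three places: the $E^{01}_\game/E^{10}_\game$ transitions, the definition of a strategy annotation, and the definition of consistency (including the boundary behaviour at the root, handled exactly as glossed in the excerpt). The only conceptual input—memoryless determinacy of the finite game $G_\game$—is already supplied by the cited results, so no genuinely hard step remains; the one place care is needed is the implicit root-definedness requirement noted above, which I would fold into the notion of a witnessing annotation.
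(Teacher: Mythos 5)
Your proof is correct and takes essentially the same route as the paper, whose entire argument is the one-line remark that the proposition ``follows directly from the fact that the game induced by $T$ has a memoryless winning strategy for Player-$0$ iff Player-$0$ wins''; you have simply filled in the translation between memoryless winning strategies and consistent annotations (restricting to the reachable region in one direction, and running the safety invariant in the other) that the paper leaves implicit. Your caveat that a witnessing annotation must be defined at $(\epsilon, q_0, \ddir)$ for some $q_0 \in I$ — since the nowhere-defined annotation is vacuously consistent — is a legitimate tightening of an assumption the paper makes only implicitly (e.g.\ via the unspecified initial states of the top-down automaton).
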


\subsubsection{Construction of an Equivalent Top-Down Automaton.}

We now describe the nondeterministic top-down tree automaton $A'$ that
accepts the same language as the given two-way alternating tree
automaton $A$. At a high level, the construction is made possible
by~\propref{accepting-stannot}, which suggests that we \emph{guess} a
strategy annotation for the input tree in one shot. The challenge,
however, is to verify in a top-down manner that the guessed annotation
is consistent.  Observe that the domain $\dom(\stannot(n))$ for any
$\stannot$ and any $n$ is finite (a subset of
$Q\times \{\uldir,\urdir,\ddir\}$).  Thus we can guess these subsets
(call them \emph{bags}) for each tree node.  Verifying that the
guessed strategy annotation is consistent then reduces to checking if
from every pair $(q, m)$ in the bag of a node $n$ and each possible
$(q', d') \in \stannot(n)(q,m)$, we have that the corresponding
$(q', m')$ (where $m'$ defined using $n$ and $d'$) is in the bag of
the next node $n'$.
We formalize the construction below.

\newcommand{\sink}{\textsf{Sink}} Fix a two-way alternating tree
automaton $A = (Q, I, \delta_0, \delta_1, \delta_2)$.  The top-down
automaton is a tuple
$A' = (Q'\uplus \set{\sink}, I', \delta'_0, \delta'_1, \delta'_2)$.
$\sink$ is a special absorbing state.  The set
$Q' = \Gamma \times \strategies_A$ consists of pairs of labels and
strategies, such that for every $q' = (a, \sigma) \in Q'$ we have:
\begin{enumerate}
  \item $\dom(\str) \neq \emptyset$,
  \item for every $(q, m) \in \dom(\str)$, $m$ is appropriate for the
    arity $i$ of $a$,
\item  for every $(q, m) \in \dom(\str)$,
$\str(q, m) \models \delta_i(q,m,a)$,
and in particular $\delta_i(q,m,a) \neq \texttt{false}$.
\end{enumerate}

We now describe the transitions.  First, for every $i$ and every $a$,
$\delta'_i(\sink, a) = \set{\sink}$.  Second, for every $i$ and every
$b \neq a$, $\delta'_i((a, \str), b) = \set{\sink}$.  All other
transitions are described below.

\begin{description}
\item[Transitions on nodes with 1 child.]  We have
  $(a', \str') \in \delta'_1((a, \str), a)$ iff the following hold.
\begin{enumerate}[label=(\alph*)]
\item For every $(q, m) \in \dom(\str)$ and
  $(q', \ldir) \in \str(q, m)$, $(q', \ddir) \in \dom(\str')$.
\item For every $(q, m) \in \dom(\str')$ and
  $(q', \udir) \in \str'(q, m)$, $(q', \uldir) \in \dom(\str)$.
\end{enumerate}

\item[Transitions on nodes with 2 children.]  We have
  $((a'_1, \str'_1), (a'_2, \str'_2)) \in \delta'_2((a, \str), a)$ iff
  the following hold.
\begin{enumerate}[label=(\alph*)]
    \item For every $(q, m) \in \dom(\str)$ and $(q', \ldir) \in \str(q, m)$, $(q', \ddir) \in \dom(\str'_1)$.
    \item For every $(q, m) \in \dom(\str'_1)$ and $(q', \udir) \in \str'_1(q, m)$, $(q', \uldir) \in \dom(\str)$.
    \item For every $(q, m) \in \dom(\str)$ and $(q', \rdir) \in \str(q, m)$, $(q', \ddir) \in \dom(\str'_2)$.
    \item For every $(q, m) \in \dom(\str'_2)$ and $(q', \udir) \in \str'_2(q, m)$, $(q', \urdir) \in \dom(\str)$.
\end{enumerate}

\item[Transition relation for leaf nodes.]  Finally,
  $\delta'_0 \subseteq Q' \times \Gamma_0$ is such that\newline
  $((a, \str), b) \in \delta'_0$ iff $a = b$.
\end{description}

Observe that the transitions are designed to mimic the notion of
consistency for strategy annotations.  Whenever a nondeterministic
transition fails to ensure consistency, the resulting state is
$\sink$.  Further, the transitions also guess the label of the next
tree node, and a failure to correctly guess this also leads to the
state $\sink$.  Finally, given the definition of $\delta'_0$ and the
fact that $\sink$ is absorbing, a tree is accepted by $A'$ iff there
is a consistent strategy annotation for the tree.  The correctness of
the construction is straigtforward.  The complexity result follows
from the observation that the number of states of $A'$ is
$O(|\strategies_A|) = O(2^{\text{poly}(|A|)})$ and the size of
transitions for $A'$ is
$O(\text{poly}(|\strategies_A|)) = O(2^{\text{poly}(|A|)})$.
\vspace{0.1in}

\begin{replemma}{lem:2way-to-nondet-lemma}
  Given a two-way alternating tree automaton $A$, we can construct a
  nondeterministic top-down tree automaton $A'$ of size
  $O(2^{\emph{\text{poly}}(|A|)})$ in time $O(2^{\emph{\text{poly}}(|A|)})$ such
  that $L(A') = L(A)$.
\end{replemma}

\subsection{Top-Down Tree Automata for Program Trees}
\applabel{grammar-to-aut}
\subsubsection{Grammar to Tree Automaton.}
The next task is to represent the set of programs generated by an
input grammar $\gram$ as a regular set of program trees. We construct
a top-down tree automaton $\treeaut{A}_\gram$ which accepts precisely
the set of trees that correspond to the programs generated by
$\gram = (\Delta_0\uplus\Delta_1\uplus\Delta_2, St, NT, R)$.  We
require that $\gram$ conforms to the schema $\schema$ discussed
in~\secref{schema}.

We now define the components of
$\treeaut{A}_\gram = (Q^\gram, I^\gram, \delta_0^\gram,
\delta_1^\gram, \delta_2^\gram)$.  The states of $\treeaut{A}_\gram$
correspond to the nonterminals of $\gram$ plus a special starting
state $q_0$. That is, $Q^\gram = \set{q_0} \uplus
NT$,
and $I^\gram = \set{q_0}$.
The transitions are defined as follows.
\begin{align*}
  \delta^\gram_0 &= \setpred{(P, a)}{\dblqt{P \produces \, a} \in R, a \in \Delta_0} \\
  \delta^\gram_1(q, a) &=
\begin{cases}
\set{St} & \text{ if } q = q_0, a = \dblqt{\proot} \\
\setpred{P_1}{\dblqt{P \produces \, \pwhile \, (x \sim y) \,\, P_1} \in R} & \text{ if } q = P, a = \dblqt{\pwhile(x \sim y)} \\
\emptyset & \text{ otherwise }
\end{cases} \\
\delta^\gram_2(P, a)
&=
\begin{cases}
\setpred{
(P_1, P_2)
}{
	\dblqt{P \produces \, \pif \, (x \sim y) \, \pthen \, P_1 \, \pelse \, P_2} \in R
}
& \text{ if } a = \dblqt{\pite}(x \sim y) \\
\setpred{
(P_1, P_2)
}{
\dblqt{P \produces \, P_1 \, ; \, P_2} \in R
}
& \text{ if } a = \dblqt{\pseq} \\
\end{cases}
\end{align*}

The following lemma states that the language of the tree automaton
constructed above accurately represents programs from $\gram$.  The
notation $\prog(T)$ refers to the word obtained by collecting labels
during an in-order traversal of tree $T$.
\begin{lemma}
  Let $\gram$ be a grammar conforming to the schema $\schema$ and let
  $\treeaut{\Aa}_\gram$ be the tree automaton constructed above. Then,
  we have
  $L(\gram) = \setpred{\emph{\prog(\emph{T})}}{T \in
    L(\treeaut{A}_\gram)}$.  Further, $\treeaut{A}_\gram$ has size
  $O(|\gram|)$ and can be constructed in time $O(|\gram|)$.\qed
\end{lemma}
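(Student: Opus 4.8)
The plan is to connect $\treeaut{A}_\gram$ to $\gram$ via parse trees. I would define a map $\tau$ sending each parse tree of $\gram$ to a program tree, show that $\prog(\tau(D))$ equals the program string derived by $D$, and prove that the image of $\tau$ is exactly $L(\treeaut{A}_\gram)$. The language equality then drops out, and the size and time bounds follow because the states and transitions of $\treeaut{A}_\gram$ are in bijection with the nonterminals and productions of $\gram$.

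Since $\gram$ conforms to the schema $\schema$, every production of $\gram$ rewrites a nonterminal $P$ to one of the fixed shapes listed in $\schema$: a single leaf letter $a \in \Delta_0$ (an assignment, $\passume$, $\passert$, or $\pskip$), a $\pwhile$-construct $\pwhile\,(x\sim y)\,P_1$ with one nonterminal loop body, or an $\pif\,(x\sim y)\,\pthen\,P_1\,\pelse\,P_2$ (respectively $P_1;P_2$) construct with two nonterminal children. Given a parse tree $D$ of $\gram$ rooted at $St$, I define $\tau(D)$ by (i) relabeling each internal production node by the matching tree label $\dblqt{\pwhile(x\sim y)}$, $\dblqt{\pite(x\sim y)}$, or $\dblqt{\pseq}$ and retaining the subtrees derived from $P_1$ (and $P_2$) as its children, (ii) collapsing each production $\dblqt{P\produces a}$ to a leaf labeled $a$, and (iii) prepending a fresh root labeled $\dblqt{\proot}$ whose unique child is the image of the old root. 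A routine induction on $D$ shows that the in-order traversal $\prog(\tau(D))$ reinserts the guards, the $\pthen/\pelse$ keywords, and the sequencing in precisely the positions dictated by these shapes, and therefore equals the program string derived by $D$.

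Next I would establish $T \in L(\treeaut{A}_\gram)$ iff $T = \tau(D)$ for some parse tree $D$ of $\gram$, by structural induction. For the forward direction, an accepting run $\mu$ of $\treeaut{A}_\gram$ on $T$ has $\mu(\epsilon) = q_0$, and $q_0$ admits a transition only on a unary node labeled $\dblqt{\proot}$, namely $\delta_1^\gram(q_0,\dblqt{\proot}) = \set{St}$; hence the root of $T$ is a $\dblqt{\proot}$-node whose child is forced into state $St$. From then on, whenever $\mu$ assigns a state $P$ to a node with label $a$, the definitions of $\delta_0^\gram$, $\delta_1^\gram$, and $\delta_2^\gram$ guarantee a production $\dblqt{P\produces\cdots}\in R$ whose right-hand side shape matches $a$ and whose nonterminals are exactly the states $\mu$ assigns to that node's children; collecting these productions node by node produces a parse tree $D$ with $\tau(D) = T$. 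Conversely, given a parse tree $D$ I would construct the run $\mu$ on $\tau(D)$ directly from the productions used in $D$, checking each local transition condition against the schema; the sole leaf obligation $(P,a) \in \delta_0^\gram$ holds precisely because $\dblqt{P\produces a}\in R$, and since every internal node admits the required child assignment the run never gets stuck and is therefore accepting. Combining this equivalence with the previous paragraph yields $L(\gram) = \setpred{\prog(T)}{T \in L(\treeaut{A}_\gram)}$.

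Finally, $Q^\gram = \set{q_0}\uplus NT$ has $|NT|+1$ states, and each of the finitely many entries of $\delta_0^\gram$, $\delta_1^\gram$, $\delta_2^\gram$ is produced by a single production of $R$ (apart from the one distinguished $\dblqt{\proot}$-transition), so the entire description of $\treeaut{A}_\gram$ has size $O(|NT|+|R|) = O(|\gram|)$ and is obtained by a single linear scan over the productions of $\gram$, giving $O(|\gram|)$ construction time. I expect the only mildly delicate part to be pinning down the in-order traversal convention so that $\prog(\tau(D))$ is literally the derived program text --- in particular the placement of the guard and the $\pthen/\pelse$ keywords around the two subtrees of a $\pite$ node --- but this is pure bookkeeping; the rest is a direct structural induction driven by the rigid shapes of the schema productions.
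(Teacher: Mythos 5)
Your proposal is correct and matches the paper's intent exactly: the paper states this lemma without proof, remarking only that the construction mimics the standard correspondence between parse trees of a context-free grammar and runs of a top-down tree automaton whose states are the nonterminals, which is precisely the bijection $\tau$ and the two-directional structural induction you carry out. The size and time bounds follow, as you note, from the one-to-one correspondence between transitions of $\treeaut{\Aa}_\gram$ and productions of $\gram$.
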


\newpage

\section{Proof of Lower Bound for Program Synthesis}

\applabel{proof-of-lower-bound}

Let $M = (Q, \Delta, \delta, q_0, g)$ be a single-tape alternating
Turing machine (ATM) with exponential space bound, where $Q$ and
$\Delta$ are finite sets of states and tape symbols, respectively. The
transition function has the form
$\delta : (Q \times \Delta) \rightarrow \Pp(Q \times \Delta \times
\{L, R\})$. Without loss of generality, we will assume that either
there exist exactly two transitions (referred to as \code{0} and
\code{1}) for any particular configuration or none at all. The initial
state is $q_0 \in Q$ and
$g : Q \rightarrow \{acc, rej, \wedge, \vee\}$ maps states to their
type. It will be convenient to represent machine configurations as
sequences of symbols. For a given machine $M$, this can be allowed by
working with a modified machine $M'$ whose alphabet
$\Gamma = \Delta \cup (Q \times \Delta)$ contains the original symbols
from $\Delta$ as well as composite symbols from $(Q \times \Delta)$ to
encode both the machine head position and machine state. For example,
for state $q \in Q$ and symbol $t \in \Delta$, the composite symbol
$(q, t) \in \Gamma$ encodes the tape head reading regular symbol $t$
with the machine in state $q$. The transition function $\delta$ can
easily be modified to account for this representational change, and we
omit the details. A \emph{universal} (resp. \emph{existential})
configuration is a sequence of symbols over $\Gamma$ containing
exactly one composite symbol $(q, t)$, with
$g(q) = \wedge \, \, (\text{resp. } g(q) = \vee)$. 
The set of \emph{accepting configurations} is the smallest set $S$
that contains (a) all configurations whose state $q$\textbf{} has
$g(q) = acc$, (b) all universal configurations such that \emph{every}
configuration reachable within one transition belongs to $S$ and (c)
all existential configurations such that there is \emph{some}
configuration in $S$ reachable within one transition. An ATM $M$
accepts an input $w$ if the initial configuration is contained in
$S$. In what follows, we assume without loss of generality that
existential configurations are immediately followed by universal
configurations under any transition, and vice versa. Further, we
assume the initial configuration is existential, and any configuration
whose state $q$ has $g(q)\in\{acc,rej\}$ is halting, i.e. there is no
transition.

Our representation of configurations as sequences of tape symbols
allows us to work with a modified transition relation $\delta_W$ that
is lifted to configuration windows.  A configuration window is a
triple of tape symbols. After endowing an ATM with compositite
symbols, as mentioned above, the information in $\delta$ can be easily
represented by $\delta_W \subseteq (\Gamma^3 \times \Gamma)$, which
relates triples of tape symbols to symbols that the middle cell can
legally transition to according to $\delta$.  For convenience, in the
reduction grammar we will overload $\delta_W$, writing
$\delta_W(0, t_i, t_j, t_k)$ to denote the tape symbol for the cell
containing $t_j$ (with $t_i$ and $t_k$ to the left and right) after
the machine takes the \code{0} transition.  Similarly,
$\delta_W(1, t_i, t_j, t_k)$ will denote the tape symbol for the
\code{1} transition. If the window $(t_i, t_j, t_k)$ is ill-formed
(for example, it may contain more than one composite machine state
symbol) we say $\delta_W(0, t_i, t_j, t_k)$ and
$\delta_W(1, t_i, t_j, t_k)$ are undefined. We omit details for
handling corner cases in which the machine reads at either edge of the
tape. This can be easily dealt with by assuming special edge-of-tape
symbols. In the forthcoming grammar, we assume $|\Gamma| = k$ and use
$t_1 \ldots t_k$ as constants to model the (extended) alphabet of a
(appropriately modified) machine $M$. Indices for such variables are
sometimes used to indicate the particular symbol they contain, e.g.
$t_{blank}$ refers to the unique blank symbol.

\newpage
\subsection{Reduction}
\label{sec:reduction}

\begin{minipage}{0.6\textwidth}
  Before presenting the reduction grammar $\gram_{M, w}$, we discuss
  its primary components and summarize the purposes of its program
  variables. Recall that the goal of the reduction is to produce a
  grammar $\gram_{M, w} \in \schema$ that contains a correct program
  exactly when an $\aexpspace$ turing machine $M$ accepts an input
  $w$. Essentially, this amounts to building a grammar whose correct
  programs encode winning strategies for player Eve in a two-player
  game semantics. That is, we want a correct program to exist in
  $\gram_{M, w}$ exactly when there is a configuration tree starting
  from the initial configuration of $M$ on $w$, branching
  appropriately according to $\delta$, and terminating in accepting
  leaf configurations. The grammar encodes the alternation by
  branching on a \code{turn} variable (for players Adam and Eve) and
  choosing the next transition accordingly. For Eve's turn, the
  grammar enforces a transition choice from the synthesizer, whereas
  Adam's turns are read from the uninterpreted function
  \code{choice}. Relying on the intuition that correct programs must
  satisfy assertions in every data model, it is straightforward to see
  that this captures the semantics of ATMs. Deeper in the grammar,
  observe that after a move has been made, there is a mechanism for
  requiring that the next configuration (produced by the synthesizer)
  indeed follows from the selected move according to $\delta$. As
  noted earlier, the full configuration cannot be represented in
  program variables all at once because $M$ may use exponential space
  (and hence a grammar using exponentially many variables could not be
  produced in polynomial time). To solve this, the grammar uses a loop
  to iterate through the full configuration, enforcing that the
  synthesizer produces the configuration contents one cell at a
  time. See \figref{StrategyTree} for an illustration of this idea.
\end{minipage}
\hspace{0.5cm}
\begin{minipage}{0.4\textwidth}
\centering

\begin{figure}[H]
\scalebox{0.75}{
\begin{tikzpicture}
\node (bigc1) at (0,0) [draw=drawioBlue!80!black, circle, minimum width = 4em, thick] {};
\node (t1) at (0,-2.5) [draw=drawioBlue!80!black, text=black, circle, minimum width = 1em] {$t_1$};
\node (one) at (1.5,-2.5)  {\Large $1$};
\node (t3) at (0,-4) [draw=drawioBlue!80!black, text=black, circle, minimum width = 1em, outer sep=4pt] {$t_3$};
\node (two) at (1.5,-4)  {\Large $2$};
\node (t7) at (0,-5.5) [draw=drawioBlue!80!black, text=black, circle, minimum width = 1em, outer sep=4pt] {$t_7$};
\node (one) at (1.5,-5.5)  {\Large $2^\text{poly(m)}$};
\node (sq) at (0,-7.5) [draw=drawioRed!80!black, text=black, rectangle, minimum width = 4em, thick, minimum height = 4em] {};
\node (bigc2) at (-2,-9.5) [draw=drawioBlue!80!black, circle, minimum width = 4em, thick] {};
\node (bigc3) at (2,-9.5) [draw=drawioBlue!80!black, circle, minimum width = 4em, thick] {};
\node (t5) at (-2,-12) [draw=drawioBlue!80!black, text=black, circle, minimum width = 1em] {$t_5$};
\node (t2) at (-2,-13.5) [draw=drawioBlue!80!black, text=black, circle, minimum width = 1em, outer sep=4pt] {$t_2$};
\node (emptyl) at (-2,-14.5) [outer sep=4pt] {};
\node (t16) at (2,-12) [draw=drawioBlue!80!black, text=black, circle, minimum width = 1em] {$t_{16}$};
\node (t8) at (2,-13.5) [draw=drawioBlue!80!black, text=black, circle, minimum width = 1em, outer sep=4pt] {$t_8$};
\node (emptyr) at (2,-14.5) [outer sep=4pt] {};

 \draw (bigc1) edge[-{Latex[length=2mm, width=2mm]}, thick]   node[fill=white, minimum width=0em] {\bf 0} (t1);
 \draw (t1) edge[-{Latex[length=2mm, width=2mm]}, thick] (t3);
 \draw[dotted, ultra thick] (t3) -- (t7);
 \draw (t7) edge[-{Latex[length=2mm, width=2mm]}, thick] (sq);
 \draw (sq) edge[-{Latex[length=2mm, width=2mm]}, thick, bend right] node[fill=white, minimum width=0em] {\bf 0} (bigc2);
 \draw (sq) edge[-{Latex[length=2mm, width=2mm]}, thick, bend left] node[fill=white, minimum width=0em] {\bf 1} (bigc3);
  \draw (bigc2) edge[-{Latex[length=2mm, width=2mm]}, thick]   node[fill=white, minimum width=0em] {\bf 1} (t5);
   \draw (bigc3) edge[-{Latex[length=2mm, width=2mm]}, thick]   node[fill=white, minimum width=0em] {\bf 0} (t16);
 \draw (t5) edge[-{Latex[length=2mm, width=2mm]}, thick] (t2);
  \draw (t16) edge[-{Latex[length=2mm, width=2mm]}, thick] (t8);
 \draw[dotted, ultra thick] (t2) -- (emptyl);
 \draw[dotted, ultra thick] (t8) -- (emptyr);
\end{tikzpicture}
}
\caption{ \figlabel{StrategyTree} A strategy tree for Eve. Large
  circles represent Eve's moves, squares represent Adam's. Small
  circles represent steps in which Eve chooses a tape symbol and Adam
  checks it. Each move requires Eve to output an exponential number of
  tape symbols $t_i$, one after the other. She must be able to do this
  for each of Adam's moves in addition to her own. Such a strategy, if
  Adam indeed agrees with the tape symbols, witnesses an accepting
  computation tree for the ATM. }
\end{figure}
\end{minipage}

Further, since the full configuration contents cannot be stored at
once, the correctness check must distribute the work across all data
models. For an input $w$ with $m = |w|$, the grammar utilizes
$\textit{n = poly(m)}$ index variables $\code{s}_1 \ldots \code{s}_n$
to point into the configuration. For any given data model, the index
points at a single tape cell. For this single cell, the grammar
enforces that all transitions are correct. Since uninterpreted
programs must be correct in all data models, it follows that a correct
program from the target grammar will witness the correctness of
transitions for all tape cells. Finally, the grammar enforces that any
leaf configurations are accepting. Table~\ref{tab:SummaryOfPurposes}
summarizes the purposes of all variables in the grammar. We denote
vectors of variables with boldface, e.g.
$\code{s}_1 \ldots \code{s}_n$ by \code{\codekey{s}}.

\vspace{0.1in}
\begin{table}
\begin{tabular}{rl}
  \hline \\
  $\code{w}_1 \ldots \code{w}_m$:& store input tape contents \\
  $\code{t}_1 \ldots \code{t}_k$:& constants to represent tape symbols \\
  $\code{s}_1 \ldots \code{s}_n$:& index holding location of the cell being
                     checked \\
  $\code{s}_1' \ldots \code{s}_n'$:& index holding binary predecessor to
                       $\code{s}_1 \ldots \code{s}_n$ \\
  $\code{s}_1'' \ldots \code{s}_n''$:& index holding binary successor to
                         $\code{s}_1 \ldots \code{s}_n$ \\
  $\code{b}_1 \ldots \code{b}_{n+1}$:& index for pointing at current tape
                         cell during iteration \\
  \code{0}, \code{1}:& constants used for indices as well as move choices \\
  \code{cell}:& holds the putative current tape symbol \\
  \code{turn}:& holds \code{0} or \code{1}, represents which player has the next move \\
  \code{dir}:& holds \code{0} or \code{1}, the most recent move choice \\
  $\code{c}_1, \code{c}_2, \code{c}_3$:& the previous contents of the configuration
                    window for the cell being checked \\
  $\code{c}_1', \code{c}_2', \code{c}_3'$:& the next contents of the configuration
                       window for the cell being checked \\
  \code{choice}:& uninterpreted function that models Adam's moves,\\
                   &restricted to \code{0} and \code{1} with an assume statement \\
  \code{next}:& uninterpreted function that captures the
                time-dependence of Adam's decisions \\
  \\
  \hline
  \vspace{0.25in}
\end{tabular}
  \caption{\label{tab:SummaryOfPurposes}Summary of purposes for
    grammar variables. }
\end{table}

\figref{Grammar1} and~\figref{Grammar2} present the full grammar
$\gram_{M, w}$. The reader may note that $\gram_{M, w}$ does not
appear to conform to grammar schema $\schema$. It is not hard to see
that in fact $\gram_{M, w}$ can be factored appropriately by
introducing a polynomial number of new nonterminal symbols to produce
the components involving fixed instruction sequences. Before arguing
that our reduction is correct, we pause to mention a few
presentation-related details and to explain the important rules for
$\gram_{M, w}$.

To simplify the presentation of the grammar and promote its
interesting rules, several simplifications and omissions were
made. First, we omit $\pelse$ branches whenever they include only a
$\pskip$ statement. Second, several of the conditions in $\pif$ and
$\passume$ statements consist of boolean combinations of equality and
disequality. These can be translated into semantically equivalent
statements using sequences of nested $\pif$ statements. For each
condition in $\gram_{M, w}$, the translated code is of polynomial
size. This crucially relies on the fact that each condition is already
expressed in conjunctive normal form. We omit the precise translation,
which is straightforward. A few places in the grammar use variables
without subscripts to denote the bitvector representation of a number
$n$, e.g. \code{\codekey{n}}, and subscripted variables to access the
components. Equalities over bitvectors (e.g.
$\pif \,\code{(\codekey{b} = \codekey{n})}$) ultimately are handled
using a conjunction of equalities on each bit. Overflow and underflow
in the binary operation rules are not addressed, but could easily be
fixed by adding conditional statements and keeping a few variables as
flags to signal such events. Finally, when two bitvectors of disequal
length are compared (e.g. \code{\codekey{b}} and \code{\codekey{s}} in
\emph{Check}), the bit-wise comparison of the least significant bits
is intended.








We now trace the important parts of the grammar, and encourage the
reader to follow along in~\figref{Grammar1}. The \emph{Move} rule
extends the strategy tree by one move or, alternatively, finishes it
in \emph{Base} by asserting that configurations are accepting. For
extending the tree, the grammar checks which player's turn is next
with a conditional statement. On one branch the synthesizer is allowed
to choose a transition to make, and on the other the transition is
determined by reading from \code{choice}. After determining the next
move in each branch, the \emph{Generate} rule simulates and checks
it. The synthesizer iteratively produces the contents of each tape
cell inside a $\pwhile$ loop. It is allowed to branch on the index
variables that determine the current iteration (and \emph{not} the
secret index \code{\codekey{s}}) in order to decide which symbol to
produce (see \emph{ProduceCell}.)  The grammar checks correctness of a
transition only for the particular cell it happens to be tracking (see
\emph{Check}). After this, the process is repeated with another
\emph{Move}. Recall that in general, a strategy for Eve requires more
memory than can be explicitly allocated in program variables. The
grammar provides for this memory by placing \emph{Generate} rules
under each branch of the \emph{Move} rule. This has the effect of
using the program counter as memory, as mentioned earlier.

Our grammar can be produced in time polynomial in $m = |w|$. There are
polynomially many grammar rules, and each is clearly of size
polynomial in $m$. The key ideas that make this possible are a bounded
loop to produce tape contents and distributing the check for
transition correctness across data models. Finally, memory is provided
to the synthesizer by choosing a grammar that allows the program
structure to grow as a tree, with branches encoding the move history.


\begin{figure}[H]
\vrule \hspace{0.1cm}
\begin{minipage}[t]{.48\textwidth}
\begin{lstlisting}[basicstyle=\scriptsize,mathescape=true,escapechar=\#]
  $\production{bit}$   $\produces$ #\code{0}# $\mid \ldots \mid$ #\code{1}#
  $\production{tape}$ $\produces$ #\code{t$_1$}# $\mid \ldots \mid$ #\code{t$_k$}#
  $\production{idx}$   $\produces$ #\code{b$_1$}# $\mid \ldots \mid$ #\code{b$_{n+1}$}#

  $\production{S} \produces \production{Prelude} \code{;} \production{Move}$

  $\production{Move} \produces$
             $\mid \production{Base}$
             $\mid \pif \code{(turn = 1)}$
                     $\code{dir} \passign \production{bit} \code{;}$
                     $\code{turn} \passign \code{0;}$
                     $\production{Generate}$
               $\pelse$
                     $\code{dir} \passign \code{choice(univ);}$
                     $\passume \code{(dir = 0} \vee \code{dir = 1);}$
                     $\code{univ} \passign \code{next(univ);}$
                     $\code{turn} \passign \code{1;}$
                     $\pif \code{(dir = 0)} \production{Generate}$
                     $\pelse$         $\production{Generate}$

  $\production{Generate} \produces$
             $\code{\codekey{b}} \passign \code{\codekey{0};}$
             $\pwhile$ #\code{(b$_{n+1}$ $\neq$ 1)}#
                     $\production{ProduceCell} \code{;}$
                     $\production{Check} \code{;}$
                     $\production{Increment} \code{;}$
             $\production{Move}$

\end{lstlisting}
\end{minipage}  
\vrule 
\begin{minipage}[t]{.48\textwidth}
\begin{lstlisting}[basicstyle=\scriptsize,mathescape=true,escapechar=\#]
  $\production{Base} \produces$
               $\pif \code{(} \bigvee_{\code{t} \in \Gamma \wedge \,comp(\code{t})}$ #\code{c$_2$ = t)}#
                   $\passert$ #\code{($\bigvee_{\code{t}' \in \Gamma \wedge \,acc\code{(t}'\code{)}}$ c$_2$ = t$'$)}#

  $\production{ProduceCell} \produces$
             $\mid \pif \code{(} \production{idx} \code{ = } \production{bit} \code{)}$
                     $\production{ProduceCell}$
               $\pelse$
                     $\production{ProduceCell}$
             $\mid \code{cell} \passign \production{tape}$

  $\production{Check} \produces$
             $\pif \code{(\codekey{b} = \codekey{s}}'\code{)}$
                     #\code{c$_1'$}# $\passign \code{cell}$
             $\pif \code{(\codekey{b} = \codekey{s})}$
                     #\code{c$_2'$}# $\passign \code{cell}$
                     $\pif$ #\code{(c$_1$ = t$_i$ $\wedge$ c$_2$ = t$_j$}#
                                     #\code{$\wedge$ c$_3$ = t$_l$)}#
                             $\pif \code{(dir = 0)}$
                                     $\passert$ #\code{(cond$_0$)}#
                             $\pelse$
                                     $\passert$ #\code{(cond$_1$)}#
                     $\pelse\,\, \pif$ #\code{(c$_1$ = t$_{i'}$ $\wedge$ c$_2$ = t$_{j'}$}#
                                           #\code{$\wedge$ c$_3$ = t$_{l'}$)}#
                           #\code{$\ldots$}#
             $\pif \code{(\codekey{b} = \codekey{s}}''\code{)}$
                     #\code{c$_3'$}# $\passign \code{cell}$
             #\code{c$_1 \passign$ c$_1'$; c$_2 \passign$ c$_2'$; c$_3 \passign$ c$_3'$}#
\end{lstlisting}
\end{minipage}
\vrule
\caption{\figlabel{Grammar1} These rules impose the strategy tree
  structure and check correctness of each move. Note that \emph{Check}
  elides the full branching on possible windows and uses a shorthand
  condition \,\passert\code{(cond$_x$)}\, to denote
  \,\passert\code{(c$_2$ = $\delta_W$(x, t$_i$, t$_j$, t$_k$))}\,
  whenever \,\code{$\delta_W$(x, t$_i$, t$_j$, t$_k$)}\, is defined,
  and to denote \,\passert(\pfalse)\, otherwise. The shorthand
  notation $comp(\code{t})$ holds for any composite symbol
  $\code{t} = (x, q) \in \Gamma$, and $acc(\code{t})$ holds for any
  composite symbol $\code{t} = (x, q)$ with $g(q) =
  acc$. See~\figref{Grammar2} for \emph{Prelude} and
  \emph{Increment}. }
\end{figure}


\begin{figure}[H]
\vrule \hspace{0.1cm}
\begin{minipage}[t]{.48\textwidth}
\begin{lstlisting}[basicstyle=\scriptsize,mathescape=true,escapechar=\#]
  $\production{Prelude} \produces$
             $\production{DistinctConstants}\code{;}$
             $\passume\code{(}\bigwedge_{i \in [m]} \code{w}_i \code{ = t}_{\code{w}_i} \code{);}$
             $\passume\code{(}\bigwedge_{i \in [n]} \code{s}_i \code{ = 0} \vee \code{s}_i \code{ = 1);}$
             $\passume\code{(}\codekey{s}' \code{ = } \codekey{s} \wedge \codekey{s}'' \code{ = } \codekey{s} \code{);}$
             $\production{SetPred}\code{;}\production{SetSucc}\code{;}$
             $\code{\codekey{b}} \passign \code{\codekey{0};}$
             $\pwhile$ #\code{(b$_{n+1} \neq$ 1)}#
                     $\pif \code{(\codekey{b} = \codekey{0})}$
                             $\code{cell} \passign \code{w}_1$
                     $\pelse\,\, \pif \code{(\codekey{b} = \codekey{1})}$
                             $\code{cell} \passign \code{w}_2$
                      $\ldots$
                     $\pelse\,\, \pif \code{(\codekey{b} = \codekey{n})}$
                             $\code{cell} \passign \code{w}_n$
                     $\pelse$
                             $\code{cell} \passign \code{t}_{\code{blank}}$
                     $\pif \code{(\codekey{b} = \codekey{s}}'\code{)}$
                             $\code{c}_1 \passign \code{cell}$
                     $\pif \code{(\codekey{b} = \codekey{s}}\code{)}$
                             $\code{c}_2 \passign \code{cell}$
                     $\pif \code{(\codekey{b} = \codekey{s}}''\code{)}$
                             $\code{c}_3 \passign \code{cell}$
                     $\production{Increment}$
             $\code{turn} \passign \code{1}$

  $\production{DistinctConstants} \produces$
             $\passume$ #\code{(0 $\neq$ 1);}#
             $\passume\code{(}\bigwedge_{i, j \in [m], i \neq j} \code{w}_i \neq \code{w}_j\code{);}$
             $\passume\code{(}\bigwedge_{i, j \in [k], i \neq j} \code{t}_i \neq \code{t}_j\code{)}$
\end{lstlisting}
\end{minipage}  
\vrule \hspace{0.25cm}
\begin{minipage}[t]{.45\textwidth}
\begin{lstlisting}[basicstyle=\scriptsize,mathescape=true,escapechar=\#]
  $\production{Increment} \produces$
             $\pif$ #\code{(b$_1$ = 0)}#
                     #\code{b$_1 \passign$ 1}#
             $\pelse$
                     #\code{b$_1 \passign$ 0}#
                     $\pif$ #\code{(b$_2$ = 0)}#
                             #\code{b$_2 \passign$ 1}#
                     $\pelse$
                             #\code{b$_2 \passign$ 0}#
                             $\pif$ #\code{(b$_3$ = 0)}#
                             $\ldots$

  $\production{SetPred} \produces$
             $\pif$ #\code{(s$_1'$ = 0)}#
                     #\code{s$_1' \passign$ 1}#
                     $\pif$ #\code{(s$_2'$ = 0)}#
                             #\code{s$_2' \passign$ 1}#
                             $\ldots$
                     $\pelse$
                             #\code{s$_2' \passign$ 0}#
             $\pelse$
                     #\code{s$_1' \passign$ 0}#

  $\production{SetSucc} \produces$
             $\pif$ #\code{(s$_1''$ = 1)}#
                     #\code{s$_1'' \passign$ 0}#
                     $\pif$ #\code{(s$_2''$ = 1)}#
                             #\code{s$_2'' \passign$ 0}#
                             $\ldots$
                     $\pelse$
                             #\code{s$_2'' \passign$ 1}#
             $\pelse$
                     #\code{s$_1'' \passign$ 1}#
\end{lstlisting}
\end{minipage}
\vrule
\caption{\figlabel{Grammar2} \emph{Prelude} encodes the input tape
  symbols in the variables \code{w$_i$}, where $\code{t}_{\code{w}_i}$
  denotes the corresponding tape symbol constant. It also establishes
  the secret tracked cell index in variables \code{s$_j$}.  The
  \pwhile-loop prepares the initial tape configuration. To the right
  are the rules for bitvector operations \emph{Increment},
  \emph{SetPred}, and \emph{SetSucc}.}
\end{figure}

\subsection{Correctness}
\label{sec:correctness}

\begin{theorem}
  \thmlabel{lower-bound-correctness} $M$ accepts input $w$ iff there
  exists a coherent program $\, p \in \gram_{M, w} \,$ that is
  correct.
\end{theorem}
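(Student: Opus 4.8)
The plan is to prove the two implications separately, after (i) noting that coherence is automatic for every program in the grammar, and (ii) fixing a correspondence between programs derivable from $\gram_{M,w}$ and candidate strategy trees for player Eve in the two‑player game semantics of $M$ on $w$ (see \figref{StrategyTree}). For (i): the only function symbols are $\cd{choice}$ and $\cd{next}$, and by inspection of \figref{Grammar1} and \figref{Grammar2} they occur only in the fixed pattern $\cd{dir} \passign \cd{choice}(\cd{univ})\,;\ \cd{univ} \passign \cd{next}(\cd{univ})$. Hence every non‑constant term ever computed along any execution is of the form $\cd{next}^k(\init{\cd{univ}})$ or $\cd{choice}(\cd{next}^k(\init{\cd{univ}}))$, each computed at most once, and no $\passume$ in the grammar ever constrains a superterm of such a term. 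Thus the memoizing and early‑assume conditions hold vacuously, and it suffices to prove ``$M$ accepts $w$ iff there is a \emph{correct} $p \in L(\gram_{M,w})$''. For (ii): a program from $\gram_{M,w}$ has a tree shape in which each use of the \emph{Move} rule is a node; the \emph{Base} alternative closes a branch, while the recursive alternative fixes a bit $\cd{dir}$ (and, inside \emph{ProduceCell}, a symbol for each value of the iteration index $\codekey{b}$) on the synthesizer‑controlled turn, and supplies two sub‑derivations (under $\pif\,\cd{(dir=0)}$ and under $\pelse$) on the $\cd{choice}$‑controlled turn. So $p$ encodes, for every move history $h$, a claimed next configuration $C_h$ read off from the symbols \emph{ProduceCell} emits for $\codekey{b}=0,1,\dots,2^{\mathrm{poly}(m)}-1$ along the branch of $p$ indexed by $h$; crucially \emph{ProduceCell} may branch only on the $\cd{b}_i$'s and never on the secret index $\cd{s}_i$, so $C_h$ depends on $h$ alone, and the nesting of \emph{Generate} under each branch of \emph{Move} supplies the program counter as memory for the (large) history $h$.

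\textbf{($\Rightarrow$).} Suppose $M$ accepts $w$; fix a finite accepting computation tree $\mathcal T$ (a winning strategy for Eve): existential nodes have one child, universal nodes two, successive configurations are related by $\delta$, and all leaves are in $\cd{acc}$ states. Build $p$ so that Eve plays what $\mathcal T$ prescribes at each node and emits in \emph{ProduceCell}, for index $\codekey{b}$, the symbol at cell $\codekey{b}$ of the configuration $\mathcal T$ assigns to the current history. Then $p \in L(\gram_{M,w})$ by construction (the refactoring of $\gram_{M,w}$ into schema $\schema$ with polynomially many auxiliary nonterminals does not change the language). For correctness, take any data model $\Mm$. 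If $\Mm$ falsifies a \emph{Prelude} assumption (constants not distinct, some $\cd{s}_i\notin\{\cd0,\cd1\}$, etc.) the unique execution is infeasible; otherwise $\Mm$ pins a tracked cell index $j$ (from the values of $\codekey{s}$) and a sequence $d_1,d_2,\dots$ of Adam moves (from $\Mm(\cd{choice})$ on $\cd{next}^k(\init{\cd{univ}})$). The feasible execution traces the root‑to‑leaf path of $\mathcal T$ along Eve's strategy and Adam's $d_k$; a loop invariant for each \emph{Generate} loop shows $\cd{c}_1,\cd{c}_2,\cd{c}_3$ (resp.\ the primed variables) hold the cell‑$(j{-}1,j,j{+}1)$ contents of the previous (resp.\ next) configuration, so the \emph{Check} assertion reads $\cd{c}_2' = \delta_W(\cd{dir},\cdot,\cdot,\cdot)$, true since $\mathcal T$ is a valid tree, and the \emph{Base} assertion is true since the leaf of $\mathcal T$ is accepting. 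Hence $p$ is correct.

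\textbf{($\Leftarrow$).} Suppose $p\in L(\gram_{M,w})$ is correct. Induct on its \emph{Move}-tree to show the configurations $C_h$ form a genuine accepting computation tree of $M$ on $w$. The root is pinned by \emph{Prelude} to the initial configuration of $M$ on $w$. For the step, fix a history $h$ reaching a \emph{Generate} whose predecessor configuration $C$ is already shown valid; let $d=\cd{dir}$ be the move taken and $C'$ the claimed successor. For each cell index $j$ pick a data model $\Mm_{h,j}$ making all constants distinct, setting $\codekey{s}$ to the binary encoding of $j$, and defining $\cd{choice}(\cd{next}^k(\init{\cd{univ}}))$ to be the $k$‑th Adam move in $h$ — possible because the $\cd{next}^k(\init{\cd{univ}})$ are pairwise distinct (e.g.\ in a term model); this is exactly where $\cd{next}$ is essential, decoupling Adam's successive choices. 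In $\Mm_{h,j}$ the feasible execution reaches this \emph{Generate} and, as $\codekey{b}$ passes $j$ (and $j{\pm}1$ for the window), \emph{Check} asserts $C'[j]=\delta_W(d,C[j{-}1],C[j],C[j{+}1])$; correctness forces this, and ranging over all $j$ shows $C'$ is precisely the $\delta$‑successor of $C$ under transition $d$. On a universal turn the grammar contains both $\pif\,\cd{(dir=0)}\ \emph{Generate}$ and $\pelse\ \emph{Generate}$, and taking $\Mm$ with $\cd{choice}$ returning $\cd0$ or $\cd1$ shows each must be correct, so the strategy answers both of Adam's transitions; on an existential turn Eve's single $\cd{dir}$ gives one legal child. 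Since $p$ is finite, every branch ends in \emph{Base}; choosing $\Mm$ so the tracked cell is the head cell, the \emph{Base} assertion forces the leaf configuration's state to be of type $\cd{acc}$. Thus $\{C_h\}$ is an accepting computation tree, so $M$ accepts $w$.

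\textbf{Main obstacle.} The crux is the $(\Leftarrow)$ direction: reconstructing a \emph{globally} valid computation tree, well defined per history, from a program that in any single execution checks the evolution of only \emph{one} cell. This hinges on (a) realizing the tracked cell $\codekey{s}$ and Adam's moves (via $\cd{choice}/\cd{next}$) \emph{independently} in data models, which rests on the syntactic restriction that the synthesizer cannot inspect $\codekey{s}$ and so cannot tailor a bogus cell sequence to the model; (b) careful bookkeeping of the window variables $\cd{c}_i,\cd{c}_i'$ across loop iterations and across successive \emph{Generate} blocks, including the tape‑edge corner cases folded into $\delta_W$; and (c) the fact that the \emph{Move}-branching supplies exactly the memory needed for $C_h$ to depend on the full history $h$. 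The remaining points — conformance of $\gram_{M,w}$ to $\schema$ after adding polynomially many nonterminals, and the polynomial time bound of the reduction — are routine.
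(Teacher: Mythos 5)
Your proposal is correct and follows essentially the same route as the paper's proof: one direction builds the program by a traversal of an accepting computation tree, the other reconstructs the computation tree by instantiating, for each history and each cell index, a data model that pins the tracked cell and Adam's choices so that correctness of the single \emph{Check} assertion forces validity of every cell of every transition, with coherence observed to hold automatically because the only non-constant terms are the linearly computed $\cd{choice}/\cd{next}$ terms. Your write-up is somewhat more explicit than the paper's (e.g., in constructing the models $\Mm_{h,j}$ and justifying the role of $\cd{next}$ in decoupling Adam's successive moves), but the decomposition and key ideas are the same.
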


\begin{proof}
  \textbf{($\Rightarrow$)} Let $T$ be an accepting computation tree
  for $M$ on $w$. We are to produce a coherent $p \in \gram_{M, w}$
  that satisfies its assertions. We can build $p$ during a pre-order
  traversal of $T$. Note that $p$ will contain branches under the
  \emph{Move} rule that are not taken under any data model, and hence
  do not affect correctness. This is simply by dint of the fact that
  it is only one player's turn in each step. It does not matter how
  these branches are synthesized, so we can take them to all use the
  \emph{Base} rule. We now proceed to describe only those program
  branches that are relevant to ensuring that all assertions hold. The
  \emph{Prelude} rule corresponds to the root of $T$ and produces the
  initial configuration, from which the initial window contents are
  set in \code{\codekey{c}}.

  During traversal, if we reach in $T$ a configuration
  $c_{j-1} \,\,\, (j > 0)$ that is existential, choose
  $\code{dir} \passign \code{0}$ (left) in the $\pif$ branch of
  \emph{Move} if it is the case that $c_j$ is the left child of
  $c_{j-1}$. Otherwise choose $\code{dir} \passign \code{1}$
  (right). In \emph{ProduceCell} under \emph{Generate}, expand such
  that every possible cell index (valuation of \code{\codekey{b}}) is
  branched upon. Choose $\code{cell} \passign \code{t}_s$ in the leaf
  of the branch for \code{(\codekey{b} = \codekey{i})}, where
  $(c_j)_i = t_s$. It is not hard to see that for every interpretation
  of the index \code{\codekey{s}} (where \code{\codekey{s}} ranges
  over possible cells to track) the transition assertion in
  \emph{Check} will indeed hold. We have made sure to select the
  correct choice for \code{dir} and the transitions in the computation
  tree are correct by assumption. If $c_{j-1}$ is a universal
  configuration with left and right children $c_j^l$ and $c_j^r$ in
  $T$, proceed in a similar manner to that of the existential case for
  the left \emph{Generate} subtree in the $\pelse$ branch of
  \emph{Move}, and upon returning to this branch in the traversal, do
  the same for the \emph{Generate} subtree on the right. Upon
  encountering a leaf of $T$ (which is an accepting configuration),
  take the terminating \emph{Base} rule under \emph{Move}. This
  asserts that any cell containing a machine state symbol in fact
  contains the accept state symbol. Since our procedure only ever
  produces cell contents corresponding to valid machine configurations
  that proceed according to the transition relation, it is the case
  that in every interpretation in which the tracked cell contains a
  state symbol we have $\code{c}_2' \code{ = q}_{acc}$.

  Any program from $\gram_{M, w}$ will be coherent, as noted in our
  discussion about Boolean programs in~\secref{further-results}. The
  grammar ensures that no memoizing failures are possible, since every
  variable is effectively Boolean, with the exception of the hardcoded
  machinery for reading universal moves from the data model. In that
  case, terms are computed in a linear fashion and there is no chance
  for recomputation. Finally, all assumes are early by virtue of the
  fact that no variables appearing in equality conditions are ever
  used in a computation with an uninterpreted function.

  \textbf{($\Leftarrow$)} Suppose we have the derivation tree of a
  coherent program $p \in \gram_{M, w}$ such that $p$ satisfies its
  assertions. We are to show there is an accepting computation tree
  for $M$ on $w$. Consider two cases:

  \begin{description}[style=unboxed,leftmargin=0cm]
  \item[Case: no moves.] In this case $p$ does not make any moves,
    which corresponds in its derivation tree to the \emph{Move}
    production immediately rewriting to \emph{Base}. Since $p$ is
    correct, it satisfies its assertions (in all data models) and, in
    particular, it satisfies them in a model where the tracked cell
    contains the initial state symbol $\code{q}_{init}$. That is, the
    success of $\passert\code{(c}_2 \code{ = q}_{acc}\code{)}$ implies
    that $q_{init} = q_{acc}$. Hence $M$ has the initial configuration
    as a trivial accepting computation tree on $w$.

    \vspace{0.1in}
  \item[Case: some moves.] Let us think of the depth of the derivation
    tree for $p$ only in terms of the rules \emph{S}, \emph{Move},
    \emph{Generate}, and \emph{Prelude}. This allows us to speak of
    the number of moves in $p$ in terms of the depth of its derivation
    tree. Now, suppose $p$ has a derivation tree in $\gram_{M, w}$ of
    depth $2m+2$, with $m > 0$ ($m$ is the number of moves). We build
    an accepting computation tree as follows.

    \vspace{0.1in}
    \begin{description}[style=unboxed,leftmargin=0.2cm]
    \item[Base:] The root of the budding computation tree is
      $c_0 = a_1 \ldots a_{2^n}$, where $a_i$ is the tape symbol
      corresponding to the choice for the $i$th cell in the
      \emph{Prelude} loop. We can simulate the loop (which is
      bounded) to obtain the cell contents. It is clear that this is
      the proper initial configuration for $M$ on $w$.  \vspace{0.1in}
    \item[Inductive:] The inductive case proceeds similarly to the
      pre-order traversal from the other direction of our
      proof. Universal turns involve building two branches of the
      computation tree, whereas existential turns only build one. Once
      again, we ignore the infeasible branches of the derivation tree,
      and we can determine which branches these are by keeping track
      of the \code{turn} variable during the traversal. For
      existential turns in the derivation tree, we build the next
      configuration by inspecting the cell choices (via simulation of
      the bounded loop) in the \emph{Generate} subtree (under the
      $\pif$ branch of \emph{Move}). Suppose the configuration
      $c_{next}$, so generated, were not correct. That is, $c_{next}$
      does not follow according to the transition relation $\delta_W$
      from the computation tree parent $c_{prev}$. Then there must be
      some index $k$ for which cell $k$ of $c_{next}$ is wrong
      according to $\delta_W$. But there is then a model where
      \code{(\codekey{s} = \codekey{k})} holds, and hence the
      corresponding assertion inside \emph{Check} fails,
      contradicting the correctness of $p$. Universal turns in the
      derivation tree are processed similarly to existential turns by
      first traversing the left \emph{Generate} subtree and later
      the right.
    \end{description}
  \end{description}

  Finally, since the derivation tree is finite, the last derivation on
  every branch gives the assert for $\code{q}_{acc}$. There is a model
  where the tracked cell contains the final machine state symbol. In
  that model, the satisfaction of the assertion ensures that the
  configuration at every leaf in the computation tree is indeed an
  accepting configuration.
\end{proof}

\newpage

\section{Proof of Lower Bound for Transition System Synthesis}
\seclabel{transition-system-proof-gist}

We show that the realizability and synthesis problems are
$\exptime$-hard using a reduction from the membership problem for
alternating Turing machines with polynomial space bound.  The goal of
the reduction is to design a specification ($\wordaut{\Aa}_R$ and
$\wordaut{\Aa}_S$) such that a correct and coherent transition system
that satisfies it will witness an accepting computation for the Turing
machine.

The key to modeling the desired Turing machine (TM) semantics in
$\wordaut{\Aa}_R$ is to observe that there is a relationship between
the transitions of a specification automaton $\wordaut{\Aa}_R$ and the
nodes of a transition system $TS$ that satisfies it. Notice that the
only way for our transition systems to produce executions containing
$\passume(x=y)$ is to branch at a $\pcheck(x=y)$ node. Thus, any
execution ending with an equality assumption is always accompanied by
a correponding execution ending with a disequality assumption
instead. As in the program synthesis reduction grammar, we want to
restrict our attention to only certain data models. For example, we
want to make sure that the variables we use to model the TM tape cells
initially contain the input symbols. In the program case, we used
statements of the form $\passume(x=y)$ to achieve this. Here however,
we introduce rules in the transition relation for $\wordaut{\Aa}_R$
that allow reading either $\passume(x=y)$ or $\passume(x \neq y)$. The
state reached by reading the negated condition ($x \neq y$ in this
example) will be an accepting state for $\wordaut{\Aa}_R$. This
reflects the fact that we are uninterested in requiring anything of
executions where the TM does not begin with the appropriate input
symbols on its tape. See~\figref{ModelingAssume} for a picture that
illustrates this kind of modeling. Assertions can be modeled in a
similar way. Recall that, besides assignment statements, our
transition systems are restricted to checking equality and disequality
conditions and asserting false. Thus, to model $\passert(x=y)$ a
transition system would first branch on $\pcheck(x=y)$, proceeding
with computation in the affirmative branch and reaching
$\passert(\pfalse)$ in the negative branch. Such assertions can be
enforced in $\wordaut{\Aa}_R$ by introducing transitions for
$\passume(x=y)$ and $\passume(x \neq y)$, with the latter
transitioning to an accepting state after reading
$\passert(\pfalse)$. See~\figref{ModelingAssert} for a picture that
illustrates this kind of modeling.

\begin{wrapfigure}{R}{0.55\textwidth}
\begin{center}
\includegraphics[width=0.54\textwidth,angle=0]{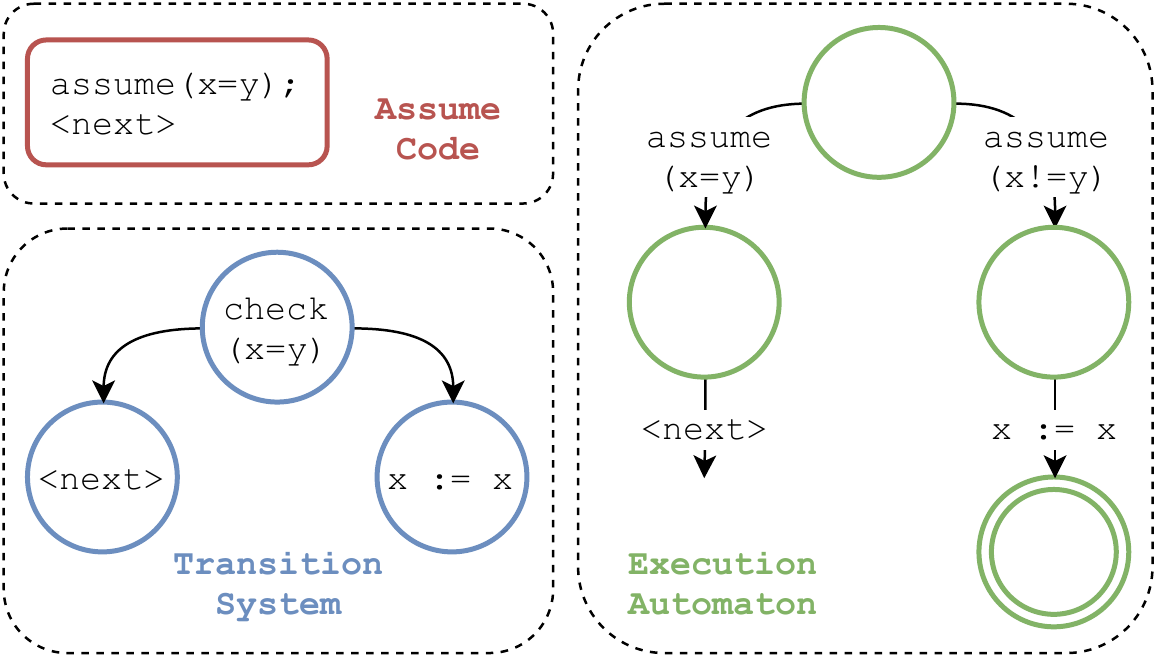}
\caption{ \figlabel{ModelingAssume} To model the
  \textcolor{drawioRed}{Assume Code}, use the
  \textcolor{drawioBlue}{Transition System}, which can be specified
  with the \textcolor{drawioGreen}{Execution Automaton}. Note that the
  assignment statement could be replaced by any arbitrary statement
  that has no bearing on correctness, and \textit{next} represents
  whatever code may follow the assume. }
\vspace{0.2in}
\includegraphics[width=0.54\textwidth,angle=0]{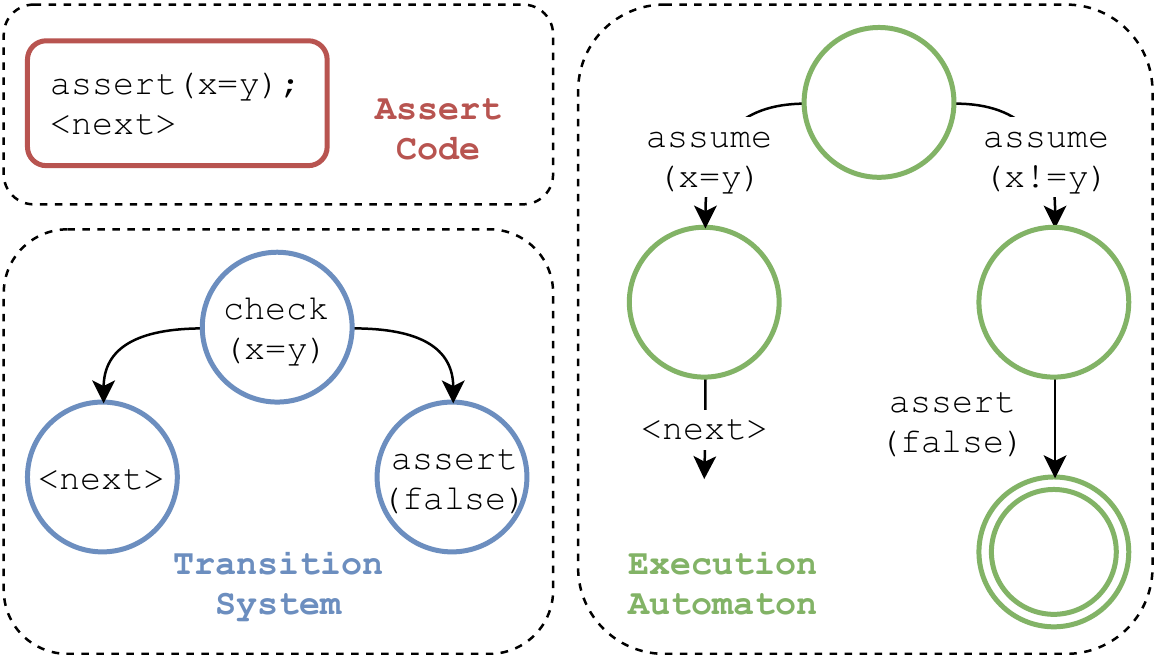}
\caption{ \figlabel{ModelingAssert} To model the
  \textcolor{drawioRed}{Assert Code}, use the
  \textcolor{drawioBlue}{Transition System}, which can be specified
  with the \textcolor{drawioGreen}{Execution Automaton}, where
  \textit{next} represents whatever code may follow the assert. }
\end{center}
\end{wrapfigure}

Having made the observation that some of the components from the
program synthesis reduction grammar can be modeled in the transitions
of the specification automaton $\wordaut{\Aa}_R$, we emphasize once
more the key difference between program and transition system
synthesis. If we attempted to recreate the $\twoexptime$-hardness
proof in this setting, we would be unable to hide information from the
synthesizing algorithm. Imagine that we try using variables to store
the secret index of the tape cell being checked. In order for these
variables to serve the purpose of the lower bound proof, they will
eventually be involved in a $\pcheck$ node. This has the effect of
permanently leaking their values to the synthesis algorithm, which can
make synthesis decisions on the basis of that information. Indeed,
program specification in terms of grammars allows one to enforce the
uniformity of synthesized code, whereas specification in terms of
acceptable executions does not. This leads to an easier problem. 

We now give an overview of the reduction from alternating
polynomial-space TMs. The structure is quite similar to that of the
reduction for grammer-restricted program synthesis, and we hence omit
many details.

\subsection{Gist of the Reduction}
Given an alternating TM $M$ with polynomial space bound, and input $w$
with $|w| = m$, we must construct a specification consisting of
deterministic execution automata $\wordaut{\Aa}_R$ and
$\wordaut{\Aa}_S$ such that there is a correct and coherent transition
system satisfying the specification exactly when $M$ accepts $w$. We
will assume that $M$ uses a counter to ensure its termination in
$2^{\textit{poly(m)}}$ time. Let us now discuss the key aspects of
$\wordaut{\Aa}_R$. We omit the full description, and instead compare
the main components to the corresponding ones in the reduction grammar
for program synthesis. Note that $\wordaut{\Aa}_S$ will accept the
prefix closure of $L(\wordaut{\Aa}_R)$. It can be constructed by a
simple modification that makes final any automaton state that is part
of a path from the initial state to a final state in
$\wordaut{\Aa}_R$.

The states and transitions of $\wordaut{\Aa}_R$ can be divided based
on which of four main goals they serve. The first goal is to
initialize the program variables that model the TM tape cells. This
can be accomplished with a polynomial number of $\passume$
transitions, as mentioned earlier and depicted
in~\figref{ModelingAssume}. The second goal is to simulate the
decision for the next player (either Adam or Eve). Similar to the
program synthesis reduction grammar, we can use variables $turn$ and
$dir$ to model the TM alternation and the transition choice,
respectively. The third goal is to facilitate the generation of new
tape contents after a transition decision is made. Similar to the
program grammar reduction, the tape contents are produced iteratively
and the correctness of each choice is checked by referring to a
sliding window of three previous tape cells. An important difference
is that we can store the entire tape contents in variables, since the
TM uses at most polynomial space. Further, each cell can be updated in
sequence without any counter. There are a polynomial number of tape
cell updates for any given machine transition, and each update can be
specified and checked with a bounded amount of nested branching with
$\passume$ transitions. After each cell is updated, we can model an
$\passert$ to ensure the update follows the appropriate TM
transition. Recall that adequate memory was ensured in the grammar
reduction by requiring programs to branch on transition
choices. Analogously, in $\wordaut{\Aa}_R$ we include transitions for
$\passume(dir = 0)$ and $\passume(dir \neq 0)$ to enforce branching,
and thus adequate memory. Following the polynomially large sequence of
transitions for choosing tape symbols and checking their correctness,
the automaton returns to the $turn$ and $dir$ machinery to (possibly)
repeat the process.

At some point, rather than simulating another turn and transition, the
execution should be allowed to finish. Once again, this is handled in
a similar way as in the program case. We ensure that the contents of
the tape variables indeed constitute an accepting configuration by
checking that if a tape variable contains a state symbol, then that
symbol must be the accepting state symbol. This can easily be
accomplished in a manner similar to checking correctness of tape
transitions. Note that all automaton nodes discussed above involve
implicit transitions to an absorbing reject state for all unmentioned
letters. Each of the four goals outlined above can be easily
implemented with number of states polynomial in the size of the TM
input. Furthermore, the prefix automaton $\wordaut{\Aa}_S$ can be
constructed by making every state accepting, with the exception of the
absorbing reject state.

\subsection{Correctness}
First suppose there is an accepting computation tree $T$ for $M$ on
input $w$, with $|w| = m$. We must show there is a correct transition
system $TS$ whose executions are contained in the language of the
automaton $\wordaut{\Aa}_R$ described above. The transition system
somewhat resembles $T$. It models each alternation in $T$ with a
$\pcheck(turn = 1)$ node, branching to simulate a move from either
Adam or Eve according to the transitions in $T$. To simulate Eve, the
system can choose an assignment node labeled by $dir\,\passign\,0$ or
$dir\,\passign\,1$, depending on the corresponding transition in
$T$. To simulate Adam, the system must go to a sequence of assignment
nodes to read a move decision from the uninterpreted function
$choice$, as in the grammar reduction. After this, the system is
forced (by construction of $\wordaut{\Aa}_R$) to generate correct
updates for each tape cell variable, depending on which transition
decision was made. Each choice can be determined by referring to
configurations in the corresponding branches of $T$. Producing each
tape symbol is accomplished with a sequence of $\pcheck$ and
assignment nodes. Finally, since $T$ contains correct transitions, all
of the complete executions for $TS$ resulting from transition
correctness checks will be correct. Similarly, complete executions
arising from checking that an ending state is accepting will also be
correct, since the leaves of $T$ are accepting configurations.

In the other direction, given a correct transition system $TS$ whose
complete executions are in $L(\wordaut{\Aa}_R)$ and whose partial
executions are in $L(\wordaut{\Aa}_S)$, an accepting computation tree
$T$ for $M$ on $w$ can be built by simulating $TS$ (as in the program
synthesis lower bound proof). Each transition in $T$ will proceed
according to the transition relation for $M$ because $TS$ (which is
correct) has executions that assert this. Since $TS$ has executions
that assert the final tape contents constitute accepting
configurations, every leaf of $T$ will indeed be an accepting
configuration. Note we have assumed that machine $M$ keeps a counter
to ensure termination in exponential time. Thus no correct $TS$ that
satisifes the specification can go on simulating (without halting)
beyond this time bound, since it correctly simulates all machine
transitions. Finally, all executions allowed by $\wordaut{\Aa}_R$ are
easily seen to be coherent.

\newpage

\section{Synthesizing Recursive Programs}
\applabel{recursive-pgm}

We extend the positive result of~\secref{coherent-synth} to synthesize
coherent, recursive uninterpreted programs. The setup for the problem
is very similar. Given a grammar that identifies a class of (now)
\emph{recursive} programs, the goal is to determine if there is a
program in the class that is coherent and correct. We now introduce
the class of recursive programs, their semantics, and some other
background.

\subsection{Recursive Programs and their Semantics}

To keep the presentation simple, we impose restrictions on the program
syntax, none of which limit the generality of our results. Let us fix
the set of program variables to be $V = \set{v_1,v_2,\ldots, v_r}$,
along with an ordering $\tuple{V} = v_1, v_2, \ldots, v_r$. The
programs we consider will have recursively defined methods, and we fix
the names of such methods to belong to a finite set $M$. We will
assume that $m_0 \in M$ denotes the ``main'' method, which is invoked
when the program is executed. Without loss of generality, we assume
the set of local variables for any method $m \in M$ is $V$; methods
can ignore variables if they like. We also assume that the set of
formal parameters for each method is $V$, called in the order
$\fixperm{V}$. None of these are serious restrictions. Methods return
multiple values, which are assigned by the caller to local
variables. Therefore, for every method $m$, we fix $\out{m}$ to be the
(ordered) \emph{output variables} of $m$; the variables in $\out{m}$
will be among the variables in $V$. We require variables in $\out{m}$
to be distinct to avoid implicit aliasing. A recursive program is a
sequence of method definitions, where each definition can make use of
other methods. The formal grammar for recursive programs is given
below.
\[
\begin{array}{rl}
\pgm_{M,V} ::=& m \poutputs \out{m} \stmt_{M,V} \, \mid \,
\pgm_{M,V} \, \pgm_{M,V}\\

\stmt_{M,V} ::=&
\pskip \,
\mid \, x \passign y \,
\mid \, x \passign f(\vec{z}) \, \\
&\mid \, \passume \, \big(\cond_V\big) \, \\
&\mid \, \passert \, \big(\cond_V\big) \\
&
                                           \mid \, \stmt_{M,V} \, ;\, \stmt_{M,V} \\
  &\mid \, \pif \, \big(\cond_V\big) \, \pthen \, \stmt_{M,V} \, \pelse \, \stmt_{M,V} \\
&
\mid \, \pwhile \, \big(\cond_V\big) \, \stmt_{M,V} \,
\mid \, \vec{w} \passign m(\fixperm{V}) \\

\cond_V ::=&
x = y \,
\mid \, x \neq y
\end{array}
\]
Here $x,y,\vec{z}, \vec{w}$ belong to $V$, and the length of vector
$\vec{w}$ must match the output $\out{m}$. A program is simply a
sequence of method definitions, and the main method $m_0$ is invoked
first when the program is run. The most important new statement in the
grammar is $\vec{w} \passign m(\fixperm{V})$, wherein a method $m$ is
called and the return values are assigned to the vector of variables
$\vec{w}$.

Different aspects associated with the semantics of such programs, like
executions, terms, coherence, etc., are given below. Precise
definitions for many of these concepts were first presented
in~\cite{MMV19}. Executions of recursive programs are words over the
alphabet $\Pi_V$, along with two other sets of symbols
\begin{align*}
  \setpred{\dblqt{\pcall~m}}{m \in M} \text{ and }
  \setpred{\dblqt{\vec{z}\,{\passign}\,\preturn}}{\vec{z}\in V^r\,
  \text{ $r\in [\,\mid V\mid\,]$}},
\end{align*}
which correspond to invoking a method and returning from it, and
assigning the outputs to local variables in the caller. The set of
executions of a recursive program is easy to define, and it is a
context-free language.

\subsection{Executions}

The recursive programs in this language have a natural call-by-value
semantics, given a data model that interprets constants and function
symbols. Executions are words over the following alphabet.
\begin{multline*}
\Pi_{M,V} = \{
\dblqt{x \passign y},
\dblqt{x \passign f(\vec{z})},
\dblqt{\passume (x=y)},
\dblqt{\passume (x\neq y)},
\\ \dblqt{\passert (\pfalse)}, \dblqt{\pcall \; m},
\dblqt{\vec{z} \passign \preturn}
\: |\,\,
x, y\in V, \vec{z}\in V^r, r\in [\,\mid V\mid\,], m \in M\}.
\end{multline*}
The (complete) executions of a recursive program $P$ form a
context-free language defined by the following grammar. For each
method $m \in M$, we denote by $s_m$ the body, written over the
grammar $\stmt_{M, V}$, in the definition of $m$. The grammar that
defines the set of executions has non-terminals of the form $X_s$ for
each $s \in \stmt_{M, V}$ appearing in the program text $P$. The
grammar rules are given below.
\begin{align*}
\begin{array}{rcl}
X_{\epsilon} &\goesto& \epsilon\\
X_{\pskip ; s} & \goesto & X_s\\
X_{ x \passign y ; s} &\goesto & \dblqt{x \passign y} \cdot X_s\\
X_{ x \passign f(\vec{z}) ; s} &\goesto& \dblqt{x \passign f(\vec{z})} \cdot  X_s \\
X_{ \passume (c) ; s} &\goesto & \dblqt{\passume(c)}\cdot X_s\\
X_{\passert(\pfalse) ; s} &\goesto & \dblqt{\passert(\pfalse)}\cdot X_s\\
X_{\pif~(c)~\pthen~s_1~\pelse~s_2; ~~s} &\goesto& \dblqt{\passume(c)} \cdot  X_{s_1; s} \\
X_{\pif~(c)~\pthen~s_1~\pelse~s_2; ~~s} &\goesto& \dblqt{\passume(\neg c)} \cdot X_{s_2; s} \\
X_{\pwhile~(c)\{s_1\};~s} &\goesto& \dblqt{\passume(c)} \cdot  X_{s_1; ~\pwhile~(c)\{s_1\};~s} \\
X_{\pwhile~(c)\{s_1\};~s} &\goesto& \dblqt{\passume( \neg c)} \cdot  X_{s} \\
X_{\vec{z} \passign m(\fixperm{V});~s} &\goesto& \dblqt{\pcall~ m} \cdot X_{s_m} \cdot \dblqt{\vec{z} \passign \preturn}
\cdot X_s
\end{array}
\end{align*}
The set of executions of a program $P$, denoted $\exec(P)$, is the
language of the above grammar using start symbol
$X_{s_{m_0};\epsilon}$.  The set of \emph{partial executions}, denoted
$\pexec(P)$, is the set of prefixes of executions in $\exec(P)$.

Observe that all production rules except the one involving method
calls are right linear. 
Further, we can partition the symbols in $\Pi_{M,V}$ into \emph{call},
\emph{return}, and \emph{internal} alphabets:
$\set{\dblqt{\pcall~ m}}$ is the call alphabet,
$\set{\dblqt{\vec{z}\, \passign\, \preturn}}$ is the return alphabet,
and the remaining symbols constitute the internal alphabet. With
respect to such a partition, it is easy to see that $\exec(P)$ is a
\emph{visibly context-free} language~\cite{Alur2009vpa}.

The definition of coherent executions and programs can be extended to
the recursive case. To do this, we need to identify the syntactic term
stored in a variable after a partial execution, the set of syntactic
terms computed during a completed execution, and the collection of
equality assumptions made during an execution. These are natural
extensions of the corresponding concepts in the non-recursive case to
a call-by-value semantics. Based on these notions, coherence is
defined in exactly the same manner as in the non-recursive case.  That
is, executions are coherent if they are memoizing and have early
assumes, and programs are coherent if all their executions are
coherent. We skip the formal definition.

\subsection{Recursive Programs as Trees}

As in the non-recursive case, we represent recursive programs as
finite trees, and the synthesis problem reduces to searching for a
program tree that is coherent, correct, and is generated by a given
grammar $\gram$. Here we describe how we represent recursive programs
as trees. Recall that a recursive program is nothing but a sequence of
method definitions. Therefore, our tree representation of a program is
a binary tree with root labeled by $\dblqt{\proot}$, where the
right-most path in the tree has labels of the form
$\dblqt{m \poutputs \out{m}}$, and the left child of such a node is
the tree representing the body of the method definition of $m$. Since
a method body is nothing but a program statement, we can use a tree
representation similar to the one used for non-recursive programs. We
give an inductive definition for such a tree. Our description presents
trees as terms with the understanding that $\ell(t_1,t_2)$ represents
a tree with root labeled by $\ell$ and left and right sub-trees $t_1$
and $t_2$, respectively. Implicitly, $\ell(t)$ is a tree with label
$\ell$ and \emph{left} sub-tree $t$. With respect to our grammar,
programs don't have a unique parse, since $;$ is associative. In the
description below, we assume some parsing that resolves this
ambiguity. The tree associated with a program $p$ is
$\dblqt{\proot}(\trlbl(p))$, where $\trlbl(p)$ is defined inductively
below.
\begin{align*}
&\trlbl(\epsilon) = \epsilon \\
&\trlbl(m \poutputs \out{m}\, s\ p') = \dblqt{m \poutputs \out{m}}(\trlbl(s),\trlbl(p'))\\
&\trlbl(\pskip) = \dblqt{\pskip} \\
&\trlbl(x{\passign}y) = \dblqt{x{\passign}y}\\
&\trlbl(x{\passign}f(\vec{z})) = \dblqt{x{\passign}f(\vec{z})} \\
&\trlbl(\passume(c)) = \dblqt{\passume(c)}\\
&\trlbl(s_1\, ; \, s_2) = \dblqt{\pseq}(\trlbl(s_1),\trlbl(s_2)) \\
&\trlbl(\pif\, (c)\, \pthen\, s_1\, \pelse\, s_2) = \dblqt{\pite(c)}(\trlbl(s_1),\trlbl(s_2))\\
&\trlbl(\pwhile\, (c)\, s) = \dblqt{\pwhile(c)}(\trlbl(s)) \\
&\trlbl(\vec{z} \passign m(\fixperm{V})) = \dblqt{\vec{z} \passign m(\fixperm{V})}
\end{align*}


Given a data model, every partial execution naturally maps each
program variable to a value in the universe of the data model. We skip
the formal definition. The notions of an execution being feasible in a
data model (i.e., all assume statements must hold when encountered)
and a program being correct (i.e., all executions of the form
$\rho\cdot\passert(\pfalse)$ are infeasible in all data models) can be
extended naturally to recursive programs.



We conclude this section by restating important observations
from~\cite{MMV19} regarding recursive programs.

\begin{theorem}[\cite{MMV19}]
  \thmlabel{rec-coherent-verif} Given a recursive program $P$,
  checking if $P$ is coherent is decidable in $\exptime$.  Further,
  checking correctness of a coherent recursive program is decidable in
  $\exptime$.
\end{theorem}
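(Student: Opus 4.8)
The plan is to follow the automata-theoretic recipe of~\cite{MMV19} for the non-recursive case and lift it to a \emph{visibly pushdown} setting, exploiting the fact (noted above) that the set of executions $\exec(P)$ of a recursive program $P$ is a visibly context-free language~\cite{Alur2009vpa} over the call/return/internal partition of $\Pi_{M,V}$. First I would recall the structural insight behind decidable verification of coherent non-recursive executions: along any coherent execution, all the equality information needed to evaluate future conditions and to check the memoizing and early-assumes conditions is summarized by a bounded object --- essentially the congruence-closure structure restricted to the terms currently held in the $|V|$ program variables together with a bounded one-step context around them. This is what gives the word automaton $\wordaut{\Aa}_\cohcorexec$ of size $O(2^{\text{poly}(|V|)})$.

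For the recursive case I would build a visibly pushdown automaton $\wrcc$ that simulates an execution while maintaining this same bounded summary, now using the stack to span method boundaries. On a $\dblqt{\pcall~m}$ symbol the automaton pushes the fragment of the caller's summary that must survive the call (intuitively, the congruence structure viewed through the actual parameters $\fixperm{V}$, plus the terms stored in variables that are not among $\out{m}$), and enters the initial summary of $m$ with its formals bound accordingly. On a $\dblqt{\vec{z}\passign\preturn}$ symbol it pops and \emph{merges} the callee's summary over its output variables $\out{m}$ into the saved caller summary, renaming into $\vec{z}$, re-establishing congruence closure and checking that no memoizing violation is introduced at the return. Coherence is checked on the fly exactly as in the non-recursive construction, and a parallel feasibility-tracking component flags any $\dblqt{\passert(\pfalse)}$ reached along a feasible prefix. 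The resulting $\wrcc$ has state set and stack alphabet of size $O(2^{\text{poly}(|V|)})$.

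Given $\wrcc$, both problems reduce to emptiness questions. To decide whether $P$ is coherent, I would take the visibly pushdown automaton accepting $\exec(P)$ (of size $O(|P|)$, immediate from the execution grammar), intersect it with the VPA for \emph{non}-coherent executions (obtained from the coherence component of $\wrcc$ by complementation, using closure of VPAs under complement and intersection over a common partition), and test emptiness; $P$ is coherent iff the intersection is empty. To decide correctness of a coherent $P$, I would instead intersect the automaton for $\exec(P)$ with the ``bad'' VPA accepting feasible executions that end in $\passert(\pfalse)$, and test emptiness. Since VPA emptiness is decidable in time polynomial in the automaton size and the dominating automaton has size $O(2^{\text{poly}(|V|)}\cdot|P|)$, both procedures run in $\exptime$.

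The main obstacle I anticipate is the design of the push/pop transitions of $\wrcc$: one must pin down exactly which fragment of the congruence structure to store on the stack so that it is (i) bounded independently of recursion depth, (ii) enough to faithfully reconstruct the caller's state on return, and (iii) compatible with the memoizing and early-assumes invariants across the call --- in particular, a term computed inside a callee may already be congruent to one present in the caller, and the return must not create a second representative for it in violation of the memoizing condition. Nailing this call/return interface, and proving that the resulting bounded summary is a sound and complete abstraction of the call-by-value term semantics, is where essentially all the work lies; the reductions to VPA emptiness are then routine. This is precisely the analysis carried out in~\cite{MMV19}, which we invoke for the stated bound.
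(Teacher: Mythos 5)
Your proposal follows essentially the same route as the paper, which likewise defers the heavy lifting to~\cite{MMV19} and only sketches the argument: construct visibly pushdown automata of size $O(2^{\text{poly}(|V|)})$ for the coherent (resp.\ coherent-and-correct) executions, intersect with the visibly context-free language $\exec(P)$, and decide by VPA emptiness. The only cosmetic difference is that you phrase the coherence check via complementation of the coherence VPA, which is harmless here since the summary-tracking automaton is deterministic.
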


The proof of \thmref{rec-coherent-verif} relies on the observation
that there are \emph{visibly pushdown automata}\footnote{Visibly
  pushdown automata~\cite{Alur2004vpa}, in our context, have the
  property that they push one symbol to the stack when reading
  $\pcall~m$, pop one symbol on reading $\vec{z} \passign \preturn$,
  and leave the stack unchanged otherwise.}
for the sets of \emph{coherent executions} and \emph{coherent
  executions that are correct}. These automata, denoted
$\wordaut{\Aa}_{\text{rcoh}}$ and $\wordaut{\Aa}_{\text{rcor}}$,
respectively, are both of size $O(2^{\text{poly}(|V|)})$. Since the
set of all program executions is also a visibly context-free language,
decidability follows from taking appropriate automata intersections
and checking for emptiness. By taking cross-products, we can conclude
there is a visibly pushdown automaton $\wrcc$ of size
$O(2^{\text{poly}(|V|)})$ that accepts the set of all recursive
executions which are both coherent and correct; as in the
non-recursive case, we exploit $\wrcc$ for synthesis.

\subsection{Synthesizing Correct, Coherent Recursive Programs}

The approach to synthesizing recursive programs is similar to the
non-recursive case. Once again, given a grammar $\gram$, the set of
trees corresponding to recursive programs generated by $\gram$ is
regular; let $\treeaut{\Aa}_{\gram}$ be the tree automaton accepting
this set of trees. The crux of the proof is to show that there is a
two-way alternating tree automaton $\trcc$ that accepts exactly the
collection of all trees that correspond to recursive programs that are
coherent and correct. The synthesis algorithm then involves checking
if there is a common tree accepted by both $\treeaut{\Aa}_{\gram}$ and
$\trcc$, and if so constructing such a tree. The latter problem is
easily reduced to tree automaton emptiness, so we next describe how to
construct the automaton $\trcc$.

The construction of $\trcc$ is similar to the construction of
$\treeaut{\Aa}_\cohcor$ in the non-recursive case. On an input tree
$t$, $\trcc$ will generate all executions of the program corresponding
to $t$ by walking up and down $t$ and checking if each one of them is
coherent and correct by simulating $\wrcc$. The challenge is to
account for recursive function calls and the fact that $\wrcc$ is a
(visibly) pushdown automaton rather than a simple finite
automaton. Giving a precise formal description of $\trcc$ will be
notationally cumbersome, and obfuscates the ideas behind the
construction. We give only an outline, and leave working out the
precise details to the reader.

Like in the non-recursive case, $\trcc$ will simulate $\wrcc$ as each
execution is generated. Since $\wrcc$ does not change its stack,
except on $\dblqt{\pcall~m}$ and $\dblqt{\vec{z}{\passign}\preturn}$,
we can simulate $\wrcc$ on most symbols by simply keeping track of the
control state of $\wrcc$. The interesting case to consider is that of
method invocation. Suppose $\trcc$ is at a leaf labeled
$\dblqt{\vec{z}{\passign}m(\fixperm{V})}$. Let $q$ be the control
state of $\wrcc$ after the execution thus far. Executing the statement
$\vec{z}{\passign}m(\fixperm{V})$ gives a partial trace of the form
$\dblqt{\pcall~m}\cdot\rho\cdot\dblqt{\vec{z}{\passign}\preturn}$,
where $\rho$ is an execution of method $m$. Suppose $\wrcc$ on symbol
$\dblqt{\pcall~m}$ from state $q$ goes to state $q_1$ and pushes
$\gamma$ on the stack. Notice that no matter what $\rho$ (the
execution of method $m$) is, since $\wrcc$ is visibly pushdown, the
stack at the end of $\rho$ will be the same as that at the
beginning. Therefore, $\trcc$ will (nondeterministically) guess the
control state $q_2$ of $\wrcc$ at the end of method $m$. $\trcc$ will
send two copies. One copy will simulate the rest of the program (after
$\vec{z}{\passign}m(\fixperm{V})$) from the state $q'$, which is the
state of $\wrcc$ after reading $\dblqt{\vec{z}{\passign}\preturn}$
from $q_2$ and popping $\gamma$. The second copy will simulate the
method body of $m$ to confirm that there is an execution of $m$ from
state $q_1$ to $q_2$. To simulate the method body of $m$, $\trcc$ will
walk all the way up to the root, and then walk down, until it finds
the place where the definition of $m$ resides in the tree. $\trcc$
will also need to account for the possibility that the call to $m$
does not terminate; in this case, it will send one copy to simulate
the body of $m$, and if that body ever terminates, $\trcc$ will
reject. 
Given this informal description, one can say that a state of $\trcc$
will be of the form $(p,q_1,q_2)$, where $q_1$ and $q_2$ are a pair of
states of $\wrcc$ with the intuition that $q_1$ is the current state
of $\wrcc$, $q_2$ is the target state to reach at the end of the
method, and $p$ is some finite amount of book-keeping information
needed to perform tasks like finding an appropriate method body to
simulate, whether the method will return, etc. Thus, the size of
$\trcc$ will be $O(2^{\text{poly}(|V|)})$.

\begin{theorem}
  \thmlabel{rec-synthesis} The program synthesis problem for
  uninterpreted, coherent, \emph{recursive} programs is decidable in
  $\twoexptime$; in particular the algorithm is doubly exponential in
  the number of program variables and \complexitygrammar~in the size
  of the input grammar. Furthermore, a tree automaton representing the
  set of \emph{all} correct, coherent, recursive programs conforming
  to the grammar can be constructed in the same time.  Finally, the
  problem is also $\twoexptime$-hard.
\end{theorem}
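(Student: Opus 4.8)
The plan is to follow the non-recursive development of \secref{coherent-synth} essentially verbatim, with the finite-state acceptor $\wordaut{\Aa}_\cohcorexec$ replaced by the visibly pushdown automaton $\wrcc$ from \thmref{rec-coherent-verif}: $\wrcc$ has size $O(2^{\text{poly}(|V|)})$, accepts exactly the coherent executions of recursive programs that are correct, and (as in the word-automaton case) has a unique absorbing rejecting state $\reject$, so ``correct and coherent'' is witnessed by never entering $\reject$. From the input grammar $\gram$ I would first build, in time $O(|\gram|)$, a top-down tree automaton $\treeaut{\Aa}_\gram$ accepting the program trees $\trlbl(p)$ for $p \in L(\gram)$ (the parse-tree construction for program trees, adapted to the tree encoding of \appref{recursive-pgm} with the extra spine of $\dblqt{m \poutputs \out{m}}$ nodes). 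The crux is a two-way alternating tree automaton $\trcc$ of size $O(2^{\text{poly}(|V|)})$ accepting precisely the trees of coherent, correct recursive programs. Given $\trcc$, the rest is mechanical: apply \lemref{2way-to-nondet-lemma} to get an equivalent top-down automaton of size $O(2^{2^{\text{poly}(|V|)}})$, intersect with $\treeaut{\Aa}_\gram$ to obtain $\treeaut{\Aa}$ of size $O(2^{2^{\text{poly}(|V|)}}\cdot|\gram|)$, and test emptiness in \complexitygrammar~time in $|\treeaut{\Aa}|$, reading off a witness tree (hence a synthesized program) when nonempty. This yields the $\twoexptime$ upper bound together with the tree-automaton representation of all conforming correct coherent recursive programs.

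For $\trcc$ I would reuse the walk of $\treeaut{\Aa}_\cohcor$: the automaton traverses the tree up and down, keeping in its state the control state of $\wrcc$ along the execution currently being generated, branches conjunctively at $\dblqt{\pwhile}$ and $\dblqt{\pite}$ nodes to cover all branches of all executions, simulates $\wrcc$ on leaves, and rejects if $\wrcc$ ever reaches $\reject$. The genuinely new case is a leaf labeled $\dblqt{\vec{z} \passign m(\fixperm{V})}$. Here I exploit that $\wrcc$ is visibly pushdown: it pushes one symbol on $\dblqt{\pcall~m}$, pops one on $\dblqt{\vec{z} \passign \preturn}$, and never otherwise touches the stack, so the stack is identical at the start and the end of any balanced execution of $m$. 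Thus, with $\wrcc$ in state $q$, $\trcc$ (i) takes the $\dblqt{\pcall~m}$ transition from $q$, say to $q_1$ pushing $\gamma$; (ii) nondeterministically guesses the state $q_2$ that $\wrcc$ reaches at the end of some execution of $m$'s body; (iii) makes a conjunctive transition into two copies --- one continues simulating the code after the call from the state $q'$ obtained by reading $\dblqt{\vec{z} \passign \preturn}$ from $q_2$ and popping $\gamma$, the other walks up to the root and down into the subtree holding the body of $m$, simulating it in a ``from $q_1$ to $q_2$'' mode to confirm the guess. To account for non-terminating calls, $\trcc$ may instead send a single copy obliged to simulate $m$'s body forever and reject if it returns. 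Hence a state of $\trcc$ is a triple $(p, q_1, q_2)$, with $q_1$ the current $\wrcc$-state, $q_2$ the target return state (or a divergence flag), and $p$ a bounded amount of book-keeping (which method body is currently being checked, the incoming-direction discipline, etc.), giving $|\trcc| = O(2^{\text{poly}(|V|)})$. Correctness of $\trcc$ I would prove by exhibiting a correspondence between accepting runs of $\trcc$ on a tree $t$ and families of $\wrcc$-runs witnessing that every partial execution in $\pexec(p)$ of the represented program $p$ is coherent and correct, organizing the induction along the call structure given by the context-free grammar for $\exec(P)$ in \appref{recursive-pgm}.

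The lower bound is immediate. Non-recursive grammar-restricted synthesis is exactly the case $M = \set{m_0}$ with no call statement, and the reduction grammar $\gram_{M,w}$ of \appref{proof-of-lower-bound} uses only polynomially many variables and has polynomial size; so by \thmref{lower-bound} the recursive synthesis problem is $\twoexptime$-hard as well.

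I expect the main obstacle to be making the method-call machinery rigorous --- in particular: (a) soundness of guessing the return state $q_2$, which hinges on the visibly pushdown property forcing the symbol popped on $\dblqt{\vec{z} \passign \preturn}$ to be exactly the $\gamma$ pushed on the matching $\dblqt{\pcall~m}$, independently of the arbitrarily deep computation in between; (b) correctly modeling calls that never return, via the dedicated ``non-returning body'' copy of $\trcc$, so that executions stuck inside a divergent method still have their assertions checked and no spurious acceptance is introduced; and (c) keeping the simulation faithful for \emph{partial} executions (prefixes of complete ones), since program correctness quantifies over $\pexec(p)$ rather than over complete executions. The remaining ingredients --- the grammar-to-tree-automaton construction, the two-way-to-one-way translation (\lemref{2way-to-nondet-lemma}), the intersection, and the emptiness test --- transfer from the non-recursive case and from standard tree-automata results with only notational changes.
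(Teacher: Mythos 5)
Your proposal matches the paper's own development essentially step for step: the same reduction to emptiness of the intersection of the grammar tree automaton with a two-way alternating tree automaton $\trcc$ of size $O(2^{\text{poly}(|V|)})$ that simulates the visibly pushdown acceptor $\wrcc$, the same treatment of a call leaf (take the $\dblqt{\pcall~m}$ transition, guess the return state $q_2$, fork one copy to continue past the call and one to walk to the method body and verify the $q_1$-to-$q_2$ summary, with a separate copy for possibly non-terminating calls), the same $(p,q_1,q_2)$ state space, and the same derivation of the lower bound from the non-recursive case. The construction and complexity analysis are correct and coincide with the paper's proof.
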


The $\twoexptime$ lower bound follows from the non-recursive case
(\secref{lower-bound}).

\end{document}